\documentclass[11pt]{article}

\usepackage[T1]{fontenc}
\usepackage[utf8]{inputenc} 
\usepackage{lmodern} 
\usepackage{microtype}

\usepackage[margin=1in]{geometry}
\usepackage{setspace}
\usepackage{amsfonts,amsmath,amssymb,amsthm,mathtools,bm}
\mathtoolsset{showonlyrefs}
\numberwithin{equation}{section}
\makeatletter
\renewcommand*{\MT@newlabel}[1]{%
  \global\@namedef{MT_r_#1}{}%
  \global\@namedef{MT_r_{#1}}{}%
}
\makeatother

\usepackage{graphicx}
\usepackage{enumitem}
\setlist{nosep}

\usepackage[round]{natbib}

\usepackage[dvipsnames]{xcolor}
\usepackage{hyperref}
\hypersetup{
  colorlinks=true,
  linkcolor=MidnightBlue,
  citecolor=OliveGreen,
  urlcolor=RoyalBlue,
  pdfauthor={},
  pdftitle={}
}

\usepackage[nameinlink,capitalize,noabbrev]{cleveref}

\usepackage[title]{appendix}

\theoremstyle{plain}
\newtheorem{assumption}{Assumption}
\newtheorem{theorem}{Theorem}[section]
\newtheorem{lemma}[theorem]{Lemma}
\newtheorem{proposition}[theorem]{Proposition}

\theoremstyle{definition}

\theoremstyle{remark}

\newcommand{\mc}{\mathcal}
\newcommand{\mbb}{\mathbb}

\newcommand{\ve}{\varepsilon}

\DeclareMathOperator*{\argmin}{arg\,min}

\DeclareMathOperator{\Var}{Var}
\DeclareMathOperator{\Cov}{Cov}

\begin{document}

\title{On prediction-powered inference for quantile regression via convolution smoothing}

\author{
  Shota Takeishi, Jimin Ding and Xuming He\thanks{Xuming He acknowledges support from NSF Grant DMS-2345035.} \\
  Department of Statistics and Data Science, Washington University in St.\ Louis \\
  \texttt{\{shota,jmding,hex\}@wustl.edu}
}

\date{\today} 

\maketitle

\begin{abstract}
This paper studies quantile regression in a data-limited setting where the gold-standard outcome is available only for a limited number of observations, whereas a surrogate outcome is widely available. 
Such settings are becoming increasingly common with the availability of low-cost predictions from modern AI, motivating a growing line of research on ``prediction-powered inference,'' for improved statistical inference.
Naively extending this framework to quantile regression, however, raises two challenges: computational difficulties due to the discontinuity of the subgradient, and overly conservative confidence intervals.
To address these issues, we propose a convolution-based smoothing of the check-loss objective and develop two variants of the estimator.
The proposed estimators are computationally tractable, and our numerical studies show that they mitigate overcoverage.
As a theoretical contribution, we establish the asymptotic distributions of the proposed estimators under a possibly misspecified linear quantile regression model.
We further propose an ensemble of the two estimators and illustrate the proposed methods through simulations and an application to a local housing dataset.
\end{abstract}

\section{Introduction}\label{sec:intro}

Since the seminal work by \cite{KoenkerBassett1978ecta},
quantile regression has been one of the indispensable tools in statistical analysis.
Looking beyond the conditional mean to conditional quantiles,
it captures the heterogeneous relationships between a response and regressors that 
standard linear regression cannot.
Motivated by diverse applications,
quantile regression has been extended along many directions, including
high-dimensional settings \citep{BelloniChernozhukov2011aos},
post-selection inference \citep{WangPanigrahiHe2025aos},
censored outcomes \citep{Portnoy2003jasa}, and 
longitudinal data \citep{Koenker2004jmva}.

This paper develops new quantile regression methods 
for another scientifically relevant setting—one with limited availability of gold‑standard data.
We focus on situations where the outcomes of interest are costly or time-consuming to obtain and therefore scarce,
while their predictions or surrogates are abundant.
Our illustrative example uses housing data from selected municipalities in the St.~Louis~area in the United States, and examines how school-district quality is associated with different conditional quantiles of  single-family home prices.  Actual housing prices provide the gold-standard outcome, but they are observed only for homes sold during the study period. In contrast, appraisal values are available for a much larger set of properties, including off-market homes, and contain useful information about market values.
More broadly, such settings are becoming increasingly common across scientific domains as 
powerful AI tools can often produce 
accurate predictions at low cost.

To boost the efficiency of statistical inference in such data-limited settings,
the literature has developed various methods for leveraging a large volume of predicted outcomes.
One of the earliest contributions by \cite{RobinsRotnitzkyZhao1994jasa} proposes augmenting the original
estimating equation to obtain semiparametric efficient estimators.
Despite its theoretical appeal, the augmentation term requires either correct parametric specification or
sufficiently accurate nonparametric nuisance estimation to achieve the desired efficiency, which may be difficult to verify or satisfy in practice.

A line of recent papers has focused on
more pragmatic approaches, here called ``prediction-powered inference'' \citep{Angelopoulos_etal2023science}, which require few modeling assumptions or nonparametric nuisance function estimation,
yet still can deliver some efficiency gain.
Depending on the stage at which augmentation for efficiency is applied,
the prediction-powered inference can be divided into two subtypes: score-debiasing (SD) approach and 
predict-then-debias (PTD) approach \citep{KlugerLuZrnicWangBates2025arxiv}.
The former approach modifies the loss function (or the corresponding score equation) based on gold-standard data by
adding an augmentation term built from losses or scores that use predicted outcomes.
Starting with \cite{Angelopoulos_etal2023science}, multiple variants have been proposed
\citep[e.g.,][]{AngelopoulosDuchiZrnic2023arxiv, MiaoMiaoWuZhaoLu2025jmlr, GanLiangZou2024anzjs,JiLeiZrnic2025arxiv}.
In contrast, the PTD approach first estimates the parameter using only gold-standard data.
It then augments this ``naive'' estimator with a correction term based on estimators that use predicted outcomes.
Pioneered by \cite{ChenChen2000jrssb}, this approach has attracted renewed interest in the recent literature
\citep[see, e.g.,][]{KlugerLuZrnicWangBates2025arxiv, Gronsbell_etal2024arxiv, McCaw_etal2024nature_genetics,
Miao_etal2024nature_genetics}.

In principle, both SD and PTD
approaches can be applied to quantile regression; however, this extension poses some practical challenges.
First, in most existing SD approaches, naively augmenting the check loss can yield a nonconvex, nondifferentiable
objective, while augmenting the score (sub-gradient of the check loss) leads to discontinuous equation; both are difficult to optimize or solve with guaranteed accuracy,
especially when the covariate dimension is not low. This difficulty is further exacerbated when using the bootstrap, a standard choice for quantile regression inference.
By contrast, the PTD approach is straightforward to implement for quantile regression using off-the-shelf software: it only requires
fitting linear quantile regression separately on the gold-standard data and on the data with predicted outcomes. Nevertheless, 
\cite{KlugerLuZrnicWangBates2025arxiv} report overcoverage of its confidence intervals in their numerical study,
which we also observe in our simulations (Section \ref{sec:simulation}).

To address these practical limitations, we develop smoothed estimators for both SD and PTD
quantile regression.
Inspired by \cite{FernandesGuerreHorta2021jbes} and \cite{HePanTanZhou2023joe},
we approximate the nondifferentiable check loss with a smooth, convex surrogate via convolution;
accordingly, the resulting subgradient becomes smooth as well.
Building on this approximation, we propose the score-debiasing convolution smoothed estimator (SD-CSE)  as the solution to the augmented smoothed subgradient condition, and the predict-then-debias convolution smoothed estimator (PTD-CSE), which augments the smoothed quantile regression estimator based on gold-standard data with a correction term.
For both estimators, we choose the augmentation terms so that they are more efficient than
the gold-standard-only estimators when predictions are reasonably accurate, yet at least as efficient
even when predictions are poor.
With smoothing, our SD-CSE alleviates the computational bottleneck noted above for the check-loss-based SD estimator.
Furthermore, through simulation studies, we find that our PTD-CSE mitigates the overcoverage observed with the check-loss-based PTD estimator.
We also propose an ensemble of these two estimators to further improve efficiency.

On the theoretical side, we derive the asymptotic distributions of the SD-CSE and PTD-CSE, constituting a nontrivial extension of the literature on prediction-powered inference and on quantile regression.
The smoothed estimating equation for the SD-CSE depends on a smoothing parameter that varies with sample size,
whereas existing proof strategies for prediction-powered inference \citep[e.g.][]{MiaoMiaoWuZhaoLu2025jmlr}
assume that the equation’s form does not change with sample size.
This necessitates a tailored argument in which the convergence rate of the smoothing parameter plays a key role.
Although \cite{FernandesGuerreHorta2021jbes} and \cite{HePanTanZhou2023joe}
conduct extensive asymptotic analyses of smoothed quantile regression estimators,
their arguments cannot be directly imported to our setting for two reasons.
First, our SD estimator is not defined as the minimizer of a convex loss, as in the works above;
for this reason, we develop a new argument to establish its convergence rate, a critical step toward proving asymptotic normality.
Second, for both the SD and PTD estimators, we allow for misspecification of the linear quantile regression model, which prior works do not consider.
This misspecification requires a new argument to assess the smoothing bias.

The remainder of the paper is organized as follows.
Section \ref{sec:method} introduces the problem setup and defines the SD-CSE and PTD-CSE, along with their ensemble.
Section \ref{sec:theory} establishes the asymptotic distributions of the SD-CSE and PTD-CSE after introducing the required assumptions.
Section \ref{sec:simulation} evaluates finite-sample performance via simulations, and Section \ref{sec:data_analysis} presents a real-world data analysis.
All technical details are provided in the appendix.

\section{Model and estimators}\label{sec:method}
For a fixed quantile level $\tau \in (0, 1)$, we consider inference on a finite-dimensional parameter $\beta_0(\tau) \in \mbb R^p$,
defined as the solution to the population minimization problem for a random response $Y \in \mbb R$ and covariates $X \in \mbb R^p$:
\begin{equation}
    \beta_0(\tau) := \argmin_{\beta \in \mbb R^p} \mbb E[\rho_{\tau} (Y - X^{\top}\beta) - \rho_{\tau}(Y)],
\end{equation}
where $\rho_{\tau}(u) := u \{ \tau - \mbb I(u < 0) \}$ is the check loss. The centering term $-\rho_{\tau}(Y)$ is included only to ensure that the the expectation is well-defined, without imposing finite-moment condition on $Y$. See the discussion following Assumption~\ref{as:identification} for detail.
If the conditional $\tau$-th quantile of $Y$ given $X$, denoted by $Q_{\tau}(Y \mid X)$, satisfies the linear quantile regression specification $Q_{\tau}(Y \mid X) = X^{\top}\beta_*(\tau)$,
then $\beta_0(\tau)$ coincides with $\beta_* (\tau)$.
However, we allow for potential misspecification $Q_{\tau}(Y|X) \neq X^{\top}\beta_{0}(\tau)$; see \cite{AngristChernozhukovFernandezval2006ecta} for an interpretation of $\beta_0(\tau)$ under misspecification of the linear quantile regression model.

In order to make inference about $\beta_0(\tau)$, suppose we have access to two independent random samples:
 $\{(Y_i, \hat Y_i, X_i)\}_{i \in \mathcal{I}_{\text{lab}}}$,
and $\{ (\hat Y_i, X_i) \}_{i \in \mathcal{I}_{\text{unlab}}}$, where each $(Y_i, X_i)$ is a copy of $(Y, X)$.
$\mathcal{I}_{\text{lab}} := \{1, \dots, n \}$ and $\mathcal{I}_{\text{unlab}} := \{n+1, \dots, n+N \}$
denote index sets for the labeled and unlabeled samples, respectively.
The labeled (gold-standard) dataset $\{(Y_i, \hat Y_i, X_i)\}_{i \in \mathcal{I}_{\text{lab}}}$ consists of relatively small size $n$, in which $Y_i$ is observed.
In contrast, the unlabeled dataset $\{ (\hat Y_i, X_i) \}_{i \in \mathcal{I}_{\text{unlab}}}$ is of a larger size $N$, in which $Y_i$ is missing.
Across both datasets, $X_i$ and $\hat Y_i$ are always available, where $\hat Y_i$ is a surrogate for (or prediction of) $Y_i$.
In recent works \citep[e.g.,][]{Angelopoulos_etal2023science,MiaoMiaoWuZhaoLu2025jmlr}, $\hat Y_i$ is often treated as the output from a pre-trained prediction model.
Namely, $\hat Y_i = \hat f (X_i, Z_i)$, where $Z_i$
denotes auxiliary features–often unstructured or high-dimensional (e.g., text or image data), and 
$\hat f(\cdot)$ is a predictive model trained on a separate dataset and thus assumed to be fixed when analyzing the current labeled and unlabeled datasets.
However, our method and theory are agnostic to how $\hat Y_i$ is generated beyond the stated conditions.

While $\beta_0 (\tau)$ can be consistently estimated using only $\{ (Y_i, X_i) \}_{i\in \mathcal I_{\text{lab}}}$
via standard quantile regression \citep{Koenker2005},
we aim for more efficient estimation by leveraging the availability of $\hat Y_i$ for $i \in \mathcal I_{\text{unlab}}$.
To achieve this goal, we introduce the following two estimators.

\subsection{Score-debiasing convolution smoothed estimator (SD-CSE)}\label{sec:method:SD}
The score-debiasing (SD) estimator begins with the score equation for linear quantile regression in $\beta \in \mbb R^{p}$:
\begin{equation}
    \frac{1}{n} \sum_{i = 1}^n \psi (\beta; Y_i, X_i) = 0, \label{equation_unsmooth}
\end{equation}
where $\psi (\beta; y, x) := (\mbb I(y - x^{\top}\beta < 0) - \tau) x$.
Note that this equation follows from the population sub-gradient condition $\mbb E[(\mbb I(Y - X^{\top} \beta_0(\tau) < 0) - \tau) X ] = 0$.
To improve the inference from the limited labeled data, we incorporate $\hat Y_i$ by considering the augmented score equation \citep[see, e.g.,][]{MiaoMiaoWuZhaoLu2025jmlr}:
\begin{equation}\label{PPI_equation_unsmooth}
    \frac{1}{n} \sum_{i = 1}^n \psi (\beta; Y_i, X_i) - \hat W \left( \frac{1}{n} \sum_{i = 1}^n \psi(\beta; \hat Y_i, X_i) - \frac{1}{N} \sum_{i = n+1}^{n+N} \psi (\beta; \hat Y_i, X_i) \right) = 0,
\end{equation}
where the data-dependent weight matrix $\hat W$ controls the efficiency of the estimator.
To build intuition for the augmentation term,
observe that, when the predictions are perfect, $\hat Y_i = Y_i$ and thus $\psi (\beta; \hat Y_i, X_i) = \psi (\beta; Y_i, X_i)$, setting 
$\hat W$ to the identity reduces \eqref{PPI_equation_unsmooth} to $\frac{1}{N} \sum_{i = n+1}^{n+N} \psi (\beta; Y_i, X_i) = 0$.
Because the estimating equation now uses the larger sample size $N$, inference on $\beta_0$ becomes more efficient.
In contrast, even if $\hat Y_i$ is a poor predictor of $Y_i$, inference based on 
\eqref{PPI_equation_unsmooth} can be at least as efficient as \eqref{equation_unsmooth} by setting $\hat W$ to zero. 
It is also worth noting that the resulting estimator is asymptotically unbiased for any choice of $\hat{W}$, provided that $\hat{W}$ is bounded in probability, because two mean terms in the augmentation term asymptotically offset each other.
We will discuss how to choose $\hat W$ 
to optimize inference in Section~\ref{sec:theory:sd}.

From a computational standpoint, the discontinuity of the indicator function makes solving \eqref{PPI_equation_unsmooth} for $\beta$ quite challenging, especially when $p$ is not small.
Inspired by \cite{FernandesGuerreHorta2021jbes} and \cite{HePanTanZhou2023joe}, we sidestep this bottleneck by considering the following ``smoothed'' score:
\begin{equation}
    \hat \Psi_{h_n, \hat W} (\beta) 
    := \frac{1}{n} \sum_{i = 1}^n \psi_{h_n} (\beta; Y_i, X_i)
    - \hat W \left( 
    \frac{1}{n} \sum_{i = 1}^n \psi_{h_n} (\beta; \hat Y_i, X_i) -
    \frac{1}{N} \sum_{i = n+1}^{n+N} \psi_{h_n} (\beta; 
    \hat Y_i, X_i)
    \right),
\end{equation}
where $\psi_{h_n} (\beta; y, x) := (\mc K_{h_n} (x^{\top}\beta - y) - \tau) x$ and
$\mc K_{h_n} (s) := \mc K(s/{h}_n)$ smooths the indicator function in $\psi$. Specifically, $\mc K(s) := \int^s_{-\infty} K(t)dt$, where $K(\cdot)$ is a kernel and $h_n$ is a user-specified bandwidth that converges to zero as $n \rightarrow \infty$.
We choose the kernel $K (s)$ to be a probability density function so that $\mc K_{h_n} (s)$ converges pointwise to $\mbb I(0 < s)$ as $h_n \to 0$ for $s \neq 0$.
Now we define the score-debiasing convolution smoothed estimator(SD-CSE)
$\hat \beta_{\text{SD}, \hat W}$ as any root of the smoothed score equation:
\begin{equation}
    \hat \Psi_{h_n, \hat W} (\hat \beta_{\text{SD}, \hat W}) = 0. \label{SD_equation_smooth}
\end{equation}
By Lemma \ref{lem:root}, such a $\hat \beta_{\text{SD}, \hat W}$ exists with probability approaching one.
In practice, this continuous equation can be solved using the off-the-shelf software.
We also note that the standard quantile regression estimate based only on the labeled dataset
serves as a good initial value for solving the equation.

The choice of $\hat W$ is practically important because it determines the efficiency of the estimator.
In Section \ref{sec:theory:sd}, we show that $\sqrt{n} (\hat \beta_{\text{SD}, \hat W} -\beta_0(\tau))$ is asymptotically normal (Proposition \ref{prop:asy_normal_LD}),
with covariance depending on the 
probability 
limit of $\hat W$. To optimize efficiency gain, we also derive a closed-form expression for the optimal weight $W^*$ that minimizes this covariance (Proposition \ref{prop:opt_W}).
Accordingly, we recommend using a plug-in estimate of $W^*$ in practice.
For details on the estimation of $W^*$, see the discussion following Proposition \ref{prop:opt_W}.

Regarding the construction of confidence intervals, 
the plug-in estimator of the asymptotic variance can be unstable, as detailed in Section \ref{sec:theory:sd}, because it requires conditional density estimation.
We therefore develop a multiplier bootstrap procedure, inspired by \cite{PanZhou2021ii} and \cite{HePanTanZhou2023joe}. 
Namely, for each $b = 1, \dots, B$, where $B$ denotes the number of bootstrap replications, we carry out the following three steps.
First, we generate a random sample of non-negative weights $\{ w_{b, i} \}_{i = 1}^{n+N}$ satisfying $\mathbb E[w_{b, i}] = 1$
and $\Var (w_{b, i}) = 1$.
Examples of such weights include $1 + \xi$, where $\xi$ is a Rademacher random variable, and an exponential random variable with rate parameter one.
Second, we construct the bootstrapped score
\begin{equation}
    \hat \Psi_{h_n, \hat W}^{(b)} (\beta) 
    := \frac{1}{n} \sum_{i = 1}^n w_{b, i} \psi_{h_n} (\beta; Y_i, X_i)
    - \hat W \left( 
    \frac{1}{n} \sum_{i = 1}^n w_{b, i} \psi_{h_n} (\beta; \hat Y_i, X_i) -
    \frac{1}{N} \sum_{i = n+1}^{n+N} w_{b, i} \psi_{h_n} (\beta; 
    \hat Y_i, X_i)
    \right).
\end{equation}
Third, we obtain the bootstrapped estimator $\hat \beta_{\text{SD}, \hat W}^{(b)}$ as a root of the equation $\hat \Psi_{h_n, \hat W}^{(b)} (\beta) = 0$.
After $B$ replications,
we construct a bootstrap percentile confidence interval using the empirical percentiles of $\{ \hat \beta_{\text{SD}, \hat W}^{(b)} \}_{b = 1}^B$.

\subsection{Predict-then-debias convolution smoothed estimator (PTD-CSE)}\label{sec:method:ptd}
Another way to leverage the large amount of $\hat Y$ is the predict-then-debias (PTD) estimator.
Rather than correcting the estimating equation, this approach adds a correction term directly to the point estimate
obtained from the labeled data $\{(Y_i, X_i) \}_{i = 1}^n$. 
Namely, we first compute the convolution-smoothed quantile regression estimators
\citep[][]{FernandesGuerreHorta2021jbes,HePanTanZhou2023joe} as 
\begin{gather}
    \hat \beta_{\text{lab}} := \arg \min_{\beta \in \mathbb R^p} \sum_{i = 1}^n 
    \ell_{h_n} (\beta; Y_i, X_i), \quad
    \hat \gamma_{\text{lab}} := \arg \min_{\gamma \in \mathbb R^p}
    \sum_{i = 1}^n \ell_{h_n} (\gamma; \hat Y_i, X_i), \\
    \hat \gamma_{\text{unlab}} := \arg \min_{\gamma \in \mathbb R^p}
    \sum_{i = n+1}^{n+N} \ell_{h_n} (\gamma; \hat Y_i, X_i),
\end{gather}
where the convolution-smoothed check loss is
$\ell_{h_n} (b; y, x) := \int^{\infty}_{-\infty} \rho_{\tau} (v) K_{h_n} (v - y - x^{\top}\beta) dv$.
This loss is continuously differentiable and convex so that we 
can rely on the off-the-shelf software for the computation.
See \cite{HePanTanZhou2023joe} for detail.
We then construct the predict-then-debias convolution smoothed estimator (PTD-CSE) as 
\begin{equation}
    \hat \beta_{\text{PTD}, \tilde W} := \hat \beta_{\text{lab}} - \tilde W (\hat \gamma_{\text{lab}} - \hat \gamma_{\text{unlab}}),
\end{equation}
where the data-dependent weight matrix $\tilde W$ controls the efficiency, as in the SD-CSE.
A similar intuition applies to the PTD-CSE.
When prediction is perfect, $\hat Y_i = Y_i$ and $\hat \beta_{\text{lab}} = \hat \gamma_{\text{lab}}$,
so setting $\tilde W$ to the identity yields $\hat \beta_{\text{PTD}, \tilde W} = \hat \gamma_{\text{unlab}}$,
which uses the larger sample of size $N$ than $\hat \beta_{\text{lab}}$.
Meanwhile, we can always make inference at least as efficient as that based on $\hat \beta_{\text{lab}}$
by setting $\tilde W = 0$. 

Similar to the SD-CSE, $\sqrt{n} (\hat \beta_{\text{PTD}, \tilde W}-\beta_0(\tau))$ is asymptotically normal with a covariance matrix 
that depends on the limit of $\tilde W$ (Proposition \ref{prop:asy_normal_PTD}).
Hence, in practice, we recommend choosing $\tilde W$ to
optimize this asymptotic covariance matrix;
see Section \ref{sec:theory:ptd} for details.

For the same reason as for the SD-CSE, we implement a multiplier bootstrap procedure to construct confidence intervals for the PTD-CSE.
The implementation detail is analogous 
to that of the SD-CSE.
Namely, for each $b = 1, \dots, B$,
we carry out the following steps.
First, we generate a random sample of non-negative weights $\{ w_{b, i} \}_{i = 1}^{n+N}$ satisfying the same moment conditions as in the bootstrap for the SD-CSE.
We then compute the bootstrapped convolution-smoothed quantile regression estimators
\begin{gather}
\hat \beta^{(b)}_{\text{lab}} := \arg \min_{\beta \in \mathbb R^p} \sum_{i = 1}^n 
    w_{b, i} \ell_{h_n} (\beta; Y_i, X_i),\quad
\hat \gamma_{\text{lab}}^{(b)} := \arg \min_{\gamma \in \mathbb R^p}
    \sum_{i = 1}^n w_{b, i} \ell_{h_n} (\gamma; \hat Y_i, X_i), \\
\hat \gamma_{\text{unlab}}^{(b)} := \arg \min_{\gamma \in \mathbb R^p}
    \sum_{i = n+1}^{n+N} w_{b, i} \ell_{h_n} (\gamma; \hat Y_i, X_i),
\end{gather}
and obtain the bootstrapped PTD-CSE $\hat \beta_{\text{PTD}, \tilde W}^{(b)} := \hat \beta^{(b)}_{\text{lab}} - \tilde W (\hat \gamma_{\text{lab}}^{(b)} - \hat \gamma_{\text{unlab}}^{(b)}).$
Finally, after $B$ replications,
we obtain a bootstrap percentile confidence interval using the empirical percentiles of
$\{ \hat \beta_{\text{PTD}, \tilde W}^{(b)} \}_{b = 1}^B$.

As illustrated in the simulation study in Section~\ref{sec:simulation}, the check-loss-based PTD estimator exhibits overcoverage, whereas the PTD-CSE does not.
One possible explanation is that the nonsmoothness of the check-loss objective may adversely affect the finite-sample accuracy of the bootstrap approximation, a phenomenon related to earlier finding for $L_1$-type regression estimators; see, for example, \cite{AngelisHallYoung1993jasa}.
A detailed theoretical investigation of this issue is beyond the scope of this research.

\subsection{Ensemble}\label{sec:method:ensemble}
Choosing between the SD and PTD estimators is not easy.
Section 2.4 of \cite{KlugerLuZrnicWangBates2025arxiv} compares the asymptotic variances of the two estimators in the case $p=1$.
They conclude that neither the SD nor the PTD estimator is generally superior to the other, a finding that applies to the SD-CSE and PTD-CSE considered in the present study.
We therefore propose a heuristic approach based on a componentwise ensemble of these two estimators.
Let $\hat \beta_{\text{SD}, \hat W, (j)}$ and $\hat \beta_{\text{PTD}, \tilde W, (j)}$ denote the $j$-th components of 
$\hat \beta_{\text{SD}, \hat W}$
and $\hat \beta_{\text{PTD}, \tilde W}$,
respectively.
We then consider a linear combination
$c_j \hat \beta_{\text{SD}, \hat W, (j)} + (1 - c_j) \hat \beta_{\text{PTD}, \tilde W, (j)}$
for a constant $c_j$.
By a straightforward calculation, $c_j$ that minimizes the variance of the aforementioned combination is given as
\[
    c^*_j := \frac{\Var \left(\hat \beta_{\text{PTD}, \tilde W, (j)}\right)
    - \Cov \left(\hat \beta_{\text{SD}, \hat W, (j)}, \hat \beta_{\text{PTD}, \tilde W, (j)}\right)}{\Var \left( \hat \beta_{\text{SD}, \hat W, (j)} -
    \hat \beta_{\text{PTD}, \tilde W, (j)} \right)}.
\]
In practice, this $c_j^{\ast}$ is infeasible, so we estimate it via the bootstrap.
Namely, let $\{ (\hat \beta_{\text{SD}, \hat W}^{(b)}, \hat \beta_{\text{PTD},
\tilde W}^{(b)}) \}_{b = 1}^B$
be the bootstrap sample obtained as described in Sections \ref{sec:method:SD} and \ref{sec:method:ptd}.
Then we obtain the plug-in estimator $\hat c_j$ of $c_j^*$ by replacing the variances and covariance in $c_j^*$ with their bootstrap counterparts.
Consequently, we define the ensemble estimators as $\hat{\beta}_{\text{ENS}, (j)} := \hat c_j \hat \beta_{\text{SD}, \hat W, (j)} + (1 - \hat c_j) \hat \beta_{\text{PTD}, \tilde W, (j)}$
for $j = 1, \dots, p$.
This estimator is designed to approximate its oracle counterpart, $c_j^* \hat{\beta}_{\text{SD}, \hat{W}, (j)} + (1 - c_j^*)\hat{\beta}_{\text{PTD}, \tilde{W}, (j)}$, whose variance is no larger than those of $\hat{\beta}_{\text{SD}, \hat{W}, (j)}$ and $\hat{\beta}_{\text{PTD}, \tilde{W}, (j)}$, by its construction.
As discussed at the end of Section~\ref{sec:theory}, we do not establish the asymptotic properties of this ensemble estimator.
Nevertheless, it exhibits favorable finite-sample performance in the simulation study and data application in Sections~\ref{sec:simulation} and~\ref{sec:data_analysis}.

\section{Asymptotic properties}\label{sec:theory}

This section establishes the asymptotic properties of the SD-CSE and PTD-CSE.
For each estimator, we first discuss the required assumptions and then derive the asymptotic normality.
Throughout this section, we fix a quantile level $\tau \in (0, 1)$. Accordingly, we suppress the dependency of $\beta_0(\tau)$ on $\tau$ and just write $\beta_0$.
Let $K_{h_n}(s) := {h_n}^{-1}K(s/{h_n})$ denote the derivative of $\mc K_{h_n}(s)$. All limits below are taken with respect to both $n \rightarrow \infty$ and $N \rightarrow \infty,$ under the relative rate conditions stated below.

\subsection{SD-CSE: asymptotic theory}\label{sec:theory:sd}
To establish the asymptotic properties of the SD-CSE, we impose the following assumptions throughout this subsection.

\begin{assumption}\label{as:identification}
    The objective function $\beta \mapsto \mbb E[\rho_{\tau} (Y - X^{\top}\beta) - \rho_{\tau}(Y)]$
    has a unique minimizer $\beta_0$ over $\mbb R^p$.
\end{assumption}

\begin{assumption}\label{as:density}
    The conditional distribution of $Y$ given $X$ has a density function $f_{Y|X}$ with 
    respect to the Lebesgue measure that satisfies the following properties.
    $(a)$ There exists a finite constant $f_{Y, \sup}$ such that $\sup_{u \in \mbb R} f_{Y|X} (u) \leq f_{Y, \sup}$
    almost surely; 
    and
    $(b)$ there exists a finite constant $l_{Y}$ such that $|f_{Y|X} (u) - f_{Y|X}(v)| \leq l_{Y} |u - v|$ for all $u, v \in \mbb R$ almost surely.
\end{assumption}

\begin{assumption}\label{as:moment}
    $\mbb E[\| X \|^4]$ is finite. 
    Furthermore, $\mbb E[f_{Y | X} (X^{\top}\beta_0) X X^{\top}]$ is positive definite.
\end{assumption}

\begin{assumption}\label{as:kernel}
    $K(\cdot)$ is bounded and continuous, and satisfies:
    $(a)$ $K(u) = K(-u)$ and $K(u) \geq 0$ for all $u \in \mbb R$,
    $(b)$ $\int^{\infty}_{-\infty} K(u) du = 1$,
    $(c)$ $\kappa_j := \int^{\infty}_{-\infty} |u|^j K(u)du$ is finite for $j = 1, 2$, and
    $(d)$ $u^3 K(u)$ has finite limits as $|u| \to \infty$.
\end{assumption}

\begin{assumption}\label{as:h}
    $h_n \to 0$ and $n^{1/2} h_n \to \infty$; moreover, $h_n = o(n^{-1/4})$.
\end{assumption}

\begin{assumption}\label{as:y_hat}
    $\mathbb{P}(\hat{Y} \neq X^{\top}\beta_0) = 1$.
\end{assumption}

Assumption \ref{as:identification} is the standard identification condition for a possibly misspecified linear quantile regression model.
A closely related centered objective function is used in Equation~(3) of \cite{AngristChernozhukovFernandezval2006ecta}.
By Knight's identity (see, e.g., the proof of Theorem~1 of \cite{knight1998aos}),
\[
  \rho_{\tau}(Y-X^{\top}\beta) - \rho_{\tau}(Y) = -X^{\top}\beta (\tau - \mathbb{I}(Y < 0)) 
  + \int^{X^{\top}\beta}_0 (\mathbb{I}(Y < s) - \mathbb{I}(Y < 0)) ds.
\]
Thus, under Assumption~\ref{as:moment}, the expectation of this centered objective is finite without imposing $\mathbb{E}[|Y|] < \infty$.
The density and the moment conditions in Assumptions \ref{as:density} and \ref{as:moment} 
are stronger than the standard conditions used for check-loss-based quantile regression 
\citep[see e.g., Theorem 3 of][]{AngristChernozhukovFernandezval2006ecta}.
However, they are no more restrictive than those in existing works on convolution-smoothed quantile regression.
Specifically, Assumptions Q and A of \cite{FernandesGuerreHorta2021jbes} assume a smooth, positive conditional density and bounded covariates, 
while Theorem 4.2 of \cite{HePanTanZhou2023joe} requires a Lipschitz, bounded conditional density and sub-Gaussian design.
Assumption \ref{as:kernel} is standard in the convolution-smoothed quantile regression literature. 
Parts $(a)$--$(c)$ coincide with Condition 4.1 of \cite{HePanTanZhou2023joe} and are similar to the basic kernel requirements in Assumption K of \cite{FernandesGuerreHorta2021jbes}.
We add part $(d)$ to control tail behavior in the misspecified setting. 
Common choices such as Gaussian and uniform kernels satisfy these conditions. 
The bandwidth rate in Assumption \ref{as:h} 
is fast enough for the smoothing bias to vanish
asymptotically, while slow enough that the estimator 
converges faster than $h_n$;
it is comparable to the assumption in Theorem 4.3 of \cite{HePanTanZhou2023joe}.
Assumption \ref{as:y_hat} is introduced for the smoothing bias arising from approximating $\mathbb{I}(\hat{Y} - X^{\top}\beta_0 < 0)$ by $\mathcal{K}_{h_n} (X^{\top}\beta_0 - \hat{Y})$ in the augmentation term to vanish asymptotically in the proof of Proposition~\ref{prop:asy_normal_LD}.
Indeed, without this assumption, $\lim_{h_n \to 0} \mathcal{K}_{h_n}(X^{\top}\beta_0 - \hat{Y}) \neq \mathbb{I}(\hat{Y} - X^{\top}\beta_0 < 0)$ may hold with positive probability.
This assumption holds if the conditional distribution of $\hat Y$ given $X$ is continuous,
which is plausible when $\hat Y$ is produced by a pre-trained model $\hat f$ that uses auxiliary continuous features $Z$ in addition to $X$.
Even if the conditional distribution is not continuous, 
the assumption still holds except in the rare case where $X^{\top}\beta_0$ predicts $\hat Y$ 
perfectly with positive probability.

As a preliminary step toward the asymptotic normality, we first establish the following result on the consistency and the convergence speed of $\hat \beta_{\text{SD}, \hat W}$.
\begin{proposition}\label{prop:rate_PPI}
    Assume Assumption \ref{as:identification}, \ref{as:density}, \ref{as:moment}, \ref{as:kernel}, and \ref{as:h}. 
    If $\hat W = O_p(1)$ and $n/N = O(1)$, then
    $\hat \beta_{\text{SD}, \hat W} - \beta_0 = o_p(h_n)$.
\end{proposition}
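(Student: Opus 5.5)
The proof treats $\hat\beta_{\text{SD},\hat W}$ as a root of a random equation that is uniformly close to a smooth population equation with nonsingular Jacobian at $\beta_0$, and then inverts that equation locally. Write
\[
\hat\Psi_{h_n,\hat W}(\beta)=\hat\Psi_{\text{lab}}(\beta)-\hat W\{\hat\Phi_{\text{lab}}(\beta)-\hat\Phi_{\text{unlab}}(\beta)\},
\]
where $\hat\Psi_{\text{lab}}(\beta)=n^{-1}\sum_{i\le n}\psi_{h_n}(\beta;Y_i,X_i)$ and $\hat\Phi_{\text{lab}},\hat\Phi_{\text{unlab}}$ are the corresponding $\hat Y$-averages. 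Let $\Psi_h(\beta):=\mbb E[\psi_h(\beta;Y,X)]$ and $\Psi(\beta):=\mbb E[(F_{Y|X}(X^\top\beta)-\tau)X]$.

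\textbf{Step 1 (augmentation term is negligible).} The augmentation term has mean zero at every $\beta$, because $\hat\Phi_{\text{lab}}$ and $\hat\Phi_{\text{unlab}}$ share the expectation $\Phi_h(\beta)=\mbb E[\psi_h(\beta;\hat Y,X)]$. I will show
\[
\sup_{\|\beta-\beta_0\|\le r}\|\hat\Phi_{\text{lab}}(\beta)-\hat\Phi_{\text{unlab}}(\beta)\|=O_p(n^{-1/2})
\]
for a fixed $r$. The function class $\{(\mc K(( x^\top\beta-y)/h)-\tau)x\}$ is uniformly bounded by $\|x\|$, since $0\le\mc K\le1$. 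It is a VC-type class: $\mc K$ is monotone, and $x^\top\beta/h-y/h$ ranges over a finite-dimensional linear family, so the uniform entropy bound holds uniformly in $h$. A maximal inequality then gives the rate, with $n/N=O(1)$ taking care of the unlabeled part. Since $\hat W=O_p(1)$, the whole augmentation term is $O_p(n^{-1/2})$ uniformly. The same empirical-process bound gives $\sup_\beta\|\hat\Psi_{\text{lab}}(\beta)-\Psi_{h_n}(\beta)\|=O_p(n^{-1/2})$ on compacts.

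\textbf{Step 2 (smoothing bias and local invertibility).} Integrating by parts gives
\[
\Psi_h(\beta)=\mbb E\Big[X\Big(\int F_{Y|X}(X^\top\beta-hu)K(u)\,du-\tau\Big)\Big].
\]
Because $K$ is symmetric and $f_{Y|X}$ is Lipschitz (Assumptions \ref{as:kernel}(a),(c) and \ref{as:density}(b)), a second-order expansion yields $\sup_\beta\|\Psi_h(\beta)-\Psi(\beta)\|\le C h^2\mbb E\|X\|=O(h^2)$. Here $\Psi(\beta_0)=0$ is the first-order condition from Assumption \ref{as:identification}. This is a plain Taylor expansion in $F_{Y|X}$, so misspecification does not interfere.

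The Jacobian $\nabla\Psi_h(\beta)=\mbb E[XX^\top\int f_{Y|X}(X^\top\beta-hu)K(u)\,du]$ is Lipschitz in $\beta$ and within $O(h)$ of $D:=\mbb E[f_{Y|X}(X^\top\beta_0)XX^\top]$ near $\beta_0$, using $\mbb E\|X\|^3<\infty$. By Assumption \ref{as:moment}, $D$ is positive definite. Hence on a ball $B_\delta(\beta_0)$ we get
\[
\|\Psi_{h_n}(\beta)\|\ge \lambda_{\min}(D)\|\beta-\beta_0\|/2-O(h_n^2)
\]
for small $\delta$ and large $n$.

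\textbf{Step 3 (rate, and the main obstacle).} For any root $\hat\beta$ lying in $B_\delta(\beta_0)$, Steps 1--2 give $\|\hat\beta-\beta_0\|\lesssim O_p(n^{-1/2})+O(h_n^2)$. Assumption \ref{as:h} gives $n^{-1/2}=o(h_n)$ and $h_n^2=o(h_n)$, so $\hat\beta-\beta_0=o_p(h_n)$.

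The real obstacle is consistency, that is, excluding roots outside $B_\delta$. The score is not the gradient of a convex objective once $\hat W\neq0$, so the usual convexity argument from the smoothed quantile regression literature does not apply. I would handle this in two parts.
\begin{itemize}
\item \emph{Far roots.} Uniqueness in Assumption \ref{as:identification} together with convexity of the population objective implies $\inf_{\|\beta-\beta_0\|\ge\delta}\|\Psi(\beta)\|>0$ on compacts. I would extend the uniform bounds of Step 1 to all of $\mbb R^p$: the class is bounded by $\|x\|$ globally, so the VC argument gives uniform $o_p(1)$ over $\mbb R^p$.
\item \emph{Escape to infinity.} Along rays $\beta=\beta_0+t v$, the directional quantity $v^\top\Psi(\beta_0+tv)$ is nondecreasing in $t$ and bounded below by a positive constant for $t\ge\delta$.
\end{itemize}
Combining the two, $\inf_{\|\beta-\beta_0\|\ge\delta}\|\hat\Psi_{h_n,\hat W}(\beta)\|>0$ with probability approaching one. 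Consequently every root lies in $B_\delta$, and the existence result (Lemma \ref{lem:root}) guarantees that such a root exists.
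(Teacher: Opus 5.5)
Your proposal is correct and follows essentially the same route as the paper. Consistency comes from uniform convergence of $\hat\Psi_{h_n,\hat W}$ over all of $\mathbb{R}^p$ together with the convexity-along-rays separation argument (the paper's Lemma~\ref{lem:convex}). The rate then comes from a local linearization of the smoothed population score with Jacobian close to $H$, an $O_p(n^{-1/2})$ empirical-process term, and an $O(h_n^2)$ smoothing bias.

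The one minor difference is how you bound the smoothing bias. You bound it uniformly in $\beta$ by convolving $F_{Y|X}$ with $K$, which needs only $\kappa_2<\infty$. The paper's Lemma~\ref{lem:rate_grad_smooth} treats $\beta_0$ alone, using an integration by parts that invokes the tail condition in Assumption~\ref{as:kernel}(d).
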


The $o_p(h_n)$ rate is crucial for our asymptotic normality proof, as it makes the remainder term for the
asymptotic linear expansion negligible.
A similar rate result appears in Theorem 4.1 of \cite{HePanTanZhou2023joe};
however, our proof differs from theirs in two respects.
First, our estimator is not necessarily the minimizer of convex function,
a property \cite{HePanTanZhou2023joe} exploits in their proof.
Hence, we derive the convergence rate from the (smoothed) score equation, 
following a technique used in the threshold/change-plane regression literature \citep[e.g.,][]{SeoLinton2007joe,Takeishi2023sjs}.
Second, we allow misspecification of the linear quantile regression model, which 
makes the assessment of the smoothing bias in the population subgradient different from 
that in \cite{HePanTanZhou2023joe}.
Furthermore, unlike \cite{MiaoMiaoWuZhaoLu2025jmlr}, we do not impose a compactness assumption on 
the parameter space; this is made possible in part by the fact that
the population score in the limit can be written as the gradient of a convex function.

We now turn to the asymptotic distribution of $\sqrt{n} (\hat \beta_{\text{SD}, \hat W} - \beta_0)$.
First, we define the matrices that appear in the asymptotic covariance matrix.
\begin{equation}
    \begin{split}
        \Lambda_{\text{lab}} &:=  \mathrm{Var} ( \{ \mathbb I(Y - X^{\top}\beta_0 < 0) - \tau \} X) = \mathbb E[ \{ \mathbb I(Y - X^{\top}\beta_0 < 0) - \tau \}^2 X X^{\top}], \\
        \Lambda_{\text{unlab}} &:= \mathrm{Var} ( \{ \mathbb I(\hat Y - X^{\top}\beta_0 < 0) - \tau \} X ), \\
        \Lambda_{\text{cov}} &:= \mathrm{Cov} ( \{ \mathbb I(Y - X^{\top}\beta_0 < 0) - \tau \} X, \{ \mathbb I(\hat Y - X^{\top}\beta_0 < 0) - \tau \} X), \\
        H &:= \mathbb{E} [f_{Y|X}(X^{\top}\beta_0) X X^{\top}].
    \end{split}
\end{equation}
The following proposition establishes the asymptotic normality of $\sqrt{n} (\hat \beta_{\text{SD}, \hat W} - \beta_0)$.
\begin{proposition}\label{prop:asy_normal_LD}
    Assume Assumptions \ref{as:identification}, \ref{as:density}, \ref{as:moment}, \ref{as:kernel}, \ref{as:h}, and \ref{as:y_hat}. If $\hat W \to_p W_0$ for some matrix $W_0$ 
    and $n/N \to r$ for some finite constant $r$, then 
    \[ \sqrt{n} (\hat \beta_{\text{SD}, \hat W} - \beta_0) 
    \to_d 
    N(0, H^{-1} \Lambda(W_0) H^{-1}),\]
    where $\Lambda(W_0) := \Lambda_{\text{lab}} - \Lambda_{\text{cov}}W_0^{\top} 
    - W_0 \Lambda_{\text{cov}} + (1+r) W_0 \Lambda_{\text{unlab}} W_0^{\top}$.
\end{proposition}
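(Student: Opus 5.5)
We start from the estimating equation $\hat\Psi_{h_n,\hat W}(\hat\beta_{\text{SD},\hat W})=0$ and use Proposition~\ref{prop:rate_PPI}, which gives $\hat\beta_{\text{SD},\hat W}-\beta_0=o_p(h_n)$; in particular the estimator is consistent and lies in a shrinking neighborhood of radius $o(h_n)$. Write $\hat\Psi_{h_n,\hat W}(\beta)=\hat\Psi^{\text{lab}}_{h_n}(\beta)-\hat W\{\hat\Psi^{\hat Y,\text{lab}}_{h_n}(\beta)-\hat\Psi^{\hat Y,\text{unlab}}_{h_n}(\beta)\}$. The plan is a first-order expansion around $\beta_0$:
\[
0=\hat\Psi_{h_n,\hat W}(\beta_0)+\hat J_n(\tilde\beta)\,(\hat\beta_{\text{SD},\hat W}-\beta_0),
\]
with $\hat J_n$ the Jacobian evaluated at intermediate points (row by row). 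We then show (i) $\hat J_n(\tilde\beta)\to_p H$, and (ii) $\sqrt n\,\hat\Psi_{h_n,\hat W}(\beta_0)\to_d N(0,\Lambda(W_0))$. These two facts give the claim by Slutsky's lemma, using the invertibility of $H$ from Assumption~\ref{as:moment}.

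For (i), the Jacobian of the labeled term is $n^{-1}\sum_i K_{h_n}(X_i^\top\beta-Y_i)X_iX_i^\top$. Its expectation at $\beta_0$ equals $\mbb E[\int K(t)f_{Y|X}(X^\top\beta_0-h_nt)\,dt\,XX^\top]=H+O(h_n)$ by the Lipschitz property in Assumption~\ref{as:density}(b) and $\kappa_1<\infty$. Its variance is $O((nh_n)^{-1})\to0$, since $K$ is bounded and $\mbb E\|X\|^4<\infty$. The deviation between $\tilde\beta$ and $\beta_0$ is handled with $|K_{h_n}(a)-K_{h_n}(b)|$ estimates. To keep the argument simple, I would expand in expectation and control the empirical-process difference over the ball $\|\beta-\beta_0\|\le \delta h_n$; this is exactly where the $o_p(h_n)$ rate is needed, because the kernel argument then moves by $o(1)$ on the $h_n$ scale. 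The augmentation Jacobian involves $\hat W$ times the difference of the labeled and unlabeled $\hat Y$-Jacobians. These have a common mean and fluctuations of order $O_p((nh_n)^{-1/2})$ uniformly near $\beta_0$, by the same argument. Since $\hat W=O_p(1)$, this contribution is $o_p(1)$.

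For (ii), decompose $\hat\Psi_{h_n,\hat W}(\beta_0)$ into a centered part and a bias part. The labeled bias is $\mbb E[\psi_{h_n}(\beta_0;Y,X)]=\mbb E[\psi(\beta_0;Y,X)]+\mbb E[\{\mc K_{h_n}(X^\top\beta_0-Y)-\mbb I(Y<X^\top\beta_0)\}X]$. The first term is zero by the first-order condition of Assumption~\ref{as:identification}. The second term is $\mbb E[X\int K(t)\{F_{Y|X}(X^\top\beta_0-h_nt)-F_{Y|X}(X^\top\beta_0)\}dt]$. Taylor-expanding with the Lipschitz density, the linear term cancels by the symmetry of $K$, which leaves $O(h_n^2)=o(n^{-1/2})$ by Assumption~\ref{as:h}. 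This holds without correct specification, because only the density smoothness is used. The augmentation term needs no bias calculation: its two means coincide, so it is exactly centered. Replacing $\hat W$ by $W_0$ costs $o_p(1)\cdot O_p(n^{-1/2})$. Next we replace the smoothed scores by their indicator counterparts. The $L_2$ distance $\mbb E[\{\mc K_{h_n}(X^\top\beta_0-Y)-\mbb I(Y<X^\top\beta_0)\}^2\|X\|^2]\to0$ by dominated convergence. The same holds for $\hat Y$, where Assumption~\ref{as:y_hat} guarantees pointwise convergence almost surely. Consequently the centered differences are $o_p(n^{-1/2})$. Finally, the limit follows from the CLT applied to independent labeled and unlabeled sums. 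On the labeled sample, the summands are $\psi(\beta_0;Y_i,X_i)-W_0\psi(\beta_0;\hat Y_i,X_i)$, with variance $\Lambda_{\text{lab}}-\Lambda_{\text{cov}}W_0^\top-W_0\Lambda_{\text{cov}}+W_0\Lambda_{\text{unlab}}W_0^\top$. The unlabeled sum contributes $(n/N)W_0\Lambda_{\text{unlab}}W_0^\top\to rW_0\Lambda_{\text{unlab}}W_0^\top$. Adding the two gives $\Lambda(W_0)$.

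The main obstacle I expect is the uniform Jacobian convergence in (i). The kernel derivative $K_{h_n}$ has size $h_n^{-1}$, so a naive Lipschitz bound in $\beta$ gives an error of order $h_n^{-2}\|\tilde\beta-\beta_0\|$, which is not small. I would therefore first try to avoid differentiating at random points. Instead, I would work with $\hat\Psi(\hat\beta)-\hat\Psi(\beta_0)$ directly, splitting it into its mean $\Psi_{h_n}(\hat\beta)-\Psi_{h_n}(\beta_0)=H(\hat\beta-\beta_0)+o(\|\hat\beta-\beta_0\|)$, obtained from the smoothness of $f_{Y|X}$, plus a stochastic equicontinuity remainder over the $o(h_n)$-ball. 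I would bound that remainder by $o_p(n^{-1/2}+\|\hat\beta-\beta_0\|)$ using a maximal inequality for the class of functions $\mc K_{h_n}(x^\top\beta-y)x$, which has VC type. Its $L_2$-envelope increments are of order $\sqrt{\|\beta-\beta_0\|/h_n}\cdot$ (moments of $X$), which is $o(1)$.
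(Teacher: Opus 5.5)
Your fallback route in the last paragraph is essentially the paper's proof: expand $\hat\Psi_{h_n,\hat W}(\hat\beta_{\text{SD},\hat W})-\hat\Psi_{h_n,\hat W}(\beta_0)$ into its mean $H(\hat\beta_{\text{SD},\hat W}-\beta_0)$ plus negligible terms and a VC-type stochastic-equicontinuity remainder over the $o(h_n)$-ball, then add the $O(h_n^2)=o(n^{-1/2})$ smoothing bias at $\beta_0$, the dominated-convergence replacement of $\mc K_{h_n}$ by indicators (via Assumption~\ref{as:y_hat}), and a CLT over the two independent samples. Commit to that version rather than the intermediate-point Jacobian plan, because your $O_p((nh_n)^{-1/2})$ fluctuation bound for the $\hat Y$-Jacobians tacitly uses a bounded conditional density of $\hat Y$ given $X$, which is not assumed here; by contrast, the increment class $x\{\mc K_{h_n}(x^\top\beta-\hat y)-\mc K_{h_n}(x^\top\beta_0-\hat y)\}$ on $\|\beta-\beta_0\|\le r_nh_n$ has envelope $\|K\|_\infty r_n\|x\|^2$ using only boundedness of $K$, exactly as in the paper.
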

The asymptotic covariance has the familiar sandwich form found in the literature  
\citep[e.g.,][]{MiaoMiaoWuZhaoLu2025jmlr}.
Similar to standard quantile regression, plug-in estimation of $H$
typically requires conditional density estimation, which could be unstable.
Hence, we recommend the bootstrap for inference.

From the form of $\Lambda (W_0)$ in Proposition \ref{prop:asy_normal_LD},
we can minimize the asymptotic covariance matrix with respect to $W_0$ in the sense of positive semi-definite matrix:
for two real matrices $W_1, W_2 \in \mathbb R^{p \times p}$, 
$W_1 \preceq W_2$ is defined to be that $W_2 - W_1$ is a positive semi-definite matrix.
\begin{proposition}\label{prop:opt_W}
    Assume the assumption of Proposition \ref{prop:asy_normal_LD},
    and that $\Lambda_{\text{unlab}}$ is invertible. Let 
    \begin{equation}
        W^{*} := (1+r)^{-1} \Lambda_{\text{cov}} \Lambda_{\text{unlab}}^{-1}.
    \end{equation}
    Then $H^{-1} \Lambda (W^{*}) H^{-1} \preceq H^{-1} \Lambda (W) H^{-1}$ for any $W \in \mathbb R^{p \times p}$.
\end{proposition}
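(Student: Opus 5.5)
Since $H$ is symmetric positive definite by Assumption~\ref{as:moment}, the congruence $A \mapsto H^{-1} A H^{-1}$ preserves the order $\preceq$. Indeed, for any vector $v$ we have $v^{\top} H^{-1}(A_2 - A_1)H^{-1} v = (H^{-1}v)^{\top}(A_2-A_1)(H^{-1}v)$. It therefore suffices to show $\Lambda(W^*) \preceq \Lambda(W)$ for every $W \in \mathbb R^{p\times p}$. The plan is a completion-of-squares argument, the matrix analogue of minimizing a quadratic.

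The first step is to record the structure of the matrices involved. $\Lambda_{\text{unlab}}$ is a covariance matrix, hence symmetric and positive semidefinite; by hypothesis it is invertible, so it is positive definite. $\Lambda_{\text{lab}}$ is symmetric. $\Lambda_{\text{cov}}$ is not necessarily symmetric, so I will need care with transposes. Checking the definition of $\Lambda(W_0)$ in Proposition~\ref{prop:asy_normal_LD}, the cross terms are written as $\Lambda_{\text{cov}}W_0^{\top} + W_0\Lambda_{\text{cov}}$. For $\Lambda(W_0)$ to be symmetric, as a covariance must be, this should read $\Lambda_{\text{cov}}W_0^{\top} + W_0\Lambda_{\text{cov}}^{\top}$, where $\Lambda_{\text{cov}} = \mathrm{Cov}(S_{\text{lab}}, S_{\text{unlab}})$ in the $\mathbb E[S_{\text{lab}} S_{\text{unlab}}^{\top}]$ convention. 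I will interpret it that way, noting that $\Lambda_{\text{cov}}$ is in fact symmetric here, since it equals $\mathbb E[(\mathbb I(\cdot)-\tau)(\mathbb I(\cdot)-\tau)XX^{\top}]$ minus a product of means that vanish or are symmetric. I will verify this directly: $\mathbb E[(\mathbb I(Y<X^\top\beta_0)-\tau)X]=0$ by the first-order condition, so $\Lambda_{\text{cov}}=\mathbb E[ab\,XX^{\top}]$ with scalar $a,b$, which is symmetric.

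The second step is the algebra. Write $c = 1+r > 0$ and $W^* = c^{-1}\Lambda_{\text{cov}}\Lambda_{\text{unlab}}^{-1}$. I claim
\[
\Lambda(W) = \Lambda_{\text{lab}} - c^{-1}\Lambda_{\text{cov}}\Lambda_{\text{unlab}}^{-1}\Lambda_{\text{cov}}^{\top} + c\,(W - W^*)\Lambda_{\text{unlab}}(W-W^*)^{\top}.
\]
To verify it, expand the last term. This gives $cW\Lambda_{\text{unlab}}W^{\top} - W\Lambda_{\text{cov}}^{\top} - \Lambda_{\text{cov}}W^{\top} + c^{-1}\Lambda_{\text{cov}}\Lambda_{\text{unlab}}^{-1}\Lambda_{\text{cov}}^{\top}$, using the symmetry of $\Lambda_{\text{unlab}}^{-1}$. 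Substituting $W = W^*$ then gives $\Lambda(W^*) = \Lambda_{\text{lab}} - c^{-1}\Lambda_{\text{cov}}\Lambda_{\text{unlab}}^{-1}\Lambda_{\text{cov}}^{\top}$.

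Hence $\Lambda(W) - \Lambda(W^*) = c\,(W-W^*)\Lambda_{\text{unlab}}(W-W^*)^{\top}$, which is positive semidefinite because $v^{\top}(\cdot)v = c\,u^{\top}\Lambda_{\text{unlab}}u \ge 0$ with $u = (W-W^*)^{\top}v$. Applying the congruence from the first paragraph concludes the proof. The only step requiring any care is the transpose/symmetry bookkeeping for $\Lambda_{\text{cov}}$; everything else is routine.
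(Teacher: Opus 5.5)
Your proposal is correct and is essentially the paper's proof: the paper also reduces the claim to $\Lambda(W^*) \preceq \Lambda(W)$ using positive definiteness of $H$, then completes the square, writing $\Lambda(W) = \Lambda_{\text{lab}} - (1+r)^{-1}\Lambda_{\text{cov}}\Lambda_{\text{unlab}}^{-1}\Lambda_{\text{cov}} + \Omega(W)$, where $\Omega(W)$ is exactly your $(1+r)(W-W^*)\Lambda_{\text{unlab}}(W-W^*)^{\top}$ written as a matrix times its transpose. Your explicit check that $\Lambda_{\text{cov}} = \mathbb E[ab\,XX^{\top}]$ is symmetric, via the first-order condition, is a welcome addition, since the paper relies on this symmetry without stating it.
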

From the forms of $\Lambda_{\text{cov}}$ and $\Lambda_{\text{unlab}}$, 
it is straightforward to construct a plug-in estimator of $W^*$.
Let $\hat \beta_{\text{init}}$ be a pilot consistent estimate of $\beta_0$
(e.g., the quantile regression estimate based on the labeled data only).
Then estimate $r$ by $n/N$ and replace $\Lambda_{\text{cov}}$ and $\Lambda_{\text{unlab}}$ by their sample analogs.

A caveat in estimating of $W^*$ is the invertibility of $\Lambda_{\text{unlab}}$. In some datasets, such as the one considered in Section~\ref{sec:data_analysis}, the predictor $\hat Y$ systematically over- or under-estimates $Y$, particularly in the lower or upper tails. In such cases, $\mathbb{I}(\hat Y - X^{\top}\beta_0(\tau) < 0)$ can be mostly if not all zero or one, which can make $\Lambda_{\text{unlab}}$ practically singular. 
To address this issue, we propose calibrating $\hat Y$ as follows.
Specifically, using the labeled data, we fit the linear quantile regression model $Y = \eta_1 + \eta_2 \hat Y$ at the target quantile level $\tau$, and let $\hat \eta = (\hat \eta_1, \hat \eta_2)$ denote the resulting estimator.
We then define the calibrated surrogate by $\hat Y(\hat \eta) := \hat \eta_1 + \hat \eta_2 \hat Y$, and use it as a new predictor of $Y$.
The calibrated surrogate $\hat Y (\hat \eta)$ is expected to provide a better match to $Y$ at the target quantile level.
In Appendix \ref{app:calibration}, we show that, under suitable regularity conditions, the calibration step is asymptotically negligible for the resulting SD-CSE estimator, in the sense that its asymptotic variance is the same as if the probability limit of $\hat\eta$ were known.

\subsection{PTD-CSE: asymptotic theory}\label{sec:theory:ptd}
This subsection derives the asymptotic distribution of the PTD-CSE.
In addition to Assumptions \ref{as:identification}, \ref{as:density}, \ref{as:moment}, \ref{as:kernel} and \ref{as:h} for the SD-CSE (Proposition \ref{prop:asy_normal_LD}),
we need the following regularity condition on $\hat Y$ (replacing Assumption \ref{as:y_hat}), which is used in the asymptotic expansion of $\hat \gamma_{\text{lab}}$
and $\hat \gamma_{\text{unlab}}$.
\begin{assumption}\label{as:identification_yhat}
    The objective function 
    $\gamma \mapsto \mathbb E[\rho_{\tau}(\hat Y - X^{\top}\gamma)-\rho_{\tau}(\hat{Y})]$
    has a unique minimizer $\gamma_0$ over $\mathbb R^p$.
\end{assumption}

\begin{assumption}\label{as:density_yhat}
    The conditional distribution of $\hat Y$ given $X$ has a density function $f_{\hat Y|X}$ with 
    respect to the Lebesgue measure that satisfies the following properties.
    $(a)$ There exists a finite constant $f_{\hat Y, \sup}$ such that $\sup_{u \in \mbb R} f_{\hat Y|X} (u) \leq f_{\hat Y, \sup}$
    almost surely; 
    and
    $(b)$ there exists a finite constant $l_{\hat Y}$ such that $|f_{\hat Y|X} (u) - f_{\hat Y|X}(v)| \leq l_{\hat Y} |u - v|$ for all $u, v \in \mbb R$ almost surely.
\end{assumption}

\begin{assumption}\label{as:moment_yhat}
    $\mbb E[f_{\hat Y | X} (X^{\top}\gamma_0) X X^{\top}]$ is positive definite.
\end{assumption}
These assumptions parallel Assumptions \ref{as:identification}, \ref{as:density} and \ref{as:moment}, but are imposed on the conditional distribution of $\hat Y$ given $X$.
They are stronger than Assumption \ref{as:y_hat} and are required because our proof relies on the asymptotic linear expansion of the smoothed quantile regression estimators for $\hat Y$ on $X$.

To establish the asymptotic normality of $\sqrt{n} (\hat \beta_{\text{PTD}, \tilde W} - \beta_0)$, we first define the matrices that appear in the asymptotic covariance matrix.
\begin{equation}
   \begin{split}
    \Sigma_{\text{lab}} = \ &H^{-1} 
    \mbb E[  \{\mbb I (Y - X^{\top}\beta_0 < 0) -  \tau\}^2 X X^{\top}] 
    H^{-1}, \\
    \Sigma_{\text{unlab}} = \ &J^{-1} 
    \mbb E[\{ \mbb I(\hat Y - X^{\top}\gamma_0 < 0) - \tau \}^2 X X^{\top} ] 
    J^{-1}, \\
    \Sigma_{\text{cov}} = \
    &H^{-1} 
    \mbb E[ \{ \tau - \mbb I(Y - X^{\top}\beta_0 < 0)\}\{\tau - \mbb I (\hat Y - X^{\top}\gamma_0 < 0)\} X X^{\top}] J^{-1},
    \end{split}
\end{equation}
where $J$ = $\mathbb E[f_{\hat Y | X} (X^{\top} \gamma_0) X X^{\top}]$, which is the analog of $H$ with the conditional density of the predicted outcome.

Then the following proposition gives the asymptotic normality.
\begin{proposition}\label{prop:asy_normal_PTD}
    Assume Assumptions \ref{as:identification}, \ref{as:density}, \ref{as:moment}, \ref{as:kernel}, \ref{as:h}, \ref{as:identification_yhat}, \ref{as:density_yhat},
    and \ref{as:moment_yhat}.
    If $\tilde W \to_p W_0$ 
    for some matrix $W_0$ and 
    $n/N \to r$ for some finite constant $r$, then
    \[
        \sqrt{n} (\hat \beta_{\text{PTD}, \tilde W} - \beta_0) \to_d 
        N(0, \Sigma_{\text{lab}} - \Sigma_{\text{cov}} W_0^{\top} - W_0 \Sigma_{\text{cov}}^{\top}
        + (1+r) W_0 \Sigma_{\text{unlab}} W_0^{\top}).
    \]
\end{proposition}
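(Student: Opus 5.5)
The plan is to establish asymptotic linear (Bahadur-type) representations for each of the three convolution-smoothed estimators $\hat\beta_{\text{lab}}$, $\hat\gamma_{\text{lab}}$ and $\hat\gamma_{\text{unlab}}$. Combining these with a joint central limit theorem over the two independent samples then gives the result.

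\textbf{Step 1: Bahadur representation for $\hat\beta_{\text{lab}}$.} We aim to show
\[
\sqrt n(\hat\beta_{\text{lab}}-\beta_0) = -H^{-1}\frac{1}{\sqrt n}\sum_{i=1}^n \{\mbb I(Y_i-X_i^\top\beta_0<0)-\tau\}X_i + o_p(1).
\]
Since $\hat\beta_{\text{lab}}$ solves $\frac1n\sum_{i\le n}\psi_{h_n}(\beta;Y_i,X_i)=0$, it coincides with the SD-CSE with $\hat W=0$.
\begin{itemize}
\item Proposition~\ref{prop:rate_PPI} therefore gives $\hat\beta_{\text{lab}}-\beta_0=o_p(h_n)$. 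This step needs no Assumption~\ref{as:y_hat}, because the augmentation term is absent.
\item The expansion steps in the proof of Proposition~\ref{prop:asy_normal_LD} with $W_0=0$ then apply verbatim. Concretely, they consist of a mean-value expansion of the smoothed score around $\beta_0$, with Jacobian $\frac1n\sum_i K_{h_n}(X_i^\top\bar\beta-Y_i)X_iX_i^\top\to_p H$ (uniformly on an $o(h_n)$ ball), together with $\sqrt n$-negligibility of the smoothing bias $\mbb E[\psi_{h_n}(\beta_0;Y,X)]$ under $h_n=o(n^{-1/4})$ and misspecification.
\item The smoothed score at $\beta_0$ minus the indicator score is $o_p(n^{-1/2})$ after centering. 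This follows because its variance is $O(h_n)$, by the bounded density.
\end{itemize}

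\textbf{Step 2: the same expansion for $\hat\gamma_{\text{lab}}$ and $\hat\gamma_{\text{unlab}}$.} Here the roles of $(Y,\beta_0,H)$ are played by $(\hat Y,\gamma_0,J)$. Assumptions~\ref{as:identification_yhat}--\ref{as:moment_yhat} are exactly the analogues of Assumptions~\ref{as:identification}--\ref{as:moment}, except that a fourth moment of $X$ is already given. Applying the previous results with $\hat Y$ in place of $Y$ yields
\begin{align}
\sqrt n(\hat\gamma_{\text{lab}}-\gamma_0) &= -J^{-1}\frac{1}{\sqrt n}\sum_{i=1}^n \phi_i+o_p(1),\\
\sqrt N(\hat\gamma_{\text{unlab}}-\gamma_0) &= -J^{-1}\frac{1}{\sqrt N}\sum_{i=n+1}^{n+N}\phi_i+o_p(1),
\end{align}
where $\phi_i=\{\mbb I(\hat Y_i-X_i^\top\gamma_0<0)-\tau\}X_i$. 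The bandwidth conditions in terms of $N$ hold because $n/N\to r$ with $N\ge$ const$\cdot n$. If $r=0$, $N$ grows faster, which only makes $h_n$ relatively larger. I would check that $N^{1/2}h_n\to\infty$, which follows from $n^{1/2}h_n\to\infty$ when $N\gtrsim n$, and that $h_n=o(N^{-1/4})$ is not needed beyond $\sqrt N$-scaling of a bias that is $o(n^{-1/2})$. The cleanest route is to keep everything at scale $\sqrt n$: the smoothing bias for $\hat\gamma_{\text{unlab}}$ is $O(h_n^2)=o(n^{-1/2})$ from Step 1's bias bound, and the stochastic remainder is $o_p(N^{-1/2})\le o_p(n^{-1/2})$ times $\sqrt{n/N}$.

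\textbf{Step 3: combining.} Write
\[
\sqrt n(\hat\beta_{\text{PTD},\tilde W}-\beta_0)=\sqrt n(\hat\beta_{\text{lab}}-\beta_0)-\tilde W\{\sqrt n(\hat\gamma_{\text{lab}}-\gamma_0)-\sqrt{n/N}\,\sqrt N(\hat\gamma_{\text{unlab}}-\gamma_0)\}.
\]
\begin{itemize}
\item The vector $\bigl(n^{-1/2}\sum_{i\le n}(\psi_i,\phi_i),\,N^{-1/2}\sum_{i>n}\phi_i\bigr)$ is asymptotically jointly normal. Its two blocks are independent, each block's covariance is given by the second moments (both scores have mean zero by the population first-order conditions), and the moments exist by $\mbb E\|X\|^4<\infty$.
\item Slutsky with $\tilde W\to_p W_0$ and $n/N\to r$ then gives the covariance $\Sigma_{\text{lab}}-\Sigma_{\text{cov}}W_0^\top-W_0\Sigma_{\text{cov}}^\top+W_0\Sigma_{\text{unlab}}W_0^\top+rW_0\Sigma_{\text{unlab}}W_0^\top$.
\item The cross term is $\mbb E[\psi\phi^\top]$ pre- and post-multiplied by $H^{-1}$ and $J^{-1}$. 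Here $\psi\phi^\top=\{\mbb I(\cdot)-\tau\}\{\mbb I(\cdot)-\tau\}XX^\top$, which matches $\Sigma_{\text{cov}}$ since the two sign flips cancel.
\end{itemize}

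The main obstacle is Step 1/2's reuse of the SD machinery. One must confirm that the proof of Proposition~\ref{prop:asy_normal_LD} invokes Assumption~\ref{as:y_hat} only for the augmentation bias, so that the $\hat Y$-regressions can instead rely on the density Assumption~\ref{as:density_yhat}. If this cannot be cited directly, I would redo the expansion for a generic response with bounded Lipschitz conditional density.
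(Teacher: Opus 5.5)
Your proposal is correct and follows essentially the paper's proof. The paper also obtains Bahadur expansions for $\hat\beta_{\text{lab}}$, $\hat\gamma_{\text{lab}}$ and $\hat\gamma_{\text{unlab}}$ by rerunning the arguments of Propositions~\ref{prop:rate_PPI} and~\ref{prop:asy_normal_LD} with $W=0$ (using Assumptions~\ref{as:identification_yhat}--\ref{as:moment_yhat} for the $\hat Y$ regressions), and then combines them via the independence of $\mbb G_n$ and $\mbb G_N$, $\tilde W\to_p W_0$ and $n/N\to r$; it additionally makes explicit that convexity of $\ell_{h_n}$ plus Lemma~\ref{lem:root} with $W=0$ ensures the minimizers solve the smoothed score equations with probability approaching one, and, in line with your $\sqrt n$-scale remark, it writes $\sqrt n(\hat\gamma_{\text{unlab}}-\gamma_0)=\sqrt r\,J^{-1}\mbb G_N X\{\tau-\mbb I(\hat Y-X^\top\gamma_0<0)\}+o_p(1)$ directly.
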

The proof follows a similar argument to Proposition 2.1 of \cite{KlugerLuZrnicWangBates2025arxiv}, which is 
based on the asymptotic linear expansions for $\sqrt{n} (\hat \beta_{\text{lab}} - \beta_0)$,
$\sqrt{n} (\hat \gamma_{\text{lab}} - \gamma_0)$, and $\sqrt{N} (\hat \gamma_{\text{unlab}} - \gamma_{0})$.
Since $\hat \beta_{\text{lab}}$, $\hat \gamma_{\text{lab}}$ and $\hat \gamma_{\text{unlab}}$ each 
solve a smoothed estimating equation analogous to \eqref{SD_equation_smooth}, 
the proof for the expansions largely follows that of Proposition \ref{prop:rate_PPI} and \ref{prop:asy_normal_LD}.
The plug-in covariance estimate relies on conditional density estimation and can be unstable, so
we recommend bootstrapping for inference similarly, as with the SD estimator.

Similar to the SD estimator, the asymptotic covariance matrix in 
Proposition \ref{prop:asy_normal_PTD} can be minimized with respect to $W_0$
in the sense of positive semi-definite matrix.
By section 2.3 of \cite{KlugerLuZrnicWangBates2025arxiv}, 
the optimal matrix is $(1 + r)^{-1} \Sigma_{\text{cov}} \Sigma_{\text{unlab}}^{-1}$.
Although a plug‑in estimator is available, 
it requires conditional density estimation; 
a more practical approach is to rely on the bootstrap.
Note that $\Sigma_{\text{cov}}$ is the asymptotic covariance between $\sqrt{n}(\hat \beta_{\text{lab}} - \beta_0)$ and $\sqrt{n} (\hat \gamma_{\text{lab}} - \gamma_0)$, while $\Sigma_{\text{unlab}}$ is the asymptotic variance of $\sqrt{N} (\hat \gamma_{\text{unlab}} - \gamma_0)$. 
Hence, inspired by Section 3.3 of \cite{KlugerLuZrnicWangBates2025arxiv}, 
we estimate $\Sigma_{\text{cov}}$ using 
the covariance of the bootstrapped counterparts of $\hat \beta_{\text{lab}}$,
and $\hat \gamma_{\text{lab}}$, and 
estimate $\Sigma_{\text{unlab}}$ 
using the variance of the bootstrapped counterparts of
$\hat \gamma_{\text{unlab}}$.
Specifically, these estimates are computed from the multiplier bootstrap sample $\{ (\hat \beta^{(b)}_{\text{lab}}, \hat \gamma^{(b)}_{\text{lab}},
\hat \gamma^{(b)}_{\text{unlab}}) \}_{b = 1}^B$ obtained as described in Section \ref{sec:method:ptd}.

Finally, we do not pursue an asymptotic analysis of the ensemble estimator introduced in Section~\ref{sec:method:ensemble}. Such an analysis would require additional technical arguments, including consistency of the bootstrap estimator of the oracle weight $c_j^*$. We leave this question for future research.

\section{Monte carlo simulation}\label{sec:simulation}
This section examines the finite-sample performance of the SD-CSE and PTD-CSE.
Specifically, we assess the empirical coverage and length of the resulting confidence intervals.
Throughout this section, we set $\mc K(\cdot)$ to the c.d.f. of the standard normal distribution and 
set $h_n$ to $\{ (p + \log n)/n \}^{2/5}$, as suggested in section 5 of \cite{HePanTanZhou2023joe}.

The data generating process in the simulation is partly based on section 5 of \cite{HePanTanZhou2023joe}.
We generate the covariates $X_i = (X_{i, 1}, X_{i, 2}, X_{i, 3})^{\top}$ from
a multivariate uniform distribution on $[-\sqrt{3}, \sqrt{3}]^3$
with covariance matrix $(0.7^{|j-k|})_{1 \leq j, k \leq 3}$,
using the R package \texttt{MultiRNG} \citep{falk1999cssc}.
We also generate variables $Z_i = (Z_{i, 1}, Z_{i, 2}, Z_{i, 3})^{\top}$, used in the prediction model $\hat Y = \hat f(X_i, Z_i)$,
as mutually independent mean-zero normal random variable, independent of $X_i$, with variance $\sigma_Z^2$.
The random noise $\varepsilon_i$ for quantile regression is decomposed as $\varepsilon_i = Z_{i, 1} + Z_{i, 2} + Z_{i, 3} + u_i$,
where $u_i$ follows mean-zero normal distribution with variance $\sigma^2_u$ and is independent of $X_i$ and $Z_i$.
We consider three choices of $(\sigma_{Z}^2, \sigma_{u}^2)$:
$(1, 1), (2/3, 2),$ and $(1/3, 3)$, holding $\Var(\varepsilon_i) = 4$ fixed, which helps examine how the predictive power of $\hat Y$ affects the results.
Given $\tau \in \{0.5, 0.75, 0.9\}$, we then generate the outcome variable $Y_i$ from the heteroskedastic model:
\begin{equation}
    Y_i = 1 + X_{i, 1} + X_{i, 2} + X_{i, 3} + (0.5 X_{i, 1} + 1) (\varepsilon_i - F^{-1}_{\varepsilon_i} (\tau)), \label{heteroskedastic}
\end{equation}
where $F^{-1}_{\varepsilon_i} (\cdot)$ is the quantile function of $\varepsilon_i$. Because $0.5 X_{i, 1} + 1 > 0$, the model \eqref{heteroskedastic} satisfies the linear quantile regression specification
$Q_{\tau}(Y_i|X_i) = 1 + X_{i, 1} + X_{i, 2} + X_{i, 3}$.
To train the prediction model $\hat f (\cdot)$, we generate a
single training sample $\{ Y_i, X_i, Z_i \}_{i = 1}^{n_{\text{train}}}$
of size $n_{\text{train}} = 10000$ according to the above data generating process.
Using this sample, we fit $\hat f(\cdot)$ via a neural network to estimate the conditional mean function $\mathbb E[Y_i|X_i, Z_i]$, implemented with the R package \texttt{nnet} \citep{VenablesRipley2002}.
We then generate 1000 Monte Carlo replications consisting of a labeled dataset $\{ Y_i, \hat Y_i, X_i \}_{i = 1}^n$ and an unlabeled dataset 
$\{ \hat Y_i, X_i \}_{i = n+1}^{10000}$, where $\hat Y_i = \hat f(X_i, Z_i)$ and $n \in \{250, 500, 1000\}$.

For the SD-CSE,
we consider two variants that 
differ in the choice of $\hat W$: (i) ``SD'', with $\hat W = I$ (the identity matrix), and 
(ii) ``SD-OPT'', with $\hat W$ given by an estimate of the optimal matrix.
We define ``PTD'' and ``PTD-OPT'' for the PTD-CSE in the same manner.
We also implement ``ENS,''
the componentwise ensemble of ``SD-OPT'' and ``PTD-OPT'' defined in Section~\ref{sec:method:ensemble}.
To solve the nonlinear equation for the SD-CSE, we use the R package \texttt{nleqslv}.
For the intermediate estimates $\hat{\beta}_{\text{lab}}$, $\hat{\gamma}_{\text{lab}}$, 
and $\hat{\gamma}_{\text{unlab}}$ used in
the PTD-CSE, we employ the R package \texttt{conquer}.
We compare the proposed methods against the following three benchmarks.
\begin{enumerate}
    \item ``LAB'':
convolution-smoothed quantile regression applied only to the labeled sample \citep{HePanTanZhou2023joe}
    \item ``PTD-CHECK'': the PTD estimator based on the check loss with the identity weight matrix
    \item ``PTD-CHECK-OPT'': the PTD estimator based on the check loss with an estimated optimal weight matrix.
\end{enumerate}
Note that the latter two are implemented in \cite{KlugerLuZrnicWangBates2025arxiv}.
Recall that the SD estimator based on the check loss is computationally costly and unreliable, as it requires solving a discontinuous equation. Therefore, we do not implement this approach here.
For all the methods, we construct 
bootstrap percentile
confidence intervals using the multiplier bootstrap with Rademacher weights
\citep{PanZhou2021ii, HePanTanZhou2023joe}.
The number of bootstrap replications is set to 1000 throughout.

We first examine the empirical coverage of the 95\% confidence intervals.
Figure~\ref{fig:sim-coverage} reports, for each method, empirical coverage averaged over the three slope coefficients ($X_1$, $X_2$, and $X_3$; intercept excluded) across different choices of $\tau$, $n$, and $(\sigma_Z^2,\sigma_u^2)$.
While ``SD-OPT'' and ``ENS'' exhibit empirical coverage close to .95, the other proposed methods,``SD,'' ``PTD,'' and ``PTD-OPT,'' are generally conservative, with slightly higher coverage rate, but much less so than ``PTD-CHECK'' and ``PTD-CHECK-OPT.''
The overcoverage for ``PTD-CHECK'' and ``PTD-CHECK-OPT'' is consistent with the pattern observed in \cite{KlugerLuZrnicWangBates2025arxiv}.

\begin{figure}[t]
    \centering
    \includegraphics[width=0.95\textwidth]{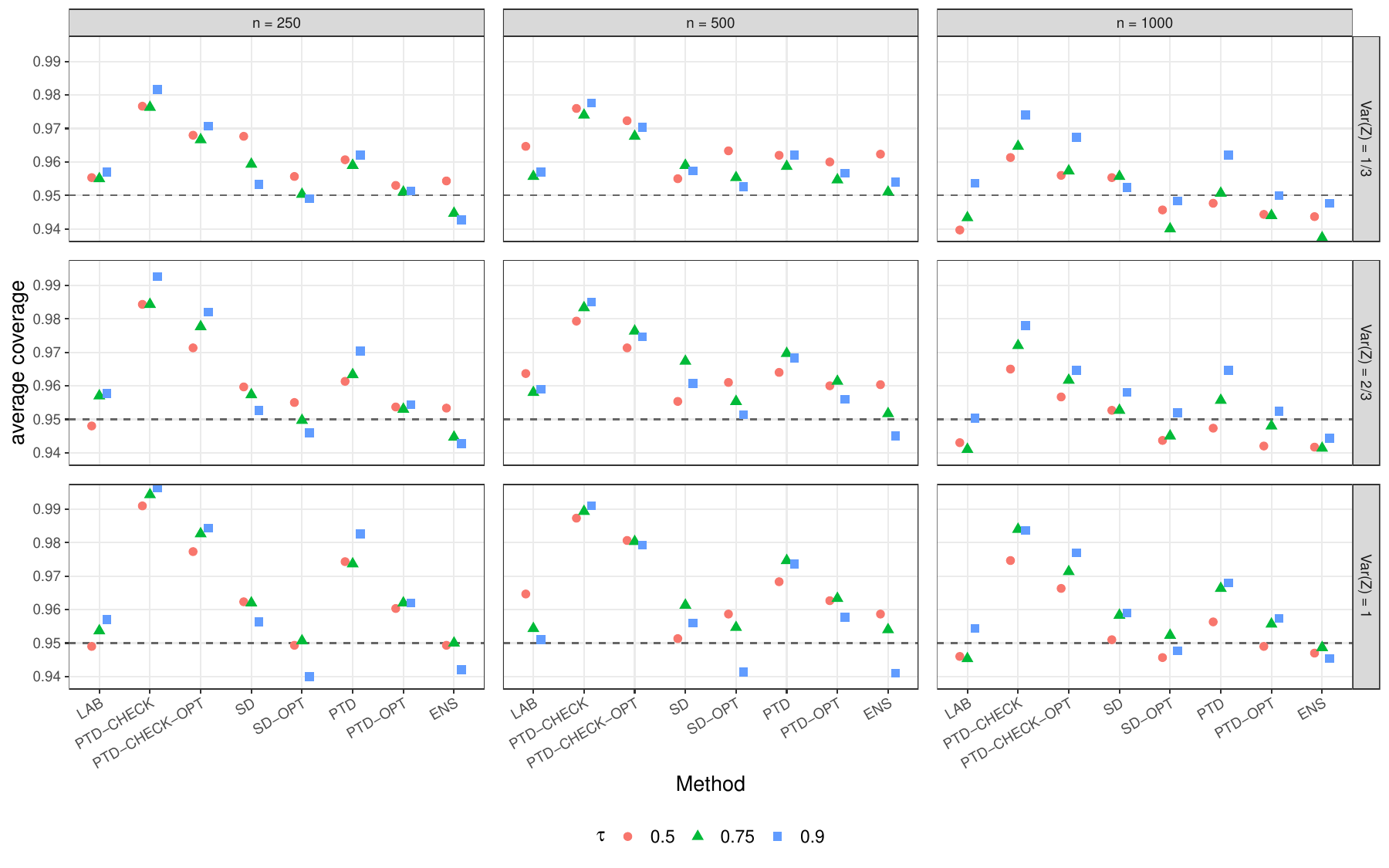}
    \caption{Empirical coverage for 95\% confidence intervals across methods. Each point shows the coverage averaged over the three slope coefficients ($X_1$, $X_2$, and $X_3$; intercept excluded), with color indicating $\tau \in \{0.5, 0.75, 0.9\}$, across sample sizes and variance levels.}
    \label{fig:sim-coverage}
\end{figure}

We now turn to the length of the confidence intervals.
Figure \ref{fig:sim-length} illustrates how interval length, averaged across the coefficients of $X_1$, $X_2$, and $X_3$, varies across methods as we change (i) the labeled sample size and (ii) the noise levels, for $\tau = 0.5, 0.75$, and $0.9$, respectively.
Although empirical coverage rates are not identical across methods, comparing interval lengths remains practically relevant because these are the intervals produced in practice.
Under this averaged metric, the the reductions in interval length for ``SD-OPT'' and ``PTD-OPT'' relative to ``LAB'' are more pronounced when $\sigma_{Z}^{2}$ is larger.
This is intuitive because, with $\Var(\varepsilon)$ fixed, increasing $\sigma_Z^{2}$ strengthens the signal in $Z_i$, making $\hat Y_i=\hat f(X_i,Z_i)$ better able to explain the residual variation in $Y_i$ beyond $X_i$.
Furthermore, even in the lowest-$\sigma_{Z}^2$ setting ($\sigma_Z^2 = 1/3$ and $\tau = 0.9$), ``SD-OPT'' performs at least as well as ``LAB'', and ``PTD-OPT'' slightly outperforms ``LAB''.
Also consistent with the overcoverage observed above, the confidence intervals from ``PTD-CHECK'' and ``PTD-CHECK-OPT'' are noticeably wider than those from ``SD-OPT'' and ``PTD-OPT''.
Note that ``ENS'' consistently outperforms both “SD-OPT” and “PTD-OPT”. 
\begin{figure}[t]
    \centering
    \includegraphics[width=0.95\textwidth,height=0.78\textheight,keepaspectratio]{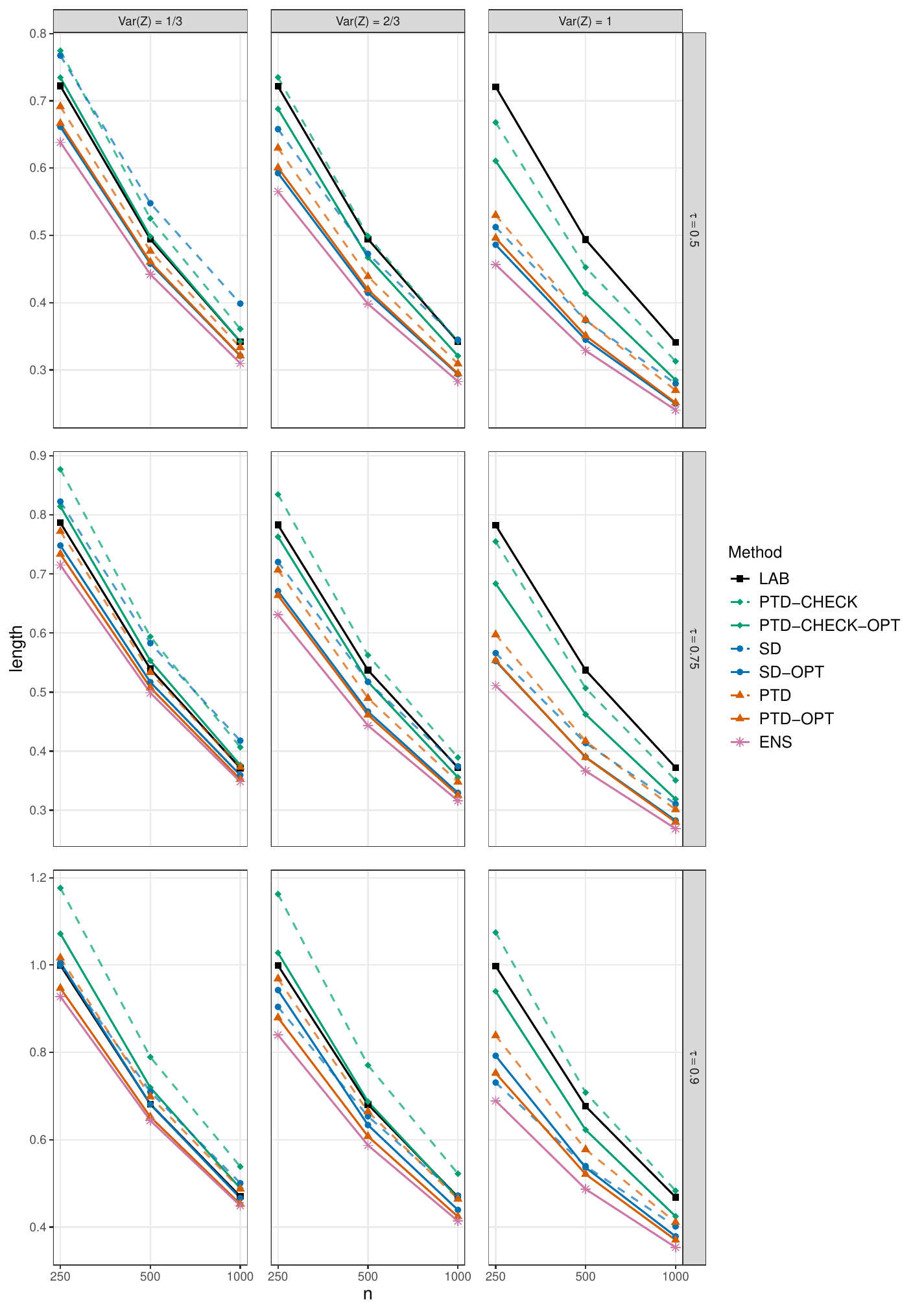}
    \caption{Confidence interval length averaged across the coefficients $X_1$, $X_2$, and $X_3$, over 1000 replications.}
    \label{fig:sim-length}
\end{figure}

While the averaged coverage and length summaries clarify overall efficiency patterns, they can obscure coefficient-specific behavior. In particular, at $\tau = 0.9$, the coefficient on $X_1$, which drives heteroskedasticity in \eqref{heteroskedastic}, shows noticeable undercoverage for ``SD-OPT'' and, consequently, for ``ENS,'' at smaller sample sizes. This undercoverage diminishes as the degree of heteroskedasticity decreases or the sample size increases. See Appendix~\ref{app:undercoverage} for detail.

We also conduct a brief sensitivity analysis of the SD-CSE and PTD-CSE with respect to the choice of $h_n$.
Overall, the performance of these estimators is not sensitive to $h_n$, which is consistent with the robustness of convolution-smoothed quantile regression observed in the numerical studies of \cite{HePanTanZhou2023joe}. We defer the detail to Appendix \ref{app:sensitivity}.
Finally, we assess the performance of ``SD-OPT'' in a setting where estimating $W^*$ is difficult and the calibration procedure introduced in Section \ref{sec:method:SD} is applied. The results are generally similar to those obtained above in the setting without calibration, both in terms of empirical coverage probabilities and interval lengths. See Appendix \ref{app:calibration} for details.

\section{Application to the St.~Louis housing price dataset}\label{sec:data_analysis}

In this section, we illustrate the proposed method using housing price data  recorded during 2025 from selected municipalities in the St.~Louis area in the United States.
In particular, we examine how the sale price of single-family homes is associated with the assigned public school district, across different quantiles, after adjusting for other covariates.
Note that sale prices are observed only when a property is sold; for the majority of off-market houses, actual sale prices are unavailable.
Therefore, we use the appraisal value of these single-family homes as a predictor, obtained from St.~Louis County tax records of the previous year.
As of the end of 2024, there were 21,007 single-family homes in the selected area  after excluding properties with outlying house sizes; of these, 775 were sold in 2025 and are treated as labeled data, while the remaining 20,232 off-market homes are treated as unlabeled data. The appraisal value is observed for all properties, along with a set of covariates such as school district, house size, and building age.

We consider the following linear quantile regression specification:
\begin{equation}
    \begin{split}
        \text{price} =
        \beta_1(\tau)
        + \beta_2(\tau)\,\text{school}
        + \beta_3 (\tau)\, \text{school} \times \text{size}
        + \beta_4(\tau)\, \text{size}
        + \beta_5(\tau)\,\text{age}
        + \beta_6(\tau)\,\text{age}^2
        + \varepsilon_{\tau},
    \end{split}
\end{equation}
for $\tau \in \{0.1, 0.25, 0.5, 0.75, 0.9\}$,
where ``price'' denotes the house price in thousands of U.S. dollars, 
``size'' denotes the house size measured in square feet,
and ``age'' denotes the building age of the house.
Both ``size'' and ``age'' are standardized to have mean zero and unit variance.
The variable ``school''  is a binary indicator for whether a property is located in a relatively high-performing school district.
Specifically, it is defined to be 1 if the property is located in Clayton or Ladue school district, and 0 if it is located in University City.
This classification is based on publicly available district performance information from the Missouri Department of Elementary and Secondary Education (DESE) \citep{missouri_dese_apr}.
We include the quadratic term ``$\text{age}^2$'' to allow for a nonlinear effect of building age on price.
An interaction term between ``school'' and ``size'' is included to allow the school premium to vary with house size.
Table~\ref{tab:summary_df} reports summary statistics for the variables used in the analysis.

\begin{table}[htbp]
\centering
\caption{Summary statistics for variables}
\label{tab:summary_df}
\begin{tabular}{lrrrrr}
\hline
Variable & Mean & Std.\ Dev. & Min & Median & Max \\
\hline
Price (\$1,000) & 735.6 & 608.1 & 37.0 & 577.0 & 4,800.0 \\
Appraised value (\$1,000) & 543.9 & 406.7 & 11.6 & 457.2 & 10,925.9 \\
Age & 74.63 & 24.20 & 0 & 75 & 217 \\
Size (sq ft) & 2,299.1 & 1,186.1 & 528 & 2,018 & 6,000 \\
School score high (0/1) & 0.521 & 0.500 & 0 & 1 & 1 \\
\hline
\end{tabular}

\begin{flushleft}
\footnotesize Notes: Price is based on labeled data ($n=775$). Other variables are based on all the observations of size $n + N =21{,}007$.
\end{flushleft}
\end{table}

In this dataset, $\hat Y$ (the appraised value from the past year) tends to underestimate $Y$ (the house price), particularly in the upper half of the distribution; see Figure~\ref{fig:price_vs_appraised} for a scatterplot of these two variables.
This can lead to numerical instability in estimating the optimal weight matrix for the SD-CSE when targeting higher quantiles, as discussed after Proposition~\ref{prop:opt_W}.
To address this potential issue, we apply the calibration of $\hat Y$ proposed there for $\tau \in \{ 0.5, 0.75, 0.9 \}$.
\begin{figure}[htbp]
    \centering 
    \includegraphics[width=0.7\textwidth]{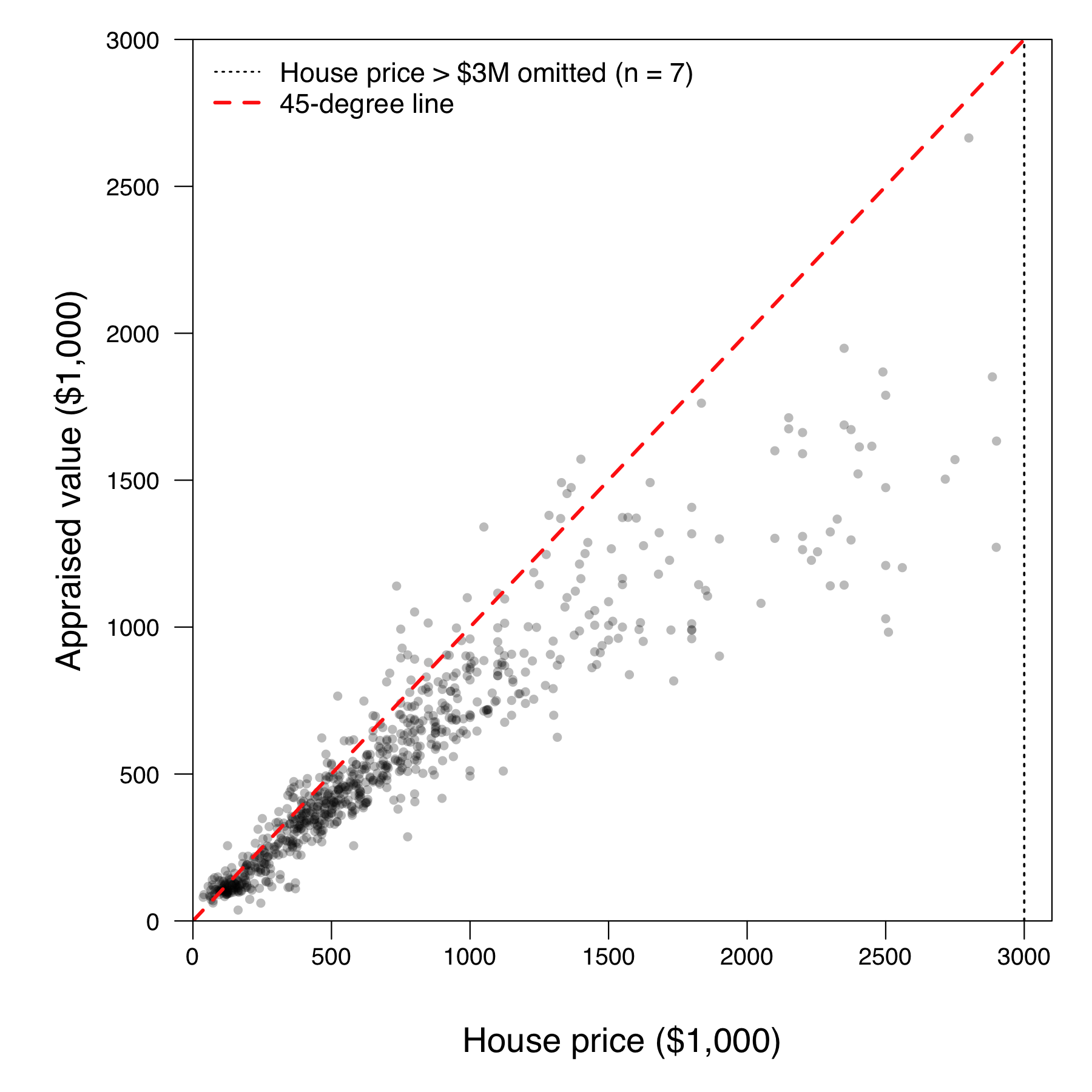}
    \caption{House price versus appraised value.}
    \label{fig:price_vs_appraised}
\end{figure}

We compare the proposed methods---the SD-CSE with the estimated optimal weight matrix (``SD-OPT''), the PTD-CSE with the estimated optimal weight matrix (``PTD-OPT''), and their ensemble (``ENS'')---with convolution-smoothed quantile regression applied only to the labeled sample (``LAB'').
The implementation details are the same as those in Section \ref{sec:simulation}, except for the calibration described in the previous paragraph.
Figure \ref{fig:school_and_school_size_ci} reports the confidence intervals and point estimates for the coefficients on ``school'' and its interaction with ``size,'' respectively, while Table \ref{tab:ci_lengths_combined} reports the corresponding interval lengths.
According to Table~\ref{tab:ci_lengths_combined}, both ``SD-OPT'' and ``PTD-OPT'' consistently outperform ``LAB.''
``ENS'' performs close to the better of ``SD-OPT'' and ``PTD-OPT,'' outperforming ``LAB'' and providing a safeguard against the worse of the two. In terms of interpretation, because ``size'' is standardized, the coefficient on ``school'' represents the estimated school-district premium for an average-size house, holding the other covariates fixed.
Figure~\ref{fig:school_and_school_size_ci} suggests that this premium is positive and larger for more expensive homes, while the positive upper-quantile coefficients on ``school $\times$ size'' indicate a larger premium among 
larger homes.

\begin{figure}[htbp]
\centering
\includegraphics[width=\textwidth]{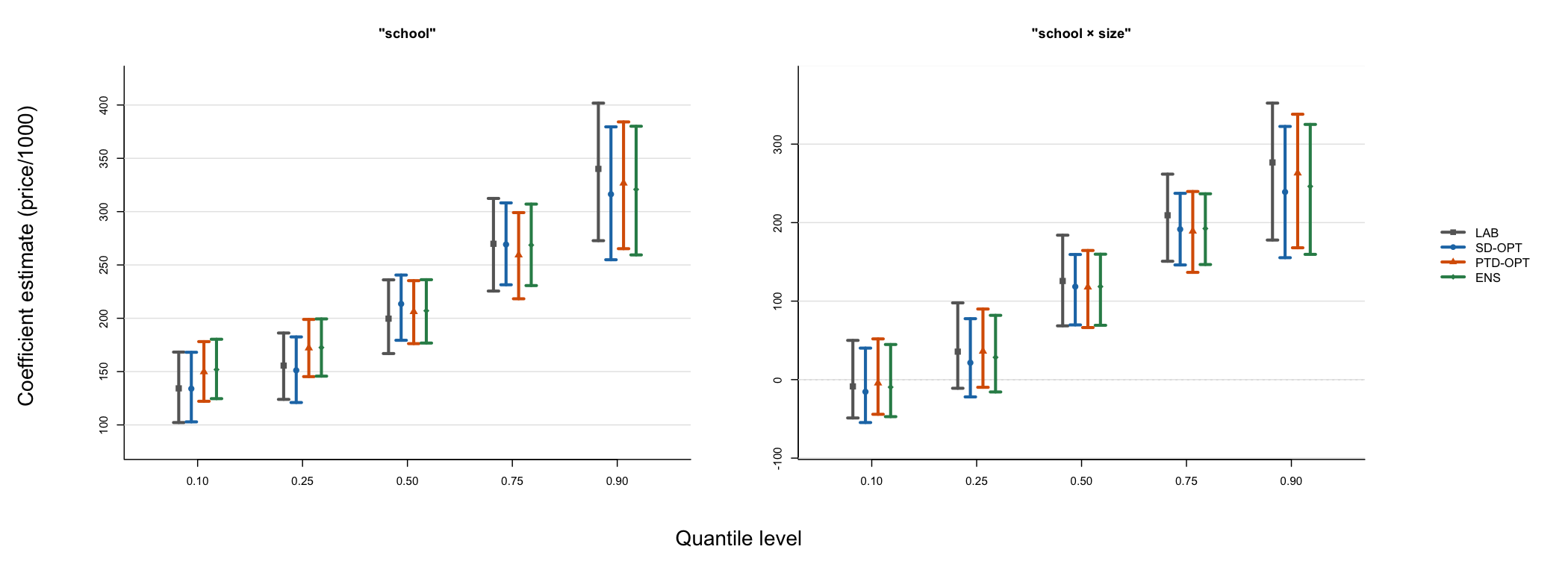}
\caption{Point estimates and 95\% confidence intervals across quantile levels for the coefficients on \texttt{school} and \texttt{school \(\times\) size}.}
\label{fig:school_and_school_size_ci}
\end{figure}

\begin{table}[htbp]
\centering
\caption{Lengths of 95\% confidence intervals (price/1000)}
\label{tab:ci_lengths_combined}
\begin{tabular}{llccccc}
\hline
Coefficient & Method & \multicolumn{5}{c}{Quantile level ($\tau$)} \\
\cline{3-7}
 &  & 0.10 & 0.25 & 0.50 & 0.75 & 0.90 \\
\hline
school & LAB     & 66.08 & 62.24 & 69.12 & 86.86 & 129.00 \\
       & SD-OPT  & 65.26 & 61.53 & 61.19 & 76.80 & 124.63 \\
       & PTD-OPT & 55.96 & 53.66 & 59.06 & 80.81 & 118.90 \\
       & ENS     & 55.71 & 53.69 & 59.41 & 76.40 & 120.66 \\
\hline
school $\times$ size & LAB     & 98.84 & 108.78 & 115.49 & 110.97 & 174.45 \\
                     & SD-OPT  & 94.82 & 99.64  & 89.72  & 91.24  & 167.20 \\
                     & PTD-OPT & 96.06 & 99.72  & 98.19  & 103.01 & 170.12 \\
                     & ENS     & 91.91 & 97.61  & 90.58  & 90.13  & 165.52 \\
\hline
\end{tabular}

\medskip
    \parbox{0.9\linewidth}{\footnotesize Note: Each entry reports the length of 95\% confidence interval for the corresponding method, quantile and coefficient.}
    
\end{table}

\section{Conclusion}\label{sec:conclusiont}
In this paper, we study quantile regression in settings where the gold-standard outcome is observed only for a small labeled sample, while surrogate variables for that outcome are abundant.
A straightforward application of the existing ``prediction-powered inference'' method has two drawbacks: computational difficulty and potential overcoverage of the resulting confidence intervals.
Our approach addresses these issues by smoothing the check loss in quantile regression, which leads to more tractable score functions and mitigates overcoverage.
We establish the asymptotic distributions of the resulting estimators under possible model misspecification and further propose a heuristic ensemble that combines them.
Through simulation studies and a data application, we find that the proposed methods generally shorten confidence intervals, while maintaining reliable coverage in the setting considered.

Several interesting directions remain for future research. First, although the calibration step used to stabilize estimation of the optimal weight matrix in SD-CSE appears to work reasonably well, it is rather heuristic. It would be useful to develop a more systematic approach to post-processing surrogate predictions, not only to address matrix-inversion issues but also to improve the overall performance of the prediction-powered inference  more broadly. Second, our theory assumes that the labeled and unlabeled samples are drawn from the same population. In many applications, however, labeled data may be collected under a different sampling mechanism, leading to distributional shifts between the labeled and unlabeled samples. Developing valid and efficient inferential procedures for such settings is therefore an important practical direction.

\clearpage
\appendix
\section*{Appendix}
\addcontentsline{toc}{section}{Appendix}
Throughout the appendix, we use the following notation.
Let $\|\cdot\|$ denote the Euclidean norm.
For any real sequences $a_n$ and $b_n$, we write $a_n \lesssim b_n$ if $a_n \leq \mathcal{C} b_n$ for some constant $\mathcal{C}$ independent of $n$.
For any random variable $U$ and random vector $V$, let $f_{U \mid V}$ and $F_{U \mid V}$ denote the conditional density function and conditional distribution function of $U$ given $V$, respectively.
All limits are taken as $n, N \to \infty$.
We also use the following empirical process notation. 
For any random variable $Z$ and
a measurable function $g$, $\mbb P_n$, $\mbb P_N$, and $P$ are operators such that
\[
    \mbb P_n g(Z) = \frac{1}{n} \sum_{i = 1}^n g(Z_i), \quad
    \mbb P_N g(Z) = \frac{1}{N} \sum_{i = n+1}^{n+N} g(Z_i), \quad
    Pg(Z) = \mbb E[g(Z)].
\]
Furthermore, operators $\mbb G_n$ and $\mbb G_N$ are defined as 
\begin{equation}
    \mbb G_n g(Z) = \sqrt{n} (\mbb P_n - P) g(Z), \quad \mbb G_N g(Z) = \sqrt{N} (\mbb P_N - P) g(Z).
\end{equation}

\mathtoolsset{showonlyrefs=true}
\numberwithin{equation}{section}

\section{Proofs}\label{app:proofs}

\begin{proof}[Proof of Proposition \ref{prop:rate_PPI}]
    The proof is divided into two steps. 
    First we show the consistency of $\hat \beta_{\text{SD}, \hat W}$: $\hat \beta_{\text{SD}, \hat W} \to_p \beta_0$.
    Then, we establish the convergence rate built on this consistency result.

    \underline{$(i)$ consistency}
    By Theorem 5.9 of \cite{vaart1998}, it suffices to check
    $(a) \sup_{\beta \in \mbb R^p} \| \hat \Psi_{h_n, \hat W} (\beta) - \mbb E[X(\mbb I(Y - X^{\top}\beta < 0) - \tau)] \| \to_p 0$,
    and $(b) \inf_{\beta: \| \beta - \beta_0 \| \geq \ve} \| \mbb E[X (\mbb I(Y - X^{\top}\beta < 0) - \tau)] \| > 0 = \| \mbb E[X (\mbb I(Y - X^{\top}\beta_0 < 0) - \tau)] \|$
    for any $\ve > 0$. Part $(a)$ is shown in Lemma \ref{lem:uniform consistency}.
    So it remains to prove part $(b)$. 
    Observe that $\| \mbb E[X(\mbb I(Y - X^{\top}\beta_0 < 0) - \tau)] \| = 0$ by Assumption \ref{as:identification} and Lemma \ref{lem:grad_beta}.
    Hence, we only need to show the inequality part.
    By Lemma \ref{lem:grad_beta}, 
    $\nabla_{\beta} \mbb E[\rho_{\tau} (Y - X^{\top}\beta) - \rho_{\tau}(Y)] = \mbb E[X(\mbb I(Y - X^{\top}\beta < 0) - \tau)]$.
    Furthermore, by the proof of Lemma \ref{lem:root}, $\nabla_{\beta} \mbb E[X(\mbb I(Y - X^{\top}\beta < 0) - \tau)] = \mbb E[X X^{\top} f_{Y|X} (X^{\top}\beta)]$,
    which is positive semi-definite.
    Hence, applying Lemma \ref{lem:convex} with $M(\beta) = \mbb E [\rho_{\tau} (Y - X^{\top}\beta)-\rho_{\tau}(Y)]$
    and $D(\beta) = \mbb E[X(\mbb I(Y - X^{\top}\beta < 0) - \tau)]$ in conjunction with Assumption \ref{as:identification},
    the desired inequality part is shown. Therefore, $\hat \beta_{\text{SD}, \hat W} \to_p \beta_0$ follows.

    \underline{$(ii)$ convergence rate} 
    By Lemma \ref{lem:root}, $\hat \Psi_{h_n, \hat W} (\hat \beta_{\text{SD}, \hat W}) = 0$
    with probability approaching one. 
    Consequently, $A_1 + A_2 + A_3 = 0$ holds with probability approaching one, where 
    \begin{equation}
        \begin{split}
            A_1 &:= \hat \Psi_{h_n, \hat W} (\hat \beta_{\text{SD}, \hat W}) - \mbb E[X (\mc K_{h_n} (X^{\top} \beta - Y) - \tau)]\mid_{\beta = \hat \beta_{\text{SD}, \hat W}}, \\
            A_2 &:= \mbb E[X(\mc K_{h_n}(X^{\top}\beta - Y) - \mc K_{h_n} (X^{\top}\beta_0 - Y))]\mid_{\beta = \hat \beta_{\text{SD}, \hat W}}, \\
            A_3 &:= \mbb E[X(\mc K_{h_n}(X^{\top}\beta_0 - Y) - \tau)].
        \end{split}
    \end{equation}
    By the proof of Lemmas \ref{lem:uniform consistency} and the assumption that $n/N = O(1)$, $A_1 = O_p(n^{-1/2})$, while $A_3 = O(h_n^2)$ follows from Lemma \ref{lem:rate_grad_smooth}.
    By Assumption \ref{as:h}, both $A_1$ and $A_3$ are $o_p(h_n)$.
    Thus, $h_n^{-1} A_2 = o_p(1)$.
    Letting $\hat \theta_{h_n} := (\hat \beta_{\text{SD}, \hat W} - \beta_0)/h_n$ for brevity,
    the latter convergence can be written as 
    \begin{equation}\label{prop:rate_PPI:score_op1}
        h_n^{-1} \mbb E [ X \{ \mc K ( (X^{\top}\beta_0 - Y)/h_n + X^{\top}\theta ) - \mc K ( (X^{\top}\beta_0 - Y)/h_n ) \} ]\mid_{\theta = \hat \theta_{h_n}} = o_p(1).
    \end{equation}
    By the fundamental theorem of calculus, the left side can be evaluated as
    \begin{equation}\label{prop:rate_PPI:score_calculus}
        \begin{split}
            &h_n^{-1} \mbb E [ X \{ \mc K ( (X^{\top}\beta_0 - Y)/h_n + X^{\top}\theta ) - \mc K ( (X^{\top}\beta_0 - Y)/h_n ) \} ] \\
            = \ & \mbb E \left[ X X^{\top} \int^{1}_0 h_n^{-1} \mbb E[K((X^{\top}\beta_0 - Y)/h_n + t X^{\top}\theta)|X] dt \right] \theta.
        \end{split}
    \end{equation}
    For the conditional expectation in the integral, the change of variable gives that 
    \begin{equation}
        \begin{split}
            h_n^{-1} \mbb E[K((X^{\top}\beta_0 - Y)/h_n + t X^{\top}\theta)|X] 
            &= \int^{\infty}_{-\infty} K(v) f_{Y|X}(-h_n v + h_n t X^{\top}\theta + X^{\top}\beta_0) dv \notag \\
            &= f_{Y|X} (X^{\top}\beta_0) + R(X, \theta, h_n, t),
        \end{split}
    \end{equation}
    where $R(X, \theta, h_n, t)$ is a function such that $|R(X, \theta, h_n, t)| \leq h_n l_{Y} \kappa_1 + h_n t \| X \| \| \theta \|$ by Assumption \ref{as:density} 
    and \ref{as:kernel}. Plugging this into \eqref{prop:rate_PPI:score_op1} and \eqref{prop:rate_PPI:score_calculus}, we observe that 
    \begin{equation}
        \left\{ \mbb E[f_{Y|X}(X^{\top}\beta_0) X X^{\top}] + \mbb E\left[X X^{\top} \int^1_0 R(X, \theta, h_n, t)dt\right]\mid_{\theta = \hat \theta_{h_n}} \right\} \hat \theta_{h_n} = o_p(1).
    \end{equation}
    Note that the absolute value of each entry of the matrix $\mbb E\left[X X^{\top} \int^1_0 R(X, \theta, h_n, t)dt\right]\mid_{\theta = \hat \theta_{h_n}} $ is bounded 
    by $h_n l_Y \kappa_1 \mbb E[\| X \|^2] + \mbb E[\| X \|^3] \| \hat \beta_{\text{SD}, \hat W} - \beta_0 \|$, which is $o_p(1)$ because of the consistency of $\hat \beta_{\text{SD}, \hat W}$.
    Consequently, it follows that $\{ \mbb E[f_{Y|X} (X^{\top}\beta_0) X X^{\top}] + o_p(1) \} \hat \theta_{h_n} = o_p(1),$ by which
    $\hat \theta_{h_n} = o_p(1)$ holds under Assumption \ref{as:moment}.
    This completes the proof.
\end{proof}

\begin{proof}[Proof of Proposition \ref{prop:asy_normal_LD}]
    The proof is divided into two steps. In the first step, we shows the asymptotic linear expansion.
    \begin{equation}
        \begin{split}
        \sqrt{n} (\hat \beta_{\text{SD}, \hat W} - \beta_0) = \ &\mbb E [f_{Y|X}(X^{\top}\beta_0) X X^{\top}]^{-1}
        \sqrt{n} \mbb P_n X (\tau - \mc K_{h_n} (X^{\top}\beta_0 - Y)) \\
        &- \mbb E [f_{Y|X}(X^{\top}\beta_0) X X^{\top}]^{-1} \hat W  \mbb G_n X (\tau - \mc K_{h_n} (X^{\top}\beta_0 - \hat Y)) \\
        &+  \mbb E [f_{Y|X}(X^{\top}\beta_0) X X^{\top}]^{-1}  \hat W \sqrt{\frac{n}{N}}
        \mbb G_N X (\tau - \mc K_{h_n} (X^{\top}\beta_0 - \hat Y)) + o_p(1).
        \end{split}
    \end{equation}
    Subsequently, we establish the asymptotic normality in the second step.

    \underline{$(i)$ the asymptotic linear expansion}
    This step largely follows that of Theorem 4.2 of \cite{HePanTanZhou2023joe}.
    Define the vector-valued random process $( \Delta (\delta, W))_{\delta \in \mbb R^p, W \in \mbb R^{p \times p}}$ as 
    \begin{equation}
        \Delta (\delta, W) := \hat \Psi_{h_n, W} (\beta_0 + \delta) - \hat \Psi_{h_n, W} (\beta_0) 
        - \mbb E[f_{Y|X}(X^{\top}\beta_0) X X^{\top}] \delta.
    \end{equation}
    By Proposition \ref{prop:rate_PPI} and the assumption on $\hat W$, we can choose a sequence $r_n$ converging to zero 
    such that $\| \hat \beta_{\text{SD}, \hat W} - \beta_0 \| \leq r_n h_n$
    and $\| \hat W - W_0 \| \leq r_n$ with probability approaching one.
    Define $\Theta_n := \{ (\delta, W) : \| \delta \| \leq r_n h_n, \ \text{and} \ \|W  - W_0\| \leq r_n \}$.
    We find an upper bound for $\sup_{(\delta, W) \in \Theta_n} \|\Delta(\delta, W)\|$.
    By the triangle inequality,
    \begin{equation}
        \sup_{(\delta, W) \in \Theta_n} \| \Delta(\delta, W)  \| \leq
        \sup_{(\delta, W) \in \Theta_n} \| \mbb E [\Delta (\delta, W)] \| 
        + \sup_{(\delta, W) \in \Theta_n} \| \Delta(\delta, W) - \mbb E[\Delta (\delta, W)] \|.
    \end{equation}
    We bound the two terms on the right side.

    \underline{$(i$-$a)$ $ \sup_{(\delta, W) \in \Theta_n} \| \mbb E [\Delta (\delta, W)] \|$}
    By the definition of $\Delta(\delta, W)$,
    \begin{equation}\label{prop:asy_normal_LD:Delta_expectation}
        \begin{split}
        \mbb E[\Delta(\delta, W)] = \ &\mbb E[X\{\mc K_{h_n} (X^{\top}(\beta_0 + \delta) - Y)
        - \mc K_{h_n} (X^{\top}\beta_0 - Y)\}] - \mbb E[K_{h_n}(Y - X^{\top}\beta_0) X X^{\top}] \delta \\
        &+ (\mbb E[f_{Y|X}(X^{\top}\beta_0)X X^{\top}] - \mbb E[K_{h_n}(Y-X^{\top}\beta_0) X X^{\top}])\delta.
        \end{split}
    \end{equation}
    By the argument leading to (C.36) in the proof of Theorem 4.2 of \cite{HePanTanZhou2023joe},
    observe that
    \begin{equation}\label{prop:asy_normal_LD:kernel_expansion}
        \sup_{(\delta, W) \in \Theta_n} \| \mbb E[X\{\mc K_{h_n}(X^{\top}(\beta_0 + \delta) - Y) - \mc K_{h_n}(X^{\top}\beta_0 - Y) \}]
        - \mbb E[K_{h_n}(Y - X^{\top}\beta_0) X X^{\top}]\delta \| \lesssim r_n^2 h_n^2.
    \end{equation}
    Meanwhile, we can evaluate $(\mbb E[f_{Y|X}(X^{\top}\beta_0)X X^{\top}] - \mbb E[K_{h_n}(Y-X^{\top}\beta_0) X X^{\top}])\delta$ as 
    \begin{equation}
        \begin{split}
        &\| (\mbb E[f_{Y|X}(X^{\top}\beta_0)X X^{\top}] - \mbb E[K_{h_n}(Y-X^{\top}\beta_0) X X^{\top}])\delta \| \\
        \leq \ &\mbb E \left[ \| X \|^2 \| \delta \| \int^{\infty}_{-\infty} K(u) |f_{Y|X}(X^{\top}\beta_0) - f_{Y|X}(X^{\top}\beta_0 + h_nv)|dv \right]
        \leq h_n \| \delta \| l_Y \kappa_1 \mbb E[\| X \|^2],
        \end{split}
    \end{equation}
    where the last inequality follows from Assumption \ref{as:density} and \ref{as:kernel}.
    Hence, we obtain 
    \begin{equation}\label{prop:asy_normal_LD:density_kernel_diff}
        \sup_{(\delta, W) \in \Theta_n} \| (\mbb E[f_{Y|X}(X^{\top}\beta_0)X X^{\top}] - \mbb E[K_{h_n}(Y-X^{\top}\beta_0) X X^{\top}])\delta \|
        \lesssim r_n h_n^2.
    \end{equation}
    Combining \eqref{prop:asy_normal_LD:Delta_expectation}, \eqref{prop:asy_normal_LD:kernel_expansion}
    and \eqref{prop:asy_normal_LD:density_kernel_diff}, it follows from Assumption \ref{as:h} that
    $ \sup_{(\delta, W) \in \Theta_n} \| \mbb E [\Delta (\delta, W)] \| = o(n^{-1/2}).$

    \underline{$(i$-$b)$ $ \sup_{(\delta, W) \in \Theta_n} \| \Delta(\delta, W) - \mbb E [\Delta (\delta, W)] \|$}
    First, by the definition of $\Delta(\delta, W)$,
    \begin{equation}\label{prop:asy_normal_LD:three_empirical_processes}
        \begin{split}
            \Delta (\delta, W) - \mbb E[\Delta(\delta, W)] = 
            \ &(\mbb P_n - P) [ X \{ \mc K_{h_n}(X^{\top}(\beta_0 + \delta) - Y) - \mc K_{h_n}(X^{\top}\beta_0 - Y)\} ] \\
            &- W \left\{ (\mbb P_n - P) [X\{ \mc K_{h_n} (X^{\top}(\beta_0 + \delta) - \hat Y) - \mc K_{h_n} (X^{\top}\beta_0 - \hat Y) \}] \right. \\
            &- \left. (\mbb P_N - P) [X\{ \mc K_{h_n} (X^{\top}(\beta_0 + \delta) - \hat Y) - \mc K_{h_n} (X^{\top}\beta_0 - \hat Y) \}] \right\}.
        \end{split}
    \end{equation}
    We bound the supremum of each empirical process term on the right side. 
    For the first term, consider a class of function
    $\mc G_n := \{ (Y, X) \mapsto X\{ \mc K_{h_n}(X^{\top}(\beta_0 + \delta) - Y) - \mc K_{h_n}(X^{\top}\beta_0 - Y) \}: (\delta, W) \in \Theta_n \}$.
    Without the loss of generality, we may assume that any $g$ in $\mc G_n$ is a real-valued function.
    By Assumption \ref{as:kernel}, the mean value theorem implies that, for any $(\delta, W) \in \Theta_n$,
    $|X\{ \mc K_{h_n} (X^{\top}(\beta_0 + \delta) - Y) - \mc K_{h_n} (X^{\top}\beta_0 - Y) \}| \leq C_{K} r_n \| X \|^2$
    for some finite constant $C_K$ that only depends on the kernel $K(\cdot)$.
    Hence $G_n(Y, X) := C_{K}r_n \| X \|^2$ is qualified as an envelop function for $\mc G_n$.
    Furthermore, by a similar argument to the proof of Lemma \ref{lem:uniform consistency},
    we can bound the covering number of $\mc G_n$ as
    $N(\ve \| G_n \|_{Q, 2}, \mc G_n, L_2(Q)) \leq C \ve^{-4p - 6} \ (0 < \ve < 1)$, for any 
    probability measure $Q$ where $C$ is a constant independent of $Q$ and $n$.
    It now follows from Theorem 2.14.1 of \cite{VaartWellner2023} and Assumption \ref{as:moment} that 
    $\mbb E[\sup_{g \in \mathcal G_n} \| (\mathbb P_n - P)g \|] \lesssim r_n n^{-1/2}$.
    By Markov's ineqality, we obtain $\sup_{g \in \mathcal G_n} \| (\mbb P_n - P) g \| = o_p(n^{-1/2})$.
    Applying the same argument to the suprema of the second and the third empirical proccess terms on the 
    ride side of \eqref{prop:asy_normal_LD:three_empirical_processes}, we observe that 
    $ \sup_{(\delta, W) \in \Theta_n} \| \Delta(\delta, W) - \mbb E [\Delta (\delta, W)] \| = o_p(n^{-1/2})$.

    Combining substeps $(i$-$a)$ and $(i$-$a)$, we observe that $\sup_{(\delta, W) \in \Theta_n} \| \Delta (\delta, W) \| = o_p(n^{-1/2})$.
    By the definition of $\hat \beta_{\text{SD}, \hat W}$, $\hat W$, and $\Theta_n$, in conjuction with Assumption \ref{as:moment},
    we have established the asymptotic linear expansion.
    \begin{equation}\label{prop:asy_normal_LD:expansion}
        \begin{split}
        \sqrt{n} (\hat \beta_{\text{SD}, \hat W} - \beta_0) = \ &\mbb E [f_{Y|X}(X^{\top}\beta_0) X X^{\top}]^{-1}
        \sqrt{n} \mbb P_n X (\tau - \mc K_{h_n} (X^{\top}\beta_0 - Y)) \\
        &- \mbb E [f_{Y|X}(X^{\top}\beta_0) X X^{\top}]^{-1} \hat W  \mbb G_n X (\tau - \mc K_{h_n} (X^{\top}\beta_0 - \hat Y)) \\
        &+  \mbb E [f_{Y|X}(X^{\top}\beta_0) X X^{\top}]^{-1} \hat W \sqrt{\frac{n}{N}}
        \mbb G_N X (\tau - \mc K_{h_n} (X^{\top}\beta_0 - \hat Y)) + o_p(1).
        \end{split}
    \end{equation}

    \underline{$(ii)$ the asymptotic normality}
    We first evaluate the smoothing bias on the expansion \eqref{prop:asy_normal_LD:expansion}.
    Namely, we show the three convergences: 
    $(ii$-$a)$ $\sqrt{n} \mbb P_n X(\mathbb I(Y - X^{\top}\beta_0 < 0) - \mc K_{h_n} (X^{\top}\beta_0 - Y)) = o_p(1)$,
    $(ii$-$b)$ $\mbb G_n X (\mbb I(\hat Y - X^{\top}\beta_0 < 0) - \mc K_{h_n} (X^{\top}\beta_0 - \hat Y)) = o_p(1)$,
    and
    $(ii$-$c)$ $\mbb G_N X (\mbb I(\hat Y - X^{\top}\beta_0 < 0) - \mc K_{h_n} (X^{\top}\beta_0 - \hat Y)) = o_p(1)$.

    For $(ii$-$a)$, we have the decomposition:
    \begin{equation}
        \begin{split}
        &\sqrt{n} \mbb P_n X(\mathbb I(Y - X^{\top}\beta_0 < 0) - \mc K_{h_n} (X^{\top}\beta_0 - Y)) \\
        = \ &\mbb G_n X(\mathbb I(Y - X^{\top}\beta_0 < 0) - \mc K_{h_n} (X^{\top}\beta_0 - Y)) 
        + \sqrt{n} \mbb E[X(\mathbb I(Y - X^{\top}\beta_0 < 0) - \mc K_{h_n} (X^{\top}\beta_0 - Y))].
        \end{split}
    \end{equation}
    For the empirical process term, a straightforward calculation using
    Markov's inequality gives that, for any $\ve > 0$,
    \begin{equation}
        \mbb P( \| \mbb G_n X(\mathbb I(Y - X^{\top}\beta_0 < 0) - \mc K_{h_n} (X^{\top}\beta_0 - Y)) \| > \ve)
        \leq \mbb E[\| X \|^2 (\mbb I (Y - X^{\top}\beta_0 < 0) - \mc K_{h_n} (X^{\top}\beta_0 - Y))^2]/\ve^2.
    \end{equation}
    By Assumption \ref{as:density}, $\mbb P(Y = X^{\top}\beta_0) = 0$. Hence the dominated convergence theorem yields that the right side of the above display 
    converges to zero, by which $\mbb G_n X(\mathbb I(Y - X^{\top}\beta_0 < 0) - \mc K_{h_n} (X^{\top}\beta_0 - Y)) = o_p(1)$ follows.
    For the expectation term, the proof of Lemma \ref{lem:rate_grad_smooth} shows that 
    $\mbb E[X(\mathbb I(Y - X^{\top}\beta_0 < 0) - \mc K_{h_n} (X^{\top}\beta_0 - Y))] = O(h^2).$
    In light of Assumption \ref{as:h}, 
    $\sqrt{n} \mbb E[X(\mathbb I(Y - X^{\top}\beta_0 < 0) - \mc K_{h_n} (X^{\top}\beta_0 - Y))] = o(1)$ follows.
    Consequently, the convergence $(ii$-$a)$ holds.

    For $(ii$-$b)$, by a similar argument to that using Markov's inequality above, 
    it suffices to show that
    $ \mbb E[\| X \|^2 (\mbb I(\hat Y - X^{\top}\beta_0 < 0) - \mc K_{h_n}(X^{\top}\beta_0 - \hat Y))^2] = o(1)$.
    However, this holds from the dominated convergence theorem in conjunction with Assumption \ref{as:y_hat}.
    We can show $(ii$-$c)$ analogously. 

    By $(ii$-$a)$, $(ii$-$b)$, and $(ii$-$c)$, in conjunction with $\hat W = O_p(1)$ and $n/N = O(1)$,
    the expansion \eqref{prop:asy_normal_LD:expansion} still holds even if 
    the kernel approximation $\mc K_{h_n}$ is replaced with the corresponding 
    indicator functions.
    Furthermore, because the resulting three empirical process terms on the right side of \eqref{prop:asy_normal_LD:expansion}
     are all $O_p(1)$,
    $\hat W \to_p W_0$ and $n/N \to r$ give that
    \begin{equation}
        \begin{split}
        \sqrt{n}(\hat \beta_{PPI, \hat W} - \beta_0) = \ &\mbb E[f_{Y|X}(X^{\top}\beta_0)XX^{\top}]^{-1} 
        \mathbb G_n X (\tau - \mathbb I(Y - X^{\top}\beta_0 < 0)) \\
        & -  \mathbb E[f_{Y|X}(X^{\top}\beta_0) X X^{\top}]^{-1}  W_0 \mathbb G_n X (\tau - \mathbb I (\hat Y - X^{\top}\beta_0 < 0)) \\
        & +  \mathbb E[f_{Y|X}(X^{\top}\beta_0) X X^{\top}]^{-1} W_0 \sqrt{r} \mathbb G_N X (\tau - \mathbb I(\hat Y - X^{\top}\beta_0 < 0)) + o_p(1).
        \end{split}
    \end{equation}
    Because the empirical processes $\mathbb G_n$ and $\mathbb G_N$ are independent, we obtain the desired result.
\end{proof}

\begin{proof}[Proof of Proposition \ref{prop:opt_W}]
    By Assumption \ref{as:moment}, it suffices to show that 
    $\Lambda (W^*) \preceq \Lambda (W)$ for any $W \in \mathbb R^{p \times p}$.
    We fix $W \in \mathbb R^{p \times p}$ arbitrarily.
    Observe that
    $\Lambda(W) = \Lambda_{\text{lab}} + \Omega(W) - (1+r)^{-1} \Lambda_{\text{cov}} \Lambda^{-1}_{\text{unlab}} \Lambda_{\text{cov}}$, where
    $\Omega(W)$ equals
    \begin{equation}
        \left\{ W (1+r)^{1/2} \Lambda_{\text{unlab}}^{1/2} - 
        (1+r)^{-1/2} \Lambda_{\text{cov}} \Lambda_{\text{unlab}}^{-1/2}\right\}
        \left\{ W (1+r)^{1/2} \Lambda_{\text{unlab}}^{1/2} - 
        (1+r)^{-1/2} \Lambda_{\text{cov}} \Lambda_{\text{unlab}}^{-1/2}\right\}^{\top}.
    \end{equation}
    Hence, $W = (1+r)^{-1} \Lambda_{\text{cov}} \Lambda_{\text{unlab}}^{-1}$ gives that $\Omega(W) = 0$, which 
    leads to the desired result.
\end{proof}

\begin{proof}[Proof of Proposition \ref{prop:asy_normal_PTD}]
    Observe that 
    \begin{equation}
        \sqrt{n} (\hat \beta_{\text{SD}, \hat W} - \beta_0) = \sqrt{n} (\hat \beta_{\text{lab}} - \beta_0) 
        - \hat W \left\{ \sqrt{n} (\hat \gamma_{\text{lab}} - \gamma_0) - \sqrt{n} (\hat \gamma_{\text{unlab}} - \gamma_0) \right\}.
    \end{equation}
    We establish the asymptotic linear expansions for $\sqrt{n} (\hat \beta_{\text{lab}} - \beta_0) $,
    $\sqrt{n} (\hat \gamma_{\text{lab}} - \gamma_0)$, and $\sqrt{n} (\hat \gamma_{\text{unlab}} - \gamma_0)$.

    For $\sqrt{n}(\hat \beta_{\text{lab}} - \beta_0)$, note that $\nabla_b \ell_{h_n} (b; y, x) = \psi_{h_n}(b; y, x)$.
    Due to this fact, the convexity of $b \mapsto \ell_{h_n} (b; y, x)$, and 
    Lemma \ref{lem:root} with $W = 0$, we have that 
    $\hat \Psi_{h_n, 0} (\hat \beta_{\text{lab}}) = 0$ with probability approaching one.
    Hence, repeating the argument in the proofs of Propositions \ref{prop:rate_PPI} and \ref{prop:asy_normal_LD},
    we obtain the asymptotic linear expansion:
    \begin{equation}
        \sqrt{n} (\hat \beta_{\text{lab}} - \beta_0) = \mbb E[f_{Y|X} (X^{\top}\beta_0) X X^{\top}]^{-1} \mbb G_n  X (\tau - \mbb I (Y - X^{\top}\beta_0 < 0)) + o_p(1).
    \end{equation}

    By the analogous argument, now employing Assumptions \ref{as:identification_yhat}, \ref{as:density_yhat}, and \ref{as:moment_yhat}, 
    we have the expansions for $\sqrt{n} (\hat \gamma_{\text{lab}} - \gamma_0)$ and 
    $\sqrt{n} (\hat \gamma_{\text{unlab}} - \gamma_0)$ as

    \begin{equation}
        \begin{split}
        \sqrt{n} (\hat \gamma_{\text{lab}} - \gamma_0) &= \mbb E[f_{\hat Y| X} (X^{\top}\gamma_0) X X^{\top}]^{-1} 
        \mbb G_n X (\tau - \mbb I (\hat Y - X^{\top}\gamma_0 < 0)) + o_p(1), \\
        \sqrt{n} (\hat \gamma_{\text{unlab}} - \gamma_0) &= \sqrt{r} \mbb E[f_{\hat Y| X} (X^{\top}\gamma_0) X X^{\top}]^{-1} 
        \mbb G_N X (\tau - \mbb I (\hat Y - X^{\top}\gamma_0 < 0)) + o_p(1).
        \end{split}
    \end{equation}
    By these three expansions in conjunction with the fact that $\hat W \to_p W_0$, we observe that 
    \begin{equation}
        \begin{split}
            \sqrt{n} (\hat \beta_{\text{SD}, \hat W} - \beta_0)
            = \ &\mbb G_n 
            \left\{ \mbb E[f_{Y|X} (X^{\top}\beta_0) X X^{\top}]^{-1}  X (\tau - \mbb I (Y - X^{\top}\beta_0 < 0))   \right. \\
            &\left. - W_0 \mbb E[f_{\hat Y| X} (X^{\top}\gamma_0) X X^{\top}]^{-1} 
            X (\tau - \mbb I (\hat Y - X^{\top}\gamma_0 < 0)) \right\} \\
            &+ \mbb G_N \sqrt{r} W_0 \mbb E[f_{\hat Y| X} (X^{\top}\gamma_0) X X^{\top}]^{-1} 
            X (\tau - \mbb I (\hat Y - X^{\top}\gamma_0 < 0)).
        \end{split}
    \end{equation}
    Because two empirical processes $\mbb G_n$ and $\mbb G_N$ are independent, we obtain the desired result.
\end{proof}

\section{Supporting Lemmas}\label{app:support}

\begin{lemma}\label{lem:root}
    Assume the assumption of Proposition \ref{prop:rate_PPI}.
    Then, if $\hat W = O_p(1)$, there exists $\hat \beta_{\text{SD}, \hat W} \in \mbb R^{p}$ 
    such that $\hat \Psi_{h_n, \hat W} (\hat \beta_{\text{SD}, \hat W}) = 0$ with probability approaching one.
\end{lemma}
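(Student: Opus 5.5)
We will prove existence of a root with a degree/fixed-point argument on a small ball around $\beta_0$. The map $\beta\mapsto \hat\Psi_{h_n,\hat W}(\beta)$ is continuous, since $\mc K$ is continuous by Assumption \ref{as:kernel}. Our route is to show that, with probability approaching one, it points outward on a sphere $\{\|\beta-\beta_0\|=\ve\}$ with respect to a positive definite matrix. A Poincar\'e--Miranda / Brouwer-type argument then gives a zero inside.

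\textbf{Step 1: population limit and its derivative.} Let $D(\beta):=\mbb E[X(\mbb I(Y-X^{\top}\beta<0)-\tau)]$. This is the gradient of the convex function $M(\beta)=\mbb E[\rho_\tau(Y-X^{\top}\beta)-\rho_\tau(Y)]$ (Lemma \ref{lem:grad_beta}). By Assumption \ref{as:density}, $D$ is continuously differentiable with Jacobian $\mbb E[f_{Y|X}(X^{\top}\beta)XX^{\top}]$. The density is Lipschitz and $\mbb E\|X\|^3<\infty$, so this Jacobian is continuous in $\beta$. Since it equals $H\succ0$ at $\beta_0$ (Assumption \ref{as:moment}), it has smallest eigenvalue at least $\lambda_{\min}(H)/2$ on a ball $B_\ve=\{\|\beta-\beta_0\|\le\ve\}$ for small $\ve$. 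Using $D(\beta_0)=0$ and the mean value theorem, for $\|\beta-\beta_0\|=\ve$,
\[
(\beta-\beta_0)^{\top}D(\beta)\ge \tfrac{\lambda_{\min}(H)}{2}\ve^2=:c_\ve>0 .
\]

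\textbf{Step 2: uniform approximation.} By the uniform convergence established in Lemma \ref{lem:uniform consistency} (which uses $\hat W=O_p(1)$ and $n/N=O(1)$), $\sup_{\beta}\|\hat\Psi_{h_n,\hat W}(\beta)-D(\beta)\|\to_p0$. The parts of that argument needed here concern only the smoothing bias $\mbb E[X(\mc K_{h_n}(X^{\top}\beta-Y)-\mbb I(Y<X^{\top}\beta))]\to0$ uniformly, and the empirical process classes indexed by $\beta$ with envelope $\|X\|$. Restricting to the compact ball $B_\ve$ suffices, so if needed we prove only $\sup_{\beta\in B_\ve}$, where the VC-type bracketing is standard. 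On the sphere we then get, with probability approaching one,
\[
(\beta-\beta_0)^{\top}\hat\Psi_{h_n,\hat W}(\beta)\ge c_\ve-\ve\sup_{B_\ve}\|\hat\Psi_{h_n,\hat W}-D\|>0 .
\]

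\textbf{Step 3: existence.} Suppose $\hat\Psi_{h_n,\hat W}$ had no zero in $B_\ve$. Then $\beta\mapsto \beta_0-\ve\,\hat\Psi_{h_n,\hat W}(\beta)/\|\hat\Psi_{h_n,\hat W}(\beta)\|$ would be a continuous self-map of $B_\ve$, and Brouwer gives a fixed point $\beta^\dagger$, necessarily with $\|\beta^\dagger-\beta_0\|=\ve$. Taking the inner product with $\beta^\dagger-\beta_0$ gives
\[
\ve^2=-\ve\,(\beta^\dagger-\beta_0)^{\top}\hat\Psi_{h_n,\hat W}(\beta^\dagger)/\|\hat\Psi_{h_n,\hat W}(\beta^\dagger)\|<0,
\]
a contradiction. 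Hence a root exists in $B_\ve$ with probability approaching one.

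\textbf{Main obstacle.} The delicate part is the uniform approximation in Step 2 under possible misspecification. The smoothing bias at a general $\beta$ must be controlled using only the bounded density: $|\mbb E[\mc K_{h_n}(X^{\top}\beta-Y)-\mbb I(Y<X^{\top}\beta)\mid X]|\le f_{Y,\sup}h_n\kappa_1$. Likewise, the term involving $\hat Y$ carries no density assumption, but it is centered (the labeled and unlabeled averages share the same mean), so it only needs the empirical process bound, multiplied by $\hat W=O_p(1)$.
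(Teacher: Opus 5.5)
Your proof is correct and takes essentially the same route as the paper. The paper invokes the proof of Theorem~5.42 of \cite{vaart1998} with exactly your ingredients: uniform convergence of $\hat\Psi_{h_n,\hat W}$ to $D(\beta)=\mbb E[X(\mbb I(Y-X^{\top}\beta<0)-\tau)]$ (Lemma~\ref{lem:uniform consistency}), $D(\beta_0)=0$, and continuous differentiability of $D$ with nonsingular Jacobian $H$ at $\beta_0$. The only difference is that you carry out the Brouwer fixed-point step explicitly, using that $\mbb E[f_{Y|X}(X^{\top}\beta)XX^{\top}]$ is symmetric positive definite near $\beta_0$ to obtain the outward-pointing condition on a small sphere, rather than citing the general nonsingular-derivative result.
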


\begin{proof}[Proof]
By the proof of Theorem 5.42 of \cite{vaart1998}, 
it is sufficient to show that 
$(i)$ $\sup_{\beta \in \mbb R^p} \| \hat \Psi_{h_n, \hat W} (\beta) - \mbb E[X (\mbb I ( Y - X^{\top} \beta < 0 ) - \tau)] \| \to_p 0$,
$(ii)$ $\mbb E[X (\mbb I (Y - X^{\top} \beta_0 < 0) - \tau)] = 0$, and
$(iii)$ $\beta \mapsto \mbb E[X (\mbb I ( Y - X^{\top} \beta < 0 ) - \tau)]$ is continuously differentiable, and its derivative
matrix is nonsingular at $\beta_0$.
However, $(i)$ is shown in Lemma \ref{lem:uniform consistency} while $(ii)$ can be shown by 
Assumption \ref{as:identification} and Lemma \ref{lem:grad_beta}.
For $(iii)$, by the argument at the end of Lemma \ref{lem:grad_beta}, $\mbb E[X(\mbb I(Y - X^{\top}\beta < 0) - \tau)] = \mbb E[X(F_{Y|X}(X^{\top}\beta) - \tau)]$.
Hence, given Assumption \ref{as:moment} and \ref{as:density},
$\nabla_{\beta} \mbb E [X (\mbb I(Y - X^{\top}\beta < 0) - \tau)] = \mbb E[X X^{\top} f_{Y | X} (X^{\top} \beta)]$,
which is continuous in $\beta$. 
Furthermore, $\mbb E[X X^{\top} f_{Y|X}(X^{\top}\beta_0)]$ is nonsingular by Assumption \ref{as:moment}.
Therefore, $(iii)$ holds, which completes the proof.
\end{proof}

\begin{lemma}\label{lem:uniform consistency}
    Assume the assumption of Proposition \ref{prop:rate_PPI}.
    Then, if $\hat W = O_p(1)$, 
    $\sup_{\beta \in \mbb R^p} \| \hat \Psi_{h_n, \hat W} (\beta) - \mbb E[X (\mbb I ( Y - X^{\top} \beta < 0 ) - \tau)] \| \to_p 0$.
\end{lemma}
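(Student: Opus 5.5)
We split $\hat \Psi_{h_n,\hat W}(\beta) - \mbb E[X(\mbb I(Y - X^{\top}\beta<0)-\tau)]$ into a stochastic part and a smoothing-bias part, and treat each uniformly over $\beta\in\mbb R^p$. Write $\Psi_h(\beta) := \mbb E[X(\mc K_{h}(X^{\top}\beta - Y)-\tau)]$ and $\Phi_h(\beta):=\mbb E[X(\mc K_{h}(X^{\top}\beta - \hat Y)-\tau)]$. Then
\[
\hat \Psi_{h_n,\hat W}(\beta) - \mbb E[X(\mbb I(Y - X^{\top}\beta<0)-\tau)]
= (\mbb P_n - P)\psi_{h_n}(\beta;Y,X) - \hat W\{(\mbb P_n-P) - (\mbb P_N - P)\}\psi_{h_n}(\beta;\hat Y,X) + \{\Psi_{h_n}(\beta) - \mbb E[X(\mbb I(Y-X^{\top}\beta<0)-\tau)]\}.
\]
The two means of $\psi_{h_n}(\beta;\hat Y,X)$ cancel exactly because both samples share the law of $(\hat Y,X)$. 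Since $\hat W = O_p(1)$, it suffices to show that the three empirical-process suprema are $O_p(n^{-1/2})$ (respectively $O_p(N^{-1/2})$) and that the bias term vanishes uniformly. The explicit $O_p(n^{-1/2})$ rate is what Proposition \ref{prop:rate_PPI} later uses for $A_1$, together with $n/N=O(1)$.

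\textbf{Empirical process part.} Consider the classes $\mc F_n := \{(y,x)\mapsto x_j(\mc K(( x^{\top}\beta - y)/h_n)-\tau): \beta\in\mbb R^p,\ j\le p\}$. They have envelope $F(x) = (1+\tau)\|x\|$, which does not depend on $n$ because $0\le \mc K\le 1$, and $F$ is square integrable by Assumption \ref{as:moment}. I would bound the uniform covering numbers as follows.
\begin{itemize}
\item The functions $(y,x)\mapsto x^{\top}\beta/h_n - y/h_n$ range over a finite-dimensional vector space of dimension at most $p+1$. This holds uniformly in $h_n$, since rescaling by $h_n$ does not change the space, so it is VC with index independent of $n$.
\item Composing with the monotone function $\mc K$ (Assumption \ref{as:kernel}) preserves the VC-subgraph property (Lemma 2.6.18 of \cite{VaartWellner2023}).
\item Multiplying by the fixed function $x_j$ and subtracting the constant $\tau$ keeps polynomial covering numbers, of the form $N(\ve\|F\|_{Q,2},\mc F_n,L_2(Q))\le C\ve^{-v}$ with $C,v$ free of $Q$ and $n$. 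This is the bound invoked later in the proof of Proposition \ref{prop:asy_normal_LD}.
\end{itemize}
Theorem 2.14.1 of \cite{VaartWellner2023} then gives $\mbb E\sup_{\beta}\|(\mbb P_n-P)\psi_{h_n}(\beta;Y,X)\| \lesssim n^{-1/2}\|F\|_{P,2}$. The same argument with $\hat Y$ in place of $Y$, applied to the labeled and unlabeled samples, gives $n^{-1/2}$ and $N^{-1/2}$ respectively; no conditions on $\hat Y$ are needed, because only the envelope enters. Markov's inequality yields $O_p(n^{-1/2})$ uniformly.

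\textbf{Bias part.} This is the main obstacle, since the bound must hold uniformly over the noncompact set $\mbb R^p$. Conditioning on $X$ and changing variables gives
\[
\mbb E[\mc K_h(X^{\top}\beta - Y)\mid X] = \int K(v) F_{Y|X}(X^{\top}\beta - h v)\,dv .
\]
Hence
\[
\|\Psi_h(\beta) - \mbb E[X(F_{Y|X}(X^{\top}\beta)-\tau)]\| \le \mbb E\Bigl[\|X\|\int K(v)\,|F_{Y|X}(X^{\top}\beta - hv)-F_{Y|X}(X^{\top}\beta)|\,dv\Bigr].
\]
By Assumption \ref{as:density}(a) the inner integrand is at most $f_{Y,\sup} h|v|$. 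The whole expression is therefore bounded by $f_{Y,\sup}\kappa_1 h\,\mbb E\|X\|$, which does not depend on $\beta$, and so the supremum over $\mbb R^p$ is $O(h_n)\to 0$. Finally, $\mbb E[X(\mbb I(Y-X^{\top}\beta<0)-\tau)] = \mbb E[X(F_{Y|X}(X^{\top}\beta)-\tau)]$ because $Y$ has a conditional density. Combining the two parts gives the supremum $O_p(n^{-1/2}) + O(h_n) \to_p 0$.
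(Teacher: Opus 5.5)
Your proposal is correct and follows essentially the same route as the paper. The paper likewise splits the supremum into empirical-process terms and a smoothing bias. It controls the empirical-process terms through a VC-subgraph class built from a $(p+1)$-dimensional vector space composed with the monotone $\mc K$, a $\beta$-free envelope proportional to $\|X\|$, and Theorem 2.14.1 of \cite{VaartWellner2023}, and it bounds the bias uniformly in $\beta$ by $h_n f_{Y,\sup}\kappa_1 \mbb E\|X\|$. The differences are cosmetic: you cite Lemma 2.6.18 of \cite{VaartWellner2023} instead of Lemma 9.9 of \cite{Kosorok2008}, use a slightly looser envelope, and get the conditional expectation by a direct change of variables rather than integration by parts.
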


\begin{proof}
By the triangle inequality,
\begin{equation}
    \begin{split}
    &\sup_{\beta \in \mbb R^p} \| \hat \Psi_{h_n, \hat W} (\beta) - \mbb E[X (\mbb I ( Y - X^{\top} \beta < 0 ) - \tau)] \| \\
    \leq \ &\sup_{\beta \in \mbb R^p}\left\| \frac{1}{n} \sum_{i = 1}^n \left\{ \psi_{h_n} (Y_i - X_i^{\top} \beta) -
    \mbb E [X( \mbb I (Y - X^{\top} \beta < 0) - \tau)] \right\} \right\| \\
    &+ \| \hat W \| \left\{ \sup_{\beta \in \mbb R^p} \| (\mbb P_n - P) \psi_{h_n} (\hat Y - X^{\top}\beta)  \| 
    + \sup_{\beta \in \mbb R^p} \| (\mbb P_N - P) \psi_{h_n} (\hat Y - X^{\top} \beta) \| \right\}.
    \end{split} \label{score bound}
\end{equation}
It is enough to show that the three supremum terms on the right side converges to zero in probability.

For the first term, by the triangle inequality, this term can be further bounded as
\begin{equation}
    \begin{split}
    &\sup_{\beta \in \mbb R^p}\left\| \frac{1}{n} \sum_{i = 1}^n \left\{ \psi_{h_n} (Y_i - X_i^{\top} \beta) -
    \mbb E [X( \mbb I (Y - X^{\top} \beta < 0))] \right\} \right\| \\
    \leq \ &\sup_{\beta \in \mbb R^p} \| (\mbb P_n - P) \psi_{h_n} (Y - X^{\top}\beta) \| 
    + \sup_{\beta \in \mbb R^p} \| \mbb E [\psi_{h_n} (Y - X^{\top}\beta)] - \mbb E[X (\mbb I(Y - X^{\top} \beta < 0) - \tau)] \|.
    \end{split} \label{label score bound}
\end{equation}
For the first term on the right side, consider a class of function 
$\mc F_n := \{ (Y, X) \mapsto X (\mc K_{h_n} (X^{\top}\beta - Y) - \tau) : \beta \in \mbb R^p \}$.
Without the loss of generality, we may assume that any $f$ in $\mc F_n$ is a real-valued function.
By Assumption \ref{as:kernel}, $F_n := \| X \|$ is a valid envelop function for $\mc F_n$.
We then find the covering number of $\mc F_n$. First, note that
$\{ (Y, X) \mapsto (X^{\top}\beta - Y)/h_n : \beta \in \mbb R^p \}$ is 
a subset of $\{ (Y, X) \mapsto a_Y Y + a_X^{\top} X: (a_Y, a_X) \in \mbb R^{p + 1}\}$,
which is a $p+1$ dimensional vector space. Hence, both clasess are VC-subgraph with their dimension bounded by $p+2$
by Lemma 2.6.16 of \cite{VaartWellner2023}.
Subsequently, applying Lemma 9.0 $(viii)$, $(v)$, and $(vi)$ of \cite{Kosorok2008}, in turn,
observe that $\mc F_n$ is also VC-subgraph with its dimension bounded by $2p + 3$.
Consequently, Theorem 2.6.7 of \cite{VaartWellner2023} yields the bound on the covering number as $N(\ve \| F_n \|_{Q, 2}, \mc F_n, L_2(Q)) \leq K \ve^{-4p-6} \ (0 < \ve < 1)$,
for any probability measure $Q$, where $K$ is a constant independent of $Q$ and $n$, and $\| \cdot \|_{Q, 2}$
is the $L_2$-norm with respect to a measure $Q$.
Refer to, for example, Definition 2.1.5 of \cite{VaartWellner2023} for the definition of the covering number.
It now follows from Theorem 2.14.1 of \cite{VaartWellner2023} and Assumption \ref{as:moment} that
$\mbb E \left[ \sup_{\beta \in \mbb R^{p}} \| (\mbb P_n - P) \psi_{h_n} (Y - X^{\top}\beta) \| \right] \lesssim n^{-1/2}$.
By Markov's inequality, $\sup_{\beta \in \mbb R^{p}} \| (\mbb P_n - P) \psi_{h_n} (Y - X^{\top}\beta) \| \to_p 0$ follows.

For the second term on the right side of \eqref{label score bound}, observe that 
\begin{equation}
    \sup_{\beta \in \mbb R^p} \| \mbb E [\psi_{h_n} (Y - X^{\top}\beta)] - \mbb E[X (\mbb I(Y - X^{\top} \beta < 0) - \tau)] \|
    = \sup_{\beta \in \mbb R^p} \| \mbb E[X (\mc K_{h_n}(X^{\top}\beta - Y) - \mbb I(Y - X^{\top}\beta < 0))]\|.
\end{equation}
For the expectation in the supremum, we have
\begin{equation}
     \mbb E[X (\mc K_{h_n}(X^{\top}\beta - Y) - \mbb I(Y - X^{\top}\beta < 0))] 
    = \mbb E [X \mbb E[\mc K_{h_n}(X^{\top}\beta - Y) - \mbb I(Y - X^{\top}\beta < 0) | X]]. \label{diff kernel ind}
\end{equation}
The conditional expectation can be evaluated as
\begin{equation}
    \begin{split}
        &\mbb E[\mc K_{h_n}(X^{\top}\beta - Y) - \mbb I(Y - X^{\top}\beta < 0) | X] \\
        = \ &\int^{\infty}_{-\infty} \mc K \left( \frac{X^{\top}\beta - u}{h_n} \right) f_{Y | X}(u) du
         - F_{Y | X} (X^{\top}\beta) \\
        = \ &\int^{\infty}_{-\infty} \frac{1}{h_n} K \left( \frac{X^{\top}\beta - u}{h_n} \right) F_{Y | X} (u) du 
        - F_{Y | X} (X^{\top} \beta) \\
        = \ &\int^{\infty}_{-\infty} K(v) \left\{ F_{Y | X} (X^{\top}\beta - h_nv) - F_{Y | X} (X^{\top}\beta) \right\} dv,
    \end{split}
\end{equation}
where the second equality follows from the integration by parts,
and the third one is from the change of variable. By Assumption \ref{as:kernel} and \ref{as:density}, the absolute value of the term
in the above display is bounded by $h_n f_{\sup} \kappa_1$. In view of \eqref{diff kernel ind}, observe that
\begin{equation}
    \sup_{\beta \in \mbb R^p} \| \mbb E[X (\mc K_{h_n}(X^{\top}\beta - Y) - \mbb I(Y - X^{\top}\beta < 0))]\|
    \leq h_n f_{\sup} \kappa_1 \mbb E[\| X \|],
\end{equation}
which goes to zero as $n \to \infty$ by Assumption \ref{as:h}. This shows the convergence of the second term on 
the right side of \eqref{label score bound}, and consequently the convergence on the first term on the right side of \eqref{score bound}.

The convergence of the second term on the right side of \eqref{score bound} can be shown in the same way as $\sup_{\beta \in \mbb R^{p}} \| (\mbb P_n - P) \psi_{h_n} (Y - X^{\top}\beta) \| \to_p 0$ above.
This completes the proof.
\end{proof}

\begin{lemma}\label{lem:grad_beta}
    Assume Assumptions \ref{as:identification}, \ref{as:density}, and \ref{as:moment}. Then $\nabla_{\beta} \mbb E[\rho_{\tau}(Y - X^{\top}\beta)-\rho_{\tau}(Y)] = \mbb E[X (\mbb I(Y - X^{\top}\beta < 0) - \tau)]$.
\end{lemma}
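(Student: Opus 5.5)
The plan is to use Knight's identity, already recorded after Assumption~\ref{as:identification}, and differentiate the resulting integral representation under the expectation sign. Write $M(\beta) := \mbb E[\rho_{\tau}(Y - X^{\top}\beta) - \rho_{\tau}(Y)]$. Knight's identity gives
\[
M(\beta) = -\mbb E[X^{\top}\beta\,(\tau - \mbb I(Y<0))] + \mbb E\left[\int_0^{X^{\top}\beta} \{\mbb I(Y<s) - \mbb I(Y<0)\}\,ds\right].
\]
Both terms are finite under Assumption~\ref{as:moment}. The integrand is bounded by one in absolute value, so the second term is bounded by $\mbb E|X^{\top}\beta| \le \|\beta\|\,\mbb E\|X\|$.

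The first term is linear in $\beta$, with gradient $-\mbb E[X(\tau - \mbb I(Y<0))]$.

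For the second term, I would condition on $X$ and use Fubini to rewrite the inner expectation as
\[
\int_0^{X^{\top}\beta} \{F_{Y|X}(s) - F_{Y|X}(0)\}\,ds,
\]
where I use that $\mbb P(Y=s\mid X)=0$ by Assumption~\ref{as:density}, so $\mbb I(Y<s)$ and $\mbb I(Y\le s)$ have the same conditional expectation. This gives $M$ the form $M(\beta) = \mbb E[G(X, X^{\top}\beta)] + (\text{linear term})$, with
\[
G(x,t) := \int_0^t \{F_{Y|X=x}(s) - F_{Y|X=x}(0)\}\,ds .
\]
Since $F_{Y|X}$ is continuous, $t \mapsto G(x,t)$ is continuously differentiable with derivative $F_{Y|X=x}(t) - F_{Y|X=x}(0)$.

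The main step is exchanging the gradient and the expectation. The difference quotient in direction $e$ satisfies
\[
\bigl|G(X, X^{\top}(\beta+\epsilon e)) - G(X, X^{\top}\beta)\bigr|/|\epsilon| \le |X^{\top}e| \le \|X\|,
\]
because $|F_{Y|X}(s) - F_{Y|X}(0)| \le 1$. Since $\mbb E\|X\| < \infty$, dominated convergence applies and yields
\[
\nabla_\beta \mbb E[G(X, X^{\top}\beta)] = \mbb E[X\{F_{Y|X}(X^{\top}\beta) - F_{Y|X}(0)\}].
\]
The same domination shows that the resulting partial derivatives are continuous in $\beta$, so $M$ is genuinely (Fréchet) differentiable.

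Adding the two pieces gives
\[
\nabla M(\beta) = \mbb E[X\{F_{Y|X}(X^{\top}\beta) - F_{Y|X}(0)\}] - \mbb E[X(\tau - \mbb I(Y<0))].
\]
Using $\mbb E[X\,\mbb I(Y<0)] = \mbb E[X F_{Y|X}(0)]$, the two $F_{Y|X}(0)$ terms cancel and we are left with $\mbb E[X(F_{Y|X}(X^{\top}\beta) - \tau)]$. This equals $\mbb E[X(\mbb I(Y - X^{\top}\beta<0) - \tau)]$ by iterated expectations and the absence of atoms. This last identity, $\mbb E[X(\mbb I(Y-X^{\top}\beta<0)-\tau)] = \mbb E[X(F_{Y|X}(X^{\top}\beta)-\tau)]$, is the one invoked in Lemma~\ref{lem:root}, so I would state it explicitly at the end.

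The only delicate point is that no moment of $Y$ is assumed. The argument avoids needing one because Knight's identity leaves only bounded integrands, multiplied by $X$.
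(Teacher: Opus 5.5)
Your proof is correct, but it takes a different route from the paper. The paper's proof uses neither Knight's identity nor conditioning on $X$. It argues pathwise instead. For each fixed $\beta$, $Y \neq X^{\top}\beta$ almost surely, so $\beta \mapsto \rho_{\tau}(Y - X^{\top}\beta) - \rho_{\tau}(Y)$ is almost surely differentiable at $\beta$ with gradient $\{\mathbb{I}(Y - X^{\top}\beta < 0) - \tau\}X$, which is bounded by $\|X\|$. Dominated convergence then exchanges gradient and expectation. Strictly, the domination of difference quotients comes from the Lipschitz property of $\rho_{\tau}$, which the paper leaves implicit.

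That argument is shorter and lands directly on the indicator form of the gradient. Your route integrates $Y$ out first, which buys three things:
\begin{itemize}
\item The function you differentiate, $t \mapsto G(x,t)$, is $C^1$ for every $x$, rather than differentiable only off a $\beta$-dependent null set. The domination is immediate from $|F_{Y|X}(s) - F_{Y|X}(0)| \le 1$.
\item You obtain explicitly the identity $\mathbb{E}[X(\mathbb{I}(Y - X^{\top}\beta < 0) - \tau)] = \mathbb{E}[X(F_{Y|X}(X^{\top}\beta) - \tau)]$. The proof of Lemma~\ref{lem:root} attributes this to ``the argument at the end of'' this lemma, but the paper's proof of this lemma does not actually contain it.
\item You obtain continuity of the gradient in $\beta$. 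This is needed when Lemma~\ref{lem:convex}, which is stated for continuously differentiable $M$, is applied in the consistency step of Proposition~\ref{prop:rate_PPI}.
\end{itemize}
The cost is the extra machinery: Fubini with the conditional distribution, and the final conversion back to the indicator form via iterated expectations and the absence of atoms.
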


\begin{proof}
    By Assumption~\ref{as:density}, $Y \neq X^{\top} \beta$ almost surely. Hence, the map $\beta \to \rho_{\tau} (Y - X^{\top} \beta) - \rho_{\tau}(Y)$ is differentiable almost surely, with its gradient $\{ \mathbb I(Y - X^{\top} \beta < 0) - \tau \} X$.
    This gradient is bounded by $\| X \|$ uniformly over $\beta \in \mathbb{R}^p$. Consequently, the result follows from the dominated convergence theorem together with Assumption \ref{as:moment}.
\end{proof}

\begin{lemma}\label{lem:convex}
    Let $M(\beta): \mbb R^p \mapsto \mbb R$ be a continuously differentiable convex function.
    Suppose that $M(\beta)$ is uniquely minimized at some $\beta^*$.
    Then, if $D(\beta)$ is a gradient of $M (\beta)$ with respect to $\beta$,
    $\inf_{\beta: \| \beta- \beta^* \| \geq \ve } \|D (\beta) \| > 0$ for any $\ve > 0$.
\end{lemma}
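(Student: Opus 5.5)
The plan is to argue by contradiction, using convexity along line segments from $\beta^*$ together with continuity of $D$ on a compact sphere. Suppose that for some $\ve > 0$ we had $\inf_{\|\beta - \beta^*\| \geq \ve} \|D(\beta)\| = 0$. The first step is to reduce the problem to the sphere $S_\ve := \{\beta : \|\beta - \beta^*\| = \ve\}$. Fix $\beta$ with $\|\beta - \beta^*\| \geq \ve$, set $u := (\beta - \beta^*)/\|\beta - \beta^*\|$, and consider $g(t) := M(\beta^* + t u)$ for $t \geq 0$. This function is convex and continuously differentiable, with $g'(t) = D(\beta^* + tu)^{\top} u$. Convexity makes $g'$ nondecreasing, so
\[
\|D(\beta)\| \geq D(\beta)^{\top} u = g'(\|\beta - \beta^*\|) \geq g'(\ve) = D(\beta^* + \ve u)^{\top} u .
\]
It therefore suffices to show that $\inf_{u \in \mathbb S^{p-1}} D(\beta^* + \ve u)^{\top} u > 0$.

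The second step shows that each directional term is strictly positive. Fix a unit vector $u$. Uniqueness of the minimizer gives $g(\ve) > g(0)$. By convexity, $g(0) \geq g(\ve) - \ve g'(\ve)$, and hence $\ve g'(\ve) \geq g(\ve) - g(0) > 0$. So $D(\beta^* + \ve u)^{\top} u > 0$ for every $u \in \mathbb S^{p-1}$.

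The third step passes to the infimum. The map $u \mapsto D(\beta^* + \ve u)^{\top} u$ is continuous, since $M$ is continuously differentiable, and $\mathbb S^{p-1}$ is compact. The infimum is therefore attained and, by the second step, is strictly positive. Combined with the first step, this gives $\inf_{\|\beta - \beta^*\| \geq \ve} \|D(\beta)\| \geq \min_{u} D(\beta^* + \ve u)^{\top} u > 0$.

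The only delicate point is the monotonicity in the first step, which is what handles the unbounded region without any compactness of the parameter space. It relies on the standard fact that the derivative of a differentiable convex function of one variable is nondecreasing, so I expect no real obstacle. I will still state explicitly that $g'(t) = \nabla M(\beta^* + tu)^{\top} u$ by the chain rule, so that the argument is self-contained.
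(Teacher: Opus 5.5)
Your proposal is correct and follows essentially the same route as the paper: restrict $M$ to rays from $\beta^*$, use monotonicity of the one-dimensional derivative to reduce to the sphere of radius $\ve$, and invoke continuity and compactness of $\mathbb S^{p-1}$. Your supporting-line inequality $\ve g'(\ve) \geq g(\ve) - g(0) > 0$ makes explicit the strict positivity of $g_v'(\ve)$, which the paper simply asserts from uniqueness of the minimizer. The opening ``by contradiction'' framing is never used, since the argument you then give is direct.
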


\begin{proof}
    For any unit vector $v \in \mbb S^{p-1} := \{v \in \mbb R^{p} : \| v \| = 1 \}$, define a function 
    \begin{equation}
        g_v(t) := M(\beta^* + t v), \quad t \in \mbb R.
    \end{equation}
    By convexity of $M$, this $t \mapsto g_{v} (t)$ is also convex.
    Furthermore, $t = 0$ is the unique minimizer of this $g_{v} (t)$ for any $v \in \mbb S^{p-1}$
    since $\beta^*$ is the unique minimizer of $\beta \mapsto M(\beta)$.
    Hence, $g_v'(0) = v^{\top} D(\beta^*) = 0$, and $g'_v(t) = v^{\top} D(\beta^* + tv)$ satisfies
    \begin{equation}
        \quad g'_v(t) > 0 \ \text{when} \ t > 0. \label{lab:g_condition}
    \end{equation}

    Fix $\ve > 0$. Consider a function $h_{\ve} (v) := g'_v(\ve) = v^{\top} D(\beta^* + t v)$
    for $v \in \mbb S^{p-1}$. By \eqref{lab:g_condition}, $h_{\ve}(v) > 0$ 
    for any $v \in \mbb S^{p-1}$. By continuity of $v \mapsto h_{\ve} (v)$ and compactness of $\mbb S^{p-1}$, $c(\ve) := \inf_{v \in \mbb S^{p-1}} h_{\ve} (v) > 0$.
    Now, for any $\beta \in \mbb R^p$ such that $\| \beta - \beta^* \| \geq \ve$, define $r := \| \beta - \beta^* \|$, and
    $v := (\beta - \beta^*)/\| \beta - \beta^* \|$.
    Then $\beta = \beta^* + r v$ and $g'_v (r) \geq g_v'(\ve) \geq c(\ve) > 0$. Furthermore, $g'_v(r) = v^{\top} D(\beta) \leq \| v \| \| D(\beta) \| = \| D(\beta) \|$ where the inequality follows from the Cauchy-Shcwarz inequality.
    It follows that $\| D(\beta) \| \geq c(\ve)$ and the proof is complete.
\end{proof}

\begin{lemma}\label{lem:rate_grad_smooth}
    Assume the assumption of Proposition \ref{prop:rate_PPI}.
    Then $\mbb E[X(\mc K_{h_n} (X^{\top}\beta_0 - Y) - \tau)] = O(h_n^2)$.
\end{lemma}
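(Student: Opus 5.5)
The plan is to write $\mbb E[X(\mc K_{h_n}(X^{\top}\beta_0 - Y) - \tau)]$ as the population subgradient at $\beta_0$ plus a smoothing bias, and then show the bias is $O(h_n^2)$ using the symmetry of $K$ and the Lipschitz property of $f_{Y|X}$. By Assumption \ref{as:identification} and Lemma \ref{lem:grad_beta}, $\mbb E[X(\mbb I(Y - X^{\top}\beta_0<0) - \tau)] = 0$. It therefore suffices to bound
\[
\mbb E\bigl[X\{\mc K_{h_n}(X^{\top}\beta_0 - Y) - \mbb I(Y < X^{\top}\beta_0)\}\bigr].
\]

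\textbf{Step 1: reduce to a kernel-weighted CDF difference.} Conditioning on $X$ and repeating the integration-by-parts and change-of-variable computation in the proof of Lemma \ref{lem:uniform consistency}, the conditional expectation becomes
\[
\int K(v)\{F_{Y|X}(X^{\top}\beta_0 - h_n v) - F_{Y|X}(X^{\top}\beta_0)\}\,dv .
\]
This is where the boundary terms must be controlled, since no moment condition is imposed on $Y$. The relevant products are $\mc K(\cdot)F$ and $(1-\mc K)(1-F)$ at $\pm\infty$. Because $K$ is an integrable density, $\mc K$ tends to $0$ and $1$ at the two ends, so these boundary terms vanish. I take that step as already established in the proof of Lemma \ref{lem:uniform consistency}.

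\textbf{Step 2: second-order Taylor expansion.} Expand $F_{Y|X}$ around $u_0 := X^{\top}\beta_0$:
\[
F_{Y|X}(u_0 - h_n v) - F_{Y|X}(u_0) = -h_n v\, f_{Y|X}(u_0) + R, \qquad R = -\int_0^{h_n v}\{f_{Y|X}(u_0 - s) - f_{Y|X}(u_0)\}\,ds .
\]
The linear term integrates to zero against $K$ by the symmetry in Assumption \ref{as:kernel}(a); its integrability follows from $\kappa_1 < \infty$. By Assumption \ref{as:density}(b), $|R| \le l_Y h_n^2 v^2/2$ almost surely. Hence the conditional bias is bounded in absolute value by $l_Y \kappa_2 h_n^2/2$, uniformly in $X$, using $\kappa_2 < \infty$ from Assumption \ref{as:kernel}(c).

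\textbf{Step 3: integrate over $X$.} Multiplying by $\|X\|$ and taking expectations gives
\[
\bigl\|\mbb E[X(\mc K_{h_n}(X^{\top}\beta_0 - Y) - \tau)]\bigr\| \le \tfrac12 l_Y \kappa_2 h_n^2\, \mbb E\|X\| = O(h_n^2),
\]
where $\mbb E\|X\| < \infty$ by Assumption \ref{as:moment}.

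\textbf{Main obstacle.} I expect the difficulty to be rigor in Step 1 under misspecification: $F_{Y|X}$ at $X^{\top}\beta_0$ is not $\tau$, so one cannot center at $\tau$ pointwise as \cite{HePanTanZhou2023joe} do. The cancellation happens only after averaging over $X$, through the population first-order condition, which is why Step 1 subtracts the indicator rather than $\tau$. If the boundary-term argument in the integration by parts turns out delicate, the fallback is to avoid it altogether: write $\mc K_{h_n}(u_0 - Y) - \mbb I(Y < u_0)$ directly and apply Fubini to $\mbb E\bigl[\int K(v)\{\mbb I(Y < u_0 - h_n v) - \mbb I(Y < u_0)\}\,dv \mid X\bigr]$. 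That identity follows from $\mc K(s) = \int K(v)\,\mbb I(v < s)\,dv$ and needs no tail conditions.
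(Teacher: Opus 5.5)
Your proof is correct and follows the paper's argument. You subtract the population first-order condition via Lemma~\ref{lem:grad_beta}, then bound the conditional smoothing bias by $O(h_n^2)$ uniformly in $X$, with kernel symmetry cancelling the first-order term and the Lipschitz density controlling the second-order term. The only difference is when you integrate by parts: you pair the even kernel $K$ with the CDF increment and get the constant $l_Y\kappa_2/2$ directly from $\kappa_2<\infty$, whereas the paper pairs the odd function $\mathcal{K}(v)-\mathbb{I}(v>0)$ with the density and then needs a separate argument (invoking Assumption~\ref{as:kernel}(d)) to show $\int|\mathcal{K}(v)-\mathbb{I}(v>0)|\,|v|\,dv<\infty$; that integral in fact equals $\kappa_2/2$ by Fubini, so your route is slightly cleaner.
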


\begin{proof}
    Observe that 
    \begin{equation}
        \mbb E[X(\mc K_{h_n} (X^{\top}\beta_0 - Y) - \tau)] = \mbb E[X(\mc K_{h_n} (X^{\top}\beta_0 - Y) - \mbb I(Y - X^{\top}\beta_0 < 0))] +
        \mbb E[X(\mbb I(Y - X^{\top}\beta_0 < 0) - \tau)].
    \end{equation}
    By Assumption \ref{as:identification} and Lemma \ref{lem:grad_beta}, the second term on the right side is zero.
    Hence, we only need to show that the first term is $O(h_n^2)$.
    Observe that $\mbb E[X(\mc K_{h_n} (X^{\top}\beta_0 - Y) - \mbb I(Y - X^{\top}\beta_0 < 0))] = \mbb E[X \mbb E[\mc K_{h_n} (X^{\top}\beta_0 - Y) - \mbb I(Y - X^{\top}\beta_0 < 0)|X]]$.
    We can evaluate the conditional expectation as 
    \begin{equation}
        \begin{split}
            &\mbb E[\mc K_{h_n}(X^{\top}\beta - Y) - \mbb I(Y - X^{\top}\beta_0 < 0)|X] \notag \\
            = \ &h_n \int^{\infty}_{-\infty} \{ \mc K(v) - \mbb I(v>0) \} f_{Y|X}(X^{\top}\beta_0 - h_n v) dv \notag \\
            = \ &h_n f_{Y|X} (X^{\top}\beta_0) \int^{\infty}_{-\infty} \{ \mc K(v) - \mbb I(v > 0) \} dv \notag \\
            &+ h_n \int^{\infty}_{-\infty} \{ \mc K(v) - \mbb I(v > 0) \} \{ f_{Y|X} (X^{\top}\beta_0 - h_nv) - f_{Y|X} (X^{\top}\beta_0) \} dv,
        \end{split}
    \end{equation}
    where the first equality is due to the change of variable. 
    Note that all the integrals in the above display are finite by the fact that 
    $\int^{\infty}_{-\infty} |\mc K(v) - \mbb I(v > 0)| dv < \infty$ from the proof of Lemma 5 of \cite{Takeishi2023sjs}, 
    in conjunction with Assumptions \ref{as:density} and \ref{as:kernel}.
    Because $v \mapsto \mc K(v) - \mbb I(v > 0)$ is an odd function, we have that 
    $\int^{\infty}_{-\infty} \{ \mc K(v) - \mbb I(v > 0) \} dv = 0$. It then follows from Assumption \ref{as:density} that 
    \begin{equation}
        |\mbb E[\mc K_{h_n}(X^{\top}\beta - Y) - \mbb I(Y - X^{\top}\beta_0 < 0)|X]|  \leq h_n^2 l_{Y} \int^{\infty}_{-\infty} |\mc K(v) - \mbb I(v > 0)| |v| dv.
    \end{equation}
    To show that integral on the right side is finite, we follow the argument in the proof of Lemma 5 of \cite{Takeishi2023sjs}. Namely,
    \begin{equation}
        \begin{split}
        \int^{\infty}_{-\infty} |\mc K(v) - \mbb I(v > 0)| |v| dv = \ &-\int^{0}_{-\infty} v \mc K(v) dv + \int^{\infty}_{0} v (1 - \mc K(v)) dv \notag \\
        = \ &-[v^2 \mc K(v)]^{0}_{-\infty} + [v^2(1 - \mc K(v))]^{\infty}_0 + \int^{\infty}_{-\infty} v^2 \mc K(v) dv,
        \end{split}
    \end{equation}
    where the second equality follows from integration by parts.
    By L'H\^opital's rule and Assumption \ref{as:kernel}, 
    the last three terms in the above display are finite.
    This completes the proof.
\end{proof}

\section{Undercoverage for the coefficient on $X_1$}\label{app:undercoverage}
As suggested in the main text, the coefficient on $X_1$ exhibits undercoverage at smaller sample sizes, for ``SD-OPT'' intervals at $\tau=0.9$. To examine this behavior more directly, we conduct a focused diagnostic in the setting $\tau=0.9$ and $(\sigma_Z^2,\sigma_u^2)=(1,1)$. We modify the heteroskedastic term in model \eqref{heteroskedastic} from $(1+0.5X_1)$ to $(1+c_1X_1)$ and vary the loading $c_1$.

Figure~\ref{fig:undercoverage-x1-c1-n} reports the resulting empirical coverage for the coefficient on $X_1$. The left panel shows that coverage deteriorates as the heteroskedasticity loading on $X_1$ increases. The right panel fixes $c_1=0.5$ and varies the labeled sample size, showing that the undercoverage generally attenuates as the sample size increases, despite some finite-sample fluctuation. Thus, the coefficient-specific undercoverage highlighted in the main text is most pronounced when heteroskedasticity is strongly tied to $X_1$ and becomes less severe in larger samples.

\begin{figure}[tb]
  \centering
  \begin{minipage}[t]{0.485\textwidth}
    \centering
    \includegraphics[width=\linewidth]{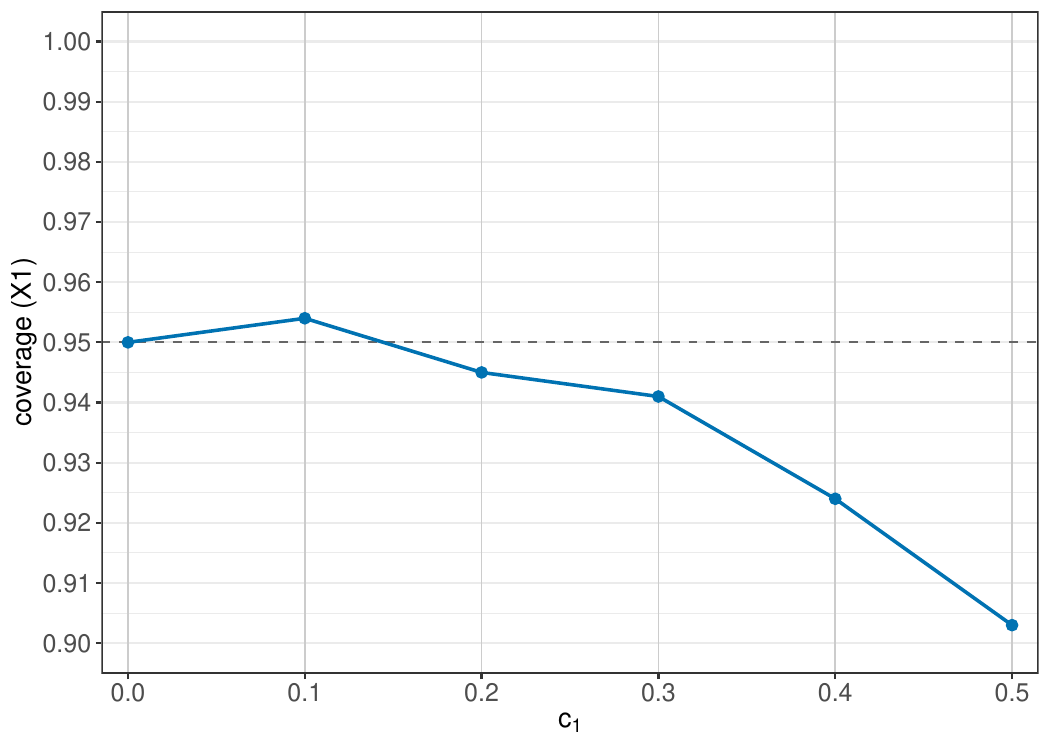}\\[-1ex]
    \textbf{(a)} Coverage of $X_1$ versus heteroskedasticity level $c_1$.
  \end{minipage}
  \hfill
  \begin{minipage}[t]{0.485\textwidth}
    \centering
    \includegraphics[width=\linewidth]{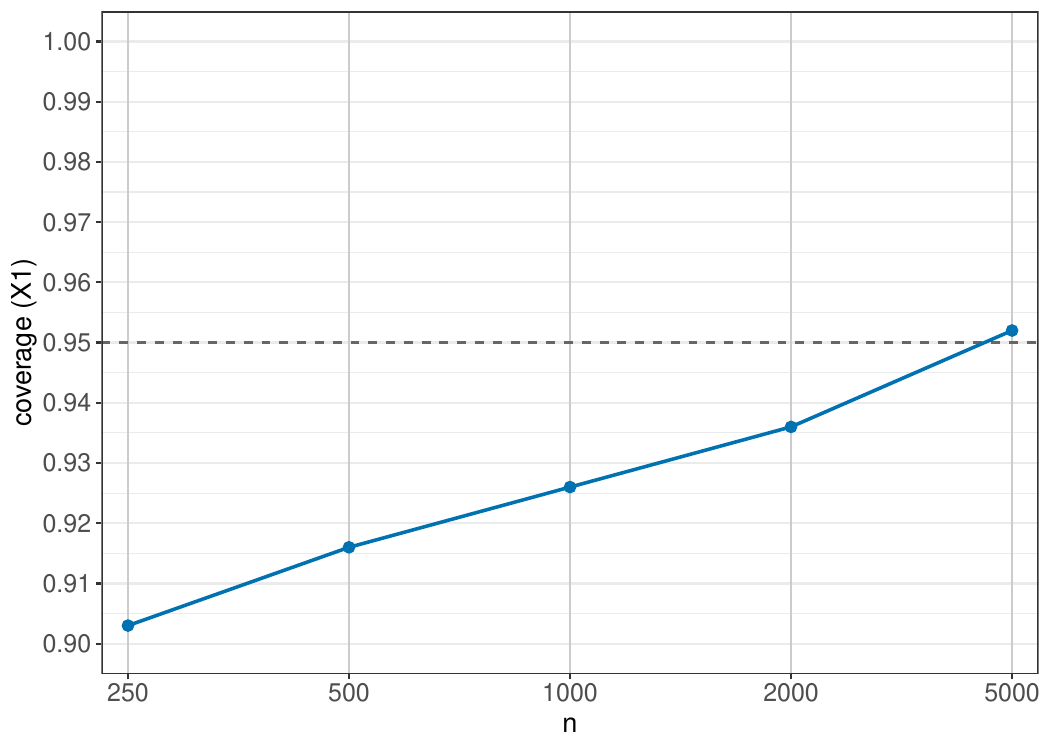}\\[-1ex]
    \textbf{(b)} Coverage of $X_1$ versus sample size ($c_1=0.5$).
  \end{minipage}
  \caption{Undercoverage behavior for the coefficient on $X_1$ at $\tau=0.9$ in the large-variance setting: dependence on (a) heteroskedasticity loading $c_1$ and (b) sample size.}
  \label{fig:undercoverage-x1-c1-n}
\end{figure}

\section{Sensitivity analysis with respect to $h_n$}\label{app:sensitivity}

We conduct a brief sensitivity analysis of the SD-CSE and PTD-CSE with respect to the bandwidth $h_n$.
We use the same simulation setting as in Section~\ref{sec:simulation}, but focus on the case $(\sigma^2_Z, \sigma^2_u) = (1, 1)$,
$\tau = 0.5$, and $n = 1000$.
We examine how the empirical coverage and the length of the confidence interval vary with $h_n$.
Note that, under this setting, our default choice for $h_n$ is $\{ (p + \log n)/n \}^{2/5} \approx 0.16$.
We consider five values of $h_n$: 0.06, 0.11, 0.16, 0.21, and 0.26.
Figure \ref{fig:ci-coverage-length-h} shows how empirical coverage and confidence-interval length, averaged across the coefficients on $X_1$, $X_2$, and $X_3$, vary across values of $h_n$ for ``LAB,''
``SD-OPT,'' and ``PTD-OPT.''
The performance of all three methods is not strongly affected by the choice of $h_n$, and ``SD-OPT'' and ``PTD-OPT'' consistently dominate ``LAB.''
Overall, these findings align with the insensitivity of convolution-smoothed quantile regression to the choice of $h_n$, as observed in the numerical studies of \citet{HePanTanZhou2023joe}.

\begin{figure}[tb]
  \centering
  \begin{minipage}[t]{0.485\textwidth}
    \centering
    \includegraphics[width=\linewidth]{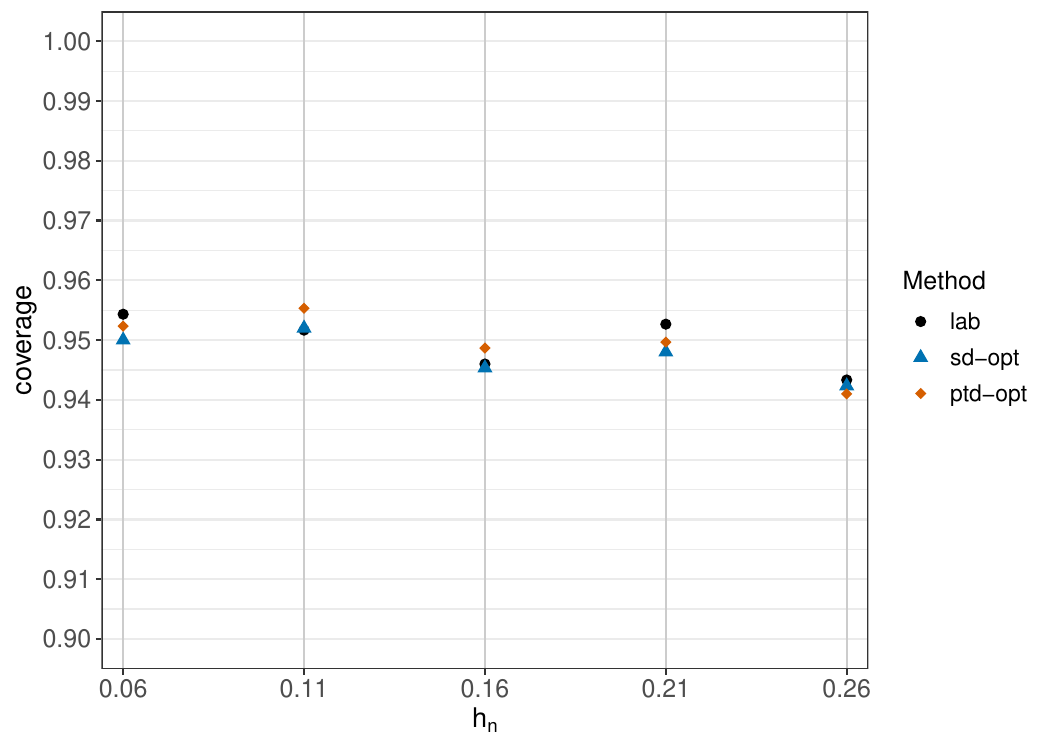}\\[-1ex]
    \textbf{(a)} Empirical coverage
  \end{minipage}
  \hfill
  \begin{minipage}[t]{0.485\textwidth} \centering\includegraphics[width=\linewidth]{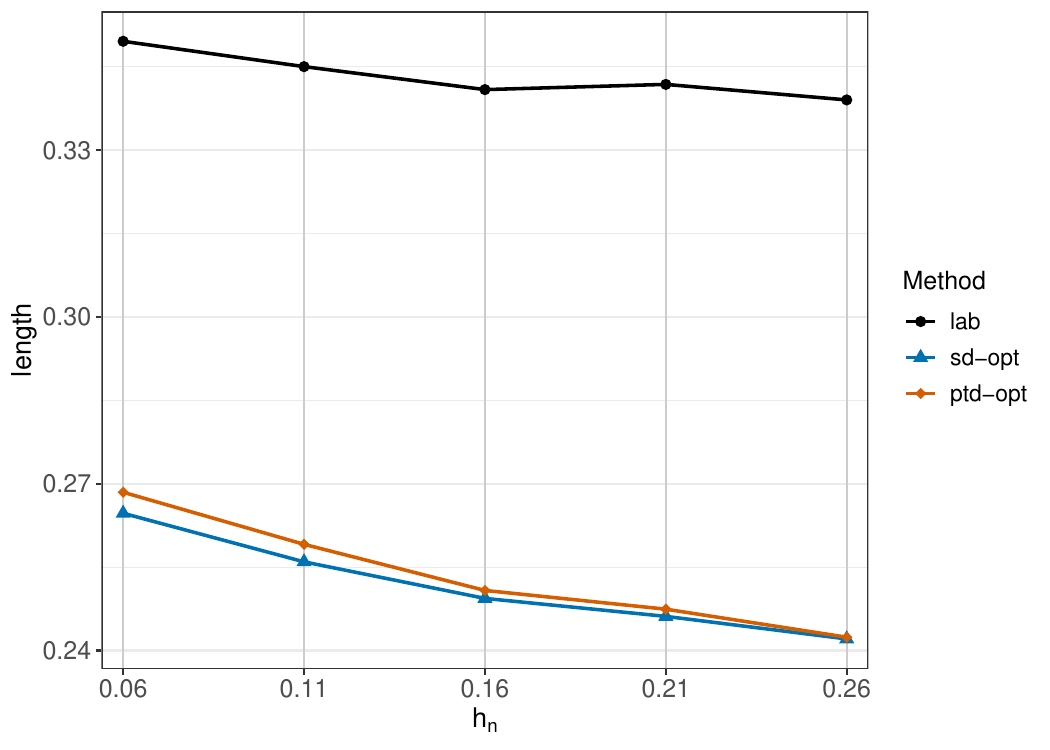}\\[-1ex]
    \textbf{(b)} CI length
  \end{minipage}
  \caption{Sensitivity to \(h_n\), averaged across coefficients \(X_1\), \(X_2\), and \(X_3\) (\(\tau=0.5\), large setting): (a) empirical coverage and (b) CI length.}
  \label{fig:ci-coverage-length-h}
\end{figure}

\section{Calibration of $\hat Y$}\label{app:calibration}

In Section \ref{sec:theory:sd}, we discuss calibrating the surrogate $\hat Y$ by fitting a quantile regression of $Y$ on $\hat Y$ at the target quantile level $\tau$. We then construct the calibrated surrogate $\hat \eta_1 + \hat \eta_2 \hat Y$, where $(\hat \eta_1, \hat \eta_2)$ denotes the resulting coefficient estimates.
Let $\eta^* = (\eta_{1}^*, \eta_{2}^*)$ denote the probability limit of $\hat \eta = (\hat \eta_1, \hat \eta_2)$.
In this section, we show that estimation of $\eta^*$ by $\hat \eta$ does not affect the asymptotic variance of the SD estimator with the estimated optimal weight matrix. In other words, we may treat $\eta^*$ as if it were known.
We further demonstrate the validity of the proposed calibration step through a preliminary simulation study.

Define the smoothed score with the calibrated surrogate as
\begin{equation}
    \hat \Psi_{h_n, \hat W} (\beta; \hat \eta) 
    := \frac{1}{n} \sum_{i = 1}^n \psi_{h_n} (\beta; Y_i, X_i)
    - \hat W \left( 
    \frac{1}{n} \sum_{i = 1}^n \psi_{h_n} (\beta; \hat Y_i(\hat \eta), X_i) -
    \frac{1}{N} \sum_{i = n+1}^{n+N} \psi_{h_n} (\beta; 
    \hat Y_i(\hat \eta), X_i)
    \right),
\end{equation}
where $\hat Y_i (\eta) = \eta_1 + \eta_2 \hat Y_i$ for $\eta = (\eta_1, \eta_2)$.

\begin{proposition}\label{prop:asy_normal_LD_calibrated}
    Assume Assumptions \ref{as:identification}, \ref{as:density}, \ref{as:moment}, \ref{as:kernel} and \ref{as:h}. 
    Furthermore, suppose that $\eta_1^* + \eta_2^* \hat Y \neq X^{\top} \beta_0$ almost surely, that $\hat W \to_p W_0$ for some matrix $W_0$,  that $\hat Y$ has a finite fourth moment, that $n/N \to r$, and that $\sqrt{n} (\hat \eta - \eta^*) = O_p(1)$.
    Then there exists $\hat \beta_{\text{SD}, \hat W} \in \mathbb R^p$ such that $\hat \Psi_{h_n, \hat W} (\hat \beta_{\text{SD}, \hat W}; \hat \eta) = 0$ holds with probability approaching one. Moreover, for any such $\hat \beta_{\text{SD}, \hat W}$,
    \[
        \sqrt{n} (\hat \beta_{\text{SD}, \hat W} - \beta_0)
        \to_d N(0, H^{-1} \Lambda(W_0) H^{-1}), 
    \]
    where $H$ and $\Lambda (W_0)$ are defined as in Proposition \ref{prop:asy_normal_LD}, except that $\hat Y$ is replaced by $\hat Y(\eta^*)$.
\end{proposition}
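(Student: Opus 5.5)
The plan is to reduce the calibrated estimator to the uncalibrated analysis of Propositions \ref{prop:rate_PPI} and \ref{prop:asy_normal_LD}, with the parameter pair $(\delta, W)$ enlarged to the triple $(\delta, W, \eta)$. Write $\tilde Y := \hat Y(\eta^*)$. The key structural fact is that the augmentation term is a \emph{difference} $\mathbb P_n - \mathbb P_N$ of the same function, so its population mean cancels for every fixed $\eta$. The estimate $\hat\eta$ therefore enters only through empirical-process fluctuations. These fluctuations are $o_p(n^{-1/2})$ once $\hat \eta$ lies in a shrinking neighborhood of $\eta^*$, even though $\hat \eta - \eta^*$ is only $O_p(n^{-1/2})$.

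\textbf{Existence and consistency.} Repeat Lemma \ref{lem:uniform consistency} with the supremum taken over $\beta \in \mathbb R^p$ and $\eta \in \mathbb R^2$ jointly. The class $\{(\hat Y, X) \mapsto X(\mathcal K_{h_n}(X^\top\beta - \eta_1 - \eta_2 \hat Y) - \tau)\}$ is built from the finite-dimensional vector space spanned by $(1, \hat Y, X)$. It is therefore VC-subgraph with dimension bounded independently of $n$, and the envelope is $\|X\|$. Theorem 2.14.1 of \cite{VaartWellner2023} then gives
\[
\sup_{\beta,\eta}\|(\mathbb P_n - P)\psi_{h_n}(\beta;\hat Y(\eta),X)\| + \sup_{\beta,\eta}\|(\mathbb P_N - P)\psi_{h_n}(\beta;\hat Y(\eta),X)\| = O_p(n^{-1/2}),
\]
where the second term uses $n/N = O(1)$. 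Because $\hat W = O_p(1)$, the whole augmentation term is $o_p(1)$ uniformly in $\beta$. Existence of a root then follows exactly as in Lemma \ref{lem:root}, and consistency as in step (i) of Proposition \ref{prop:rate_PPI}. The rate step (ii) of Proposition \ref{prop:rate_PPI} also goes through verbatim, because the term $A_1$ is still $O_p(n^{-1/2})$ by the display above. This yields $\hat\beta_{\text{SD},\hat W} - \beta_0 = o_p(h_n)$.

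\textbf{Linear expansion.} Choose $r_n \to 0$ such that, with probability approaching one, $\|\hat\beta_{\text{SD},\hat W} - \beta_0\| \le r_n h_n$, $\|\hat W - W_0\| \le r_n$, and $\|\hat \eta - \eta^*\| \le r_n$. Set
\[
\Theta_n := \{(\delta, W, \eta) : \|\delta\| \le r_n h_n,\ \|W - W_0\| \le r_n,\ \|\eta - \eta^*\| \le r_n\},
\]
and define $\Delta(\delta, W, \eta)$ as in the proof of Proposition \ref{prop:asy_normal_LD}. The label part of $\Delta$ does not involve $\eta$, so its mean and fluctuation bounds are unchanged. In the augmentation part the $P$-means cancel. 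What remains is to bound
\[
\sup_{\Theta_n}\|(\mathbb P_n - P) g_{\delta,\eta}\| \quad\text{and}\quad \sup_{\Theta_n}\|(\mathbb P_N - P) g_{\delta,\eta}\|,
\]
where $g_{\delta,\eta} := X\{\mathcal K_{h_n}(X^\top(\beta_0+\delta) - \hat Y(\eta)) - \mathcal K_{h_n}(X^\top\beta_0 - \hat Y(\eta^*))\}$. Both terms should be $o_p(n^{-1/2})$.

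\textbf{Main obstacle.} The difficulty is the $\eta$-perturbation. The mean value theorem gives $|g_{\delta,\eta}| \lesssim \|X\|(\|X\| r_n h_n + r_n(1+|\hat Y|))/h_n$, which is $O(r_n/h_n)$ and not small. So the envelope argument used in step (i-b) of Proposition \ref{prop:asy_normal_LD} fails. Instead, I would apply a maximal inequality with a small $L_2$ radius. Set $\sigma_n^2 := \sup_{\Theta_n} P\|g_{\delta,\eta}\|^2$. Since $\mathcal K_{h_n}$ is bounded by one, $|g_{\delta,\eta}| \le 2\|X\|$. The perturbation of the argument is $s := X^\top\delta - (\eta-\eta^*)^\top(1,\hat Y)$. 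Conditionally on $(X, \hat Y)$, the integrand is nonnegligible only when $\hat Y(\eta^*)$ lies within $O(h_n + |s|)$ of $X^\top\beta_0$, or in the kernel tails. By dominated convergence, using $\tilde Y \ne X^\top\beta_0$ almost surely together with the fourth moments of $X$ and $\hat Y$, this gives $\sigma_n \to 0$. Combined with the uniform entropy bound, the maximal inequality (e.g.\ Theorem 2.14.1 of \cite{VaartWellner2023} in its localized form, or the bound of Chernozhukov et al.\ with envelope $2\|X\|$) yields
\[
\mathbb E\sup_{\Theta_n}|\mathbb G_n g| \lesssim J(\sigma_n/\|2\|X\|\|_{P,2}) \to 0,
\]
and likewise for $\mathbb G_N$. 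Multiplying by $\sqrt{n/N} = O(1)$ gives the required $o_p(n^{-1/2})$.

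\textbf{Conclusion.} The expansion \eqref{prop:asy_normal_LD:expansion} then holds with $\hat Y$ replaced by $\tilde Y$ and with $\eta^*$ fixed. Step (ii) of that proof applies unchanged, using $\tilde Y \ne X^\top\beta_0$ almost surely, and this gives the stated normal limit with $H^{-1}\Lambda(W_0)H^{-1}$ computed at $\tilde Y$.
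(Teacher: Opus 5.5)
Your proposal is correct, but it handles the key step differently from the paper. The paper never runs into your ``main obstacle''. Since $\sqrt n(\hat\eta-\eta^*)=O_p(1)$ and $\sqrt n h_n\to\infty$ (Assumption~\ref{as:h}), it notes that $\hat\eta-\eta^*=o_p(h_n)$. So $\eta$ can be localized to a ball of radius $s_n h_n$ with $s_n\to0$, rather than radius $r_n$. The mean value theorem then bounds the $\eta$-increment of $\mathcal K_{h_n}$ by a constant times $s_n(1+|\hat Y|)$. The envelope $s_n\|X\|(1+|\hat Y|)$ is square integrable by the fourth moments of $X$ and $\hat Y$. The envelope-based Theorem~2.14.1 argument of step $(i$-$b)$ then applies verbatim and gives a remainder of order $s_n$. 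The paper also works in two stages: it first expands with $\hat Y(\hat\eta)$, then swaps $\hat\eta$ for $\eta^*$ through $R_{n,N}$. You instead index by $(\delta,W,\eta)$ jointly, which is only a cosmetic difference.

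You keep $\eta$ in an $r_n$-ball and compensate with an $L_2$-localized maximal inequality, and this works. The claim $\sigma_n\to0$ follows by dominated convergence from $\hat Y(\eta^*)\neq X^\top\beta_0$ a.s.\ and $\mathbb E\|X\|^2<\infty$ alone, so the fourth moments you invoke there are not needed.

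One correction: the Chernozhukov--Chetverikov--Kato bound with envelope $\|X\|$ has a second term, of order $n^{-1/2}(\mathbb E\max_{i\le n}\|X_i\|^2)^{1/2}\log(1/\sigma)$, which your display drops. It still vanishes here. Under Assumption~\ref{as:moment}, $(\mathbb E\max_{i\le n}\|X_i\|^2)^{1/2}\lesssim n^{1/4}$, and you can take $\sigma=\max(\sigma_n,n^{-1})$ so that the logarithm is $O(\log n)$. The same holds for $\mathbb G_N$, since $N\gtrsim n$.

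As for what each route buys:
\begin{itemize}
\item The paper's argument is shorter and reuses step $(i$-$b)$ unchanged, but it genuinely uses the $\sqrt n$-rate of $\hat\eta$ and the fourth moment of $\hat Y$.
\item Yours relies on a heavier maximal inequality. In exchange it needs only consistency of $\hat\eta$ and no moment condition on $\hat Y$. This is because the augmentation term has mean zero for every fixed $\eta$, so there is no $\eta$-derivative term. Your argument therefore proves a slightly stronger statement.
\end{itemize}
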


\begin{proof}
    By a straightforward adaptation of the proof of Lemma \ref{lem:uniform consistency}, with $\hat \eta_1 + \hat \eta_2 \hat Y$ in place of $\hat Y$, we can show the existence part and that
    $\hat \beta_{\text{SD}, \hat W} - \beta_0 = o_p(h_n)$
    by repeating the arguments used in the proofs of Proposition \ref{prop:rate_PPI} and Lemma \ref{lem:root}. For the asymptotic normality, by an argument analogous to that leading to \eqref{prop:asy_normal_LD:expansion}, we obtain the following asymptotic expansion:
    \begin{equation}\label{prop:asy_normal_LD_calibration:expansion}
        \begin{split}
        \sqrt{n} (\hat \beta_{\text{SD}, \hat W} - \beta_0) = \ &\mbb E [f_{Y|X}(X^{\top}\beta_0) X X^{\top}]^{-1}
        \sqrt{n} \mbb P_n X (\tau - \mc K_{h_n} (X^{\top}\beta_0 - Y)) \\
        &- \mbb E [f_{Y|X}(X^{\top}\beta_0) X X^{\top}]^{-1} \hat W  \frac{1}{\sqrt{n}} \sum_{i = 1}^n X_i \{ \tau - \mc K_{h_n} (X_i^{\top}\beta_0 - \hat Y_i(\hat \eta)) \} \\
        &+  \mbb E [f_{Y|X}(X^{\top}\beta_0) X X^{\top}]^{-1} \hat W \sqrt{\frac{n}{N}}
        \frac{1}{\sqrt{N}} \sum_{i = n+1}^{n+N} X_i \{ \tau - \mc K_{h_n} (X_i^{\top}\beta_0 - \hat Y_i (\hat \eta)) \} + o_p(1).
        \end{split}
    \end{equation}
    It follows from Assumption \ref{as:h} and $\sqrt{n} (\hat \eta - \eta^*) = O_p(1)$ that $\hat \eta - \eta^* = h_n o_p(1)$.
    Hence, $\hat \eta \in H_n$ with probability approaching one, where $H_n := \{ \eta : \| \hat \eta - \eta^* \| \leq h_n s_n  \}$ for some sequence $s_n$ converging to zero.
   Then, \eqref{prop:asy_normal_LD_calibration:expansion} can be further simplified as
    \begin{equation}\label{prop:asy_normal_LD_calibration:expansion_rem}
        \begin{split}
        \sqrt{n} (\hat \beta_{\text{SD}, \hat W} - \beta_0) = \ &\mbb E [f_{Y|X}(X^{\top}\beta_0) X X^{\top}]^{-1}
        \sqrt{n} \mbb P_n X (\tau - \mc K_{h_n} (X^{\top}\beta_0 - Y)) \\
        &- \mbb E [f_{Y|X}(X^{\top}\beta_0) X X^{\top}]^{-1} \hat W  \mbb G_n X \{ \tau - \mc K_{h_n} (X^{\top}\beta_0 - \hat Y(\eta^*))\} \\
        &+  \mbb E [f_{Y|X}(X^{\top}\beta_0) X X^{\top}]^{-1} \hat W \sqrt{\frac{n}{N}}
        \mbb G_N X \{ \tau - \mc K_{h_n} (X^{\top}\beta_0 - \hat Y(\eta^*))\} 
        + R_{n, N}
        + o_p(1),
        \end{split}
    \end{equation}
    where, with probability approaching one, the remainder term $R_{n, N}$ can be bounded as
    \begin{equation}
        \begin{split}
        |R_{n, N}| \lesssim \ &\sup_{\eta \in H_n} |\mathbb G_n X \{ \mathcal{K}_{h_n}(X^{\top}\beta_0 - \hat Y(\eta^*)) - \mathcal{K}_{h_n} (X^{\top} \beta_0 - \hat Y(\eta)) \} | \\
        &+ \sup_{\eta \in H_n} |\mathbb G_N X \{ \mathcal{K}_{h_n}(X^{\top}\beta_0 - \hat Y(\eta^*)) - \mathcal{K}_{h_n} (X^{\top} \beta_0 - \hat Y(\eta) \} |.
        \end{split}
    \end{equation}
    Using an argument similar to that used for the supremum of the empirical processes in part $(i)$-$(b)$ of the proof of Proposition \ref{prop:asy_normal_LD}, the expectation of the right-hand side of the above display is bounded, up to a constant, by $s_n \sqrt{\mathbb E[|X|^2 (1 + |\hat Y|^2)]}$ by Theorem 2.14.1 of \cite{VaartWellner2023}. By Markov's inequality, $R_{n, N} = o_{p} (1)$.
    Based on \eqref{prop:asy_normal_LD_calibration:expansion_rem}, the desired result follows by repeating the argument in part $(ii)$ of the proof of Proposition \ref{prop:asy_normal_LD}.
\end{proof}

We now assess the performance of SD-CSE with the calibrated surrogate through a simulation study.
We use the same simulation setting as in Section~\ref{sec:simulation}, but redefine the original surrogate as $\hat Y = \hat f(X_i, Z_i) - 2$ to create a setting in which it underestimates $Y$.
We focus on the $0.9$-th quantile.
Under this setup, $\mathbb{I} (\hat Y_i - X_i^{\top} \hat \beta_{\text{lab}}(0.9) < 0)$ is almost always close to one,  which can make the plug-in estimator of $\Lambda_{\text{unlab}}$ nearly singular. We then apply the proposed calibration method by fitting a linear quantile regression of $Y$ on $\hat Y$ at $\tau =0.9$.
Figure~\ref{fig:sim-coverage-calibration} reports empirical coverages of the ``SD-OPT'' for the slope coefficients under different choices of $n$ and $(\sigma^2_Z, \sigma^2_u)$ at $\tau = 0.9$.
For comparison, we also include the corresponding results for ``LAB.''
The coverages probabilities follow a pattern similar to that observed in Section~\ref{sec:simulation}: they generally attain the nominal level with occasional undercoverage for the $X_1$ coefficient, especially at smaller sample sizes.
Figure~\ref{fig:sim-length-calibration} compares the interval for ``LAB'' and ``SD-OPT.''
As in Section~\ref{sec:simulation}, ``SD-OPT'' yields intervals that are shorter than, or at least comparable in length to, those of LAB. The reduction in interval length is more pronounced when the variance of $Z$ is larger.

\begin{figure}[t]
    \centering
    \includegraphics[width=0.95\textwidth]{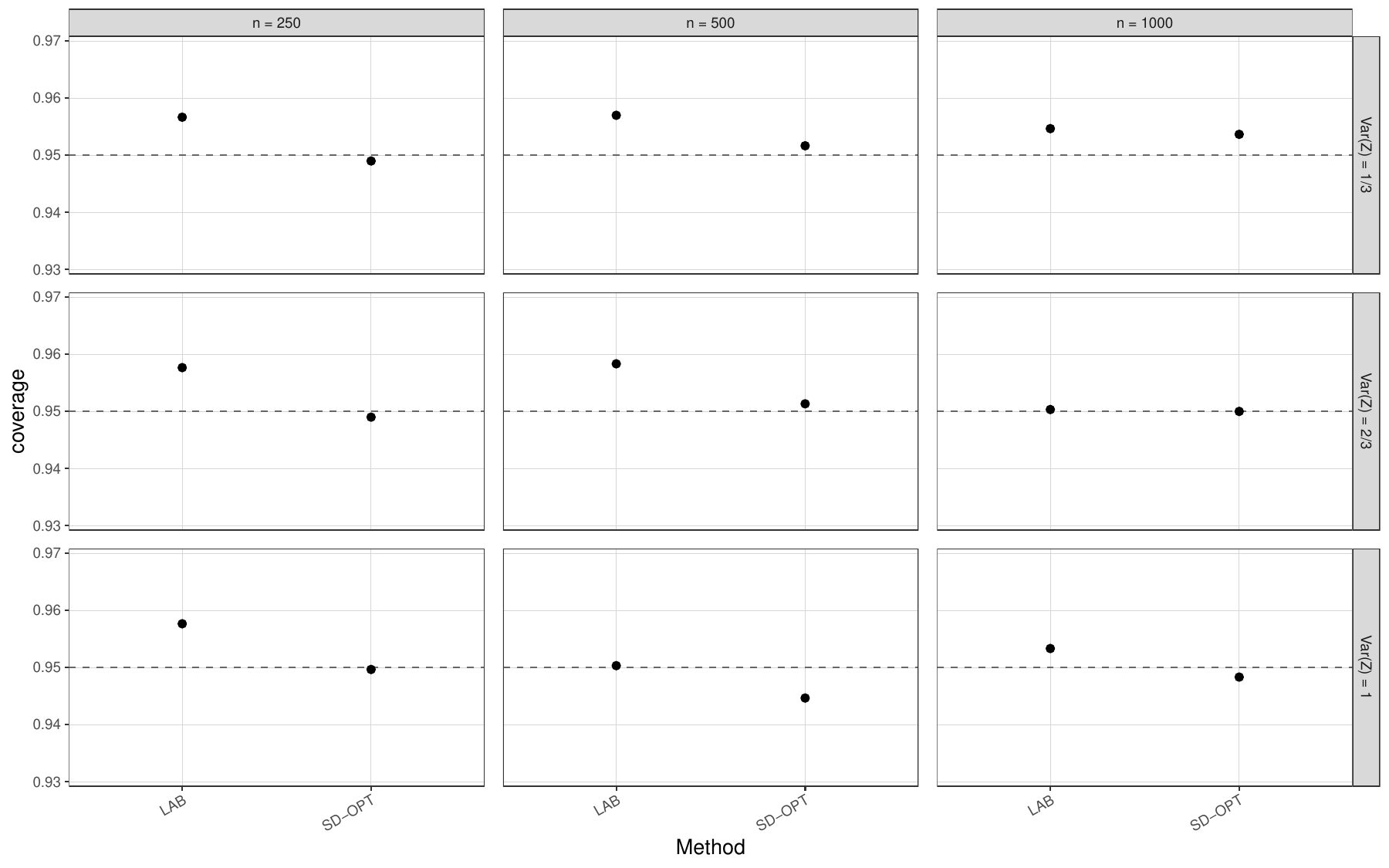}
    \caption{Empirical coverage for 95\% confidence intervals for ``LAB'' and ``SD-OPT'' at $\tau = 0.9$. Each point shows the coverage averaged over the three slope coefficients ($X_1$, $X_2$, or $X_3$; intercept excluded), across sample sizes and variance levels.}
    \label{fig:sim-coverage-calibration}
\end{figure}

\begin{figure}[t]
    \centering
    \includegraphics[width=0.95\textwidth,height=0.78\textheight,keepaspectratio]{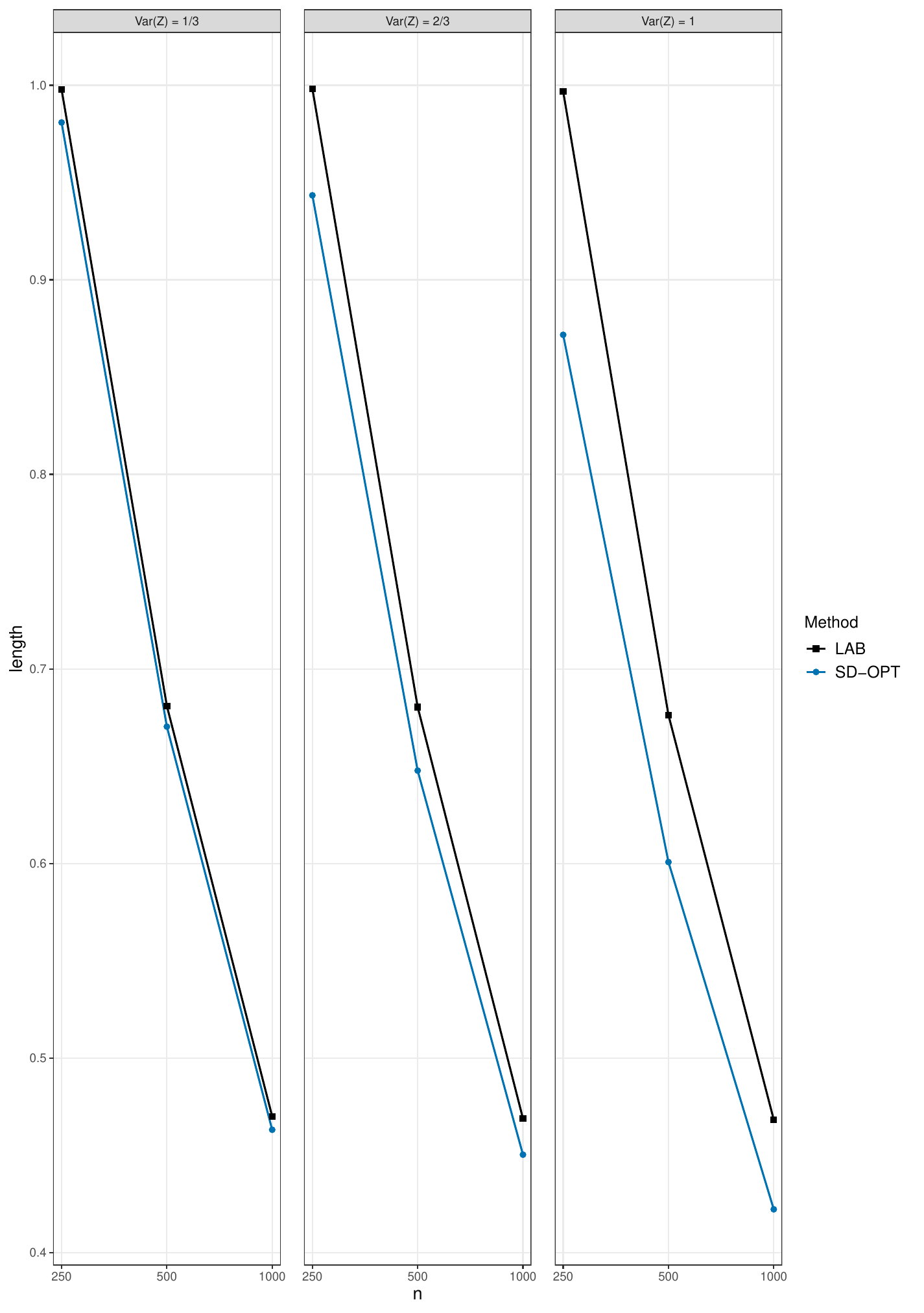}
    \caption{Confidence interval length averaged across the coefficients $X_1$, $X_2$, and $X_3$ at $\tau = 0.9$, over 1000 replications.}
    \label{fig:sim-length-calibration}
\end{figure}

\newpage

\bibliographystyle{asa} 
\bibliography{refs}

\end{document}